\RequirePackage{fix-cm}
\RequirePackage{amsthm}
\documentclass[pdflatex,sn-mathphys-num]{sn-jnl}

\usepackage[utf8]{inputenc}
\usepackage{mathtools,amssymb}
\usepackage{enumitem}
\usepackage{booktabs}
\usepackage{longtable}
\usepackage{array}
\usepackage{aliascnt}

\usepackage{graphicx}
\usepackage{xcolor}
\usepackage[capitalise,noabbrev]{cleveref}

\hypersetup{
  pdftitle={Translation-Invariant Pure States with Exotic Half-Chain Factors},
  pdfauthor={Ruizhi Liu, Jinmin Yi},
  pdfkeywords={quantum spin chain, half-chain algebra, quasifree state,
  type II-infinity factor, type III factor, entanglement entropy}
}

\makeatletter
\renewcommand\normalsize{%
   \@setfontsize\normalsize{11.5bp}{14.5bp}%
   \abovedisplayskip 12\p@ \@plus3\p@ \@minus2\p@
   \abovedisplayshortskip \z@ \@plus3\p@
   \belowdisplayshortskip 6\p@ \@plus3\p@ \@minus3\p@
   \belowdisplayskip \abovedisplayskip
   \let\@listi\@listI}
\normalsize
\renewcommand\small{%
   \@setfontsize\small{11bp}{13.5bp}%
   \abovedisplayskip 10\p@ \@plus3\p@ \@minus3\p@
   \abovedisplayshortskip \z@ \@plus2\p@
   \belowdisplayshortskip 5\p@ \@plus2\p@ \@minus2\p@
   \belowdisplayskip \abovedisplayskip}
\renewcommand\footnotesize{\@setfontsize\footnotesize{9.5bp}{11.5bp}}
\def\abstractfont{\reset@font\fontsize{10.5bp}{13bp}\selectfont
   \leftskip=24pt\rightskip=24pt\parfillskip=0pt plus 1fil}
\def\abstractheadfont{\reset@font\fontsize{10.5bp}{13bp}\bfseries\selectfont\titraggedcenter}
\def\keywordfont{\reset@font\fontsize{10bp}{12bp}\selectfont
   \leftskip=24pt\rightskip=24pt plus0.5fill}
\def\tablecaptionfont{\reset@font\fontsize{10bp}{12bp}\selectfont}
\def\tablebodyfont{\reset@font\fontsize{10bp}{12bp}\selectfont}
\def\tablecolheadfont{\reset@font\fontsize{10bp}{12bp}\selectfont\bfseries\boldmath}
\def\tablefootnotefont{\reset@font\fontsize{9.5bp}{11.5bp}\selectfont}
\def\figurecaptionfont{\reset@font\fontfamily{\rmdefault}\fontsize{10bp}{12bp}\selectfont}
\def\subsubsectionfont{\reset@font\fontsize{12bp}{14bp}\bfseries\selectfont\raggedright\boldmath}
\def\bmheadfont{\reset@font\fontfamily{\rmdefault}\fontsize{11.5bp}{14.5bp}\bfseries\selectfont\raggedright\boldmath}

\makeatother

\numberwithin{equation}{section}
\setlist[enumerate]{leftmargin=2.2em}

\theoremstyle{thmstyleone}
\newtheorem{theorem}{Theorem}[section]
\newaliascnt{proposition}{theorem}\newtheorem{proposition}[proposition]{Proposition}\aliascntresetthe{proposition}
\newaliascnt{lemma}{theorem}\newtheorem{lemma}[lemma]{Lemma}\aliascntresetthe{lemma}
\newaliascnt{corollary}{theorem}\newtheorem{corollary}[corollary]{Corollary}\aliascntresetthe{corollary}
\newaliascnt{conjecture}{theorem}\newtheorem{conjecture}[conjecture]{Conjecture}\aliascntresetthe{conjecture}
\theoremstyle{thmstylethree}
\newaliascnt{definition}{theorem}\newtheorem{definition}[definition]{Definition}\aliascntresetthe{definition}
\newaliascnt{question}{theorem}\newtheorem{question}[question]{Question}\aliascntresetthe{question}
\theoremstyle{thmstyletwo}
\newaliascnt{remark}{theorem}\newtheorem{remark}[remark]{Remark}\aliascntresetthe{remark}

\crefname{theorem}{Theorem}{Theorems}\Crefname{theorem}{Theorem}{Theorems}
\crefname{proposition}{Proposition}{Propositions}\Crefname{proposition}{Proposition}{Propositions}
\crefname{lemma}{Lemma}{Lemmas}\Crefname{lemma}{Lemma}{Lemmas}
\crefname{corollary}{Corollary}{Corollaries}\Crefname{corollary}{Corollary}{Corollaries}
\crefname{conjecture}{Conjecture}{Conjectures}\Crefname{conjecture}{Conjecture}{Conjectures}
\crefname{definition}{Definition}{Definitions}\Crefname{definition}{Definition}{Definitions}
\crefname{question}{Question}{Questions}\Crefname{question}{Question}{Questions}
\crefname{remark}{Remark}{Remarks}\Crefname{remark}{Remark}{Remarks}

\newcommand{\A}{\mathcal A}
\newcommand{\M}{\mathcal M}
\newcommand{\F}{\mathcal F}
\newcommand{\B}{\mathcal B}
\newcommand{\cH}{\mathcal H}
\newcommand{\cS}{\mathcal S}
\newcommand{\hilb}{\mathfrak h}
\newcommand{\Id}{\mathbf 1}
\newcommand{\bbR}{\mathbb R}
\newcommand{\bbC}{\mathbb C}
\newcommand{\bbN}{\mathbb N}
\newcommand{\bbT}{\mathbb T}
\newcommand{\bbD}{\mathbb D}
\newcommand{\z}{\mathbb Z}
\newcommand{\half}{[0,\infty)}
\newcommand{\hb}{\mathrm h_2}
\newcommand{\ovl}{\overline}
\newcommand{\rmd}{\mathrm{d}}
\newcommand{\Hinf}{H^\infty}
\newcommand{\cHinf}{\overline{H^\infty}}
\DeclareMathOperator{\CAR}{CAR}
\DeclareMathOperator{\Tr}{Tr}
\DeclareMathOperator{\rspan}{span}

\DeclareMathOperator{\diag}{diag}
\DeclareMathOperator{\diam}{diam}
\DeclareMathOperator{\Sp}{Sp}
\DeclareMathOperator{\Ran}{Ran}
\DeclareMathOperator{\supp}{supp}
\DeclareMathOperator{\tr}{tr}

\begin{document}

\title[Pure States with Exotic Half-Chain Factors]{Translation-Invariant Pure States with Exotic Half-Chain Factors}

\author[1,2]{\fnm{Ruizhi} \sur{Liu}}
\author[1]{\fnm{Jinmin} \sur{Yi}}

\affil[1]{\orgname{Perimeter Institute for Theoretical Physics},
\orgaddress{\street{31 Caroline Street North}, \city{Waterloo},
\state{Ontario}, \postcode{N2L 2Y5}, \country{Canada}}}

\affil[2]{\orgdiv{Department of Mathematics and Statistics},
\orgname{Dalhousie University}, \orgaddress{\street{6316 Coburg Road},
\city{Halifax}, \state{Nova Scotia}, \postcode{B3H 1Z9}, \country{Canada}}}

\abstract{We construct translation-invariant pure quasifree fermionic states on the one-dimensional lattice whose half-chain von Neumann algebras are hyperfinite factors of type $\mathrm{II}_\infty$ or $\mathrm{III}_\lambda$ for any $0<\lambda\le1$. Via the Jordan--Wigner--Araki transformation, these yield translation-invariant pure spin-chain states with the same half-chain factors, demonstrating that such factors need not be of type I or III. Each state is the unique ground state of a gapless, translation-invariant parent Hamiltonian with superpolynomially decaying interactions. For any fixed factor type, the interval entanglement entropy can exhibit widely varying growth rates, from arbitrarily slow divergence to nearly power-law growth, showing that entropy scaling does not determine the factor type. In contrast, we prove that gauge-invariant quadratic Hamiltonians with finitely many flavours and exponentially decaying hopping exhibit a dichotomy: their ground states either satisfy an area law with a type I half-chain factor, or have logarithmic entropy growth with a type $\mathrm{III}_1$ factor. The counterexample was discovered with AI assistance.}

\keywords{quantum spin chain, half-chain algebra, type $\mathrm{II}_\infty$
factor, type $\mathrm{III}$ factor, quasifree state, Jordan--Wigner--Araki
transformation, entanglement entropy}

\pacs[MSC Classification]{46L30, 82B10, 46L36, 46L60, 47B35}

\maketitle

\section{Introduction}\label{sec:intro}

Quantum entanglement is one of the defining features of quantum theory.
Over the past decades it has become a central concept in quantum
information theory, condensed matter physics and high-energy physics,
where it governs quantum information processing, the characterization of
many-body phases and critical phenomena, and the study of quantum field
theories and holography
\cite{CalabreseCardy2004,Horodecki2009,Amico2008,CalabreseCardy2009,
RyuTakayanagi2006,CalabreseLefevre2008}.

For systems with infinitely many degrees of freedom, such as quantum
fields or lattice systems in the thermodynamic limit, there is no global
tensor-product Hilbert space.  The natural replacement is the
$C^*$-algebra of quasi-local observables together with the von Neumann
algebras generated in its representations
\cite{haag2012local,bratteli2013operator1,bratteli2013operator2,
Naaijkens_2017,Yngvason2005}.  This operator-algebraic viewpoint has long
been used to study entanglement and subsystem structure in infinite
quantum systems \cite{KEYL_2006,matsui2011boundedness}, and it is also the
natural setting for sharp phase transitions and spontaneous symmetry
breaking \cite{Strocchi:2008gsa,Naaijkens_2017}.  More recently, the type
classification of von Neumann algebras has been connected to operational
properties of entanglement such as embezzlement
\cite{vanLuijk2024PRL,vanLuijk2024Embezzlement,vanLuijk2025Pure,
vanluijk2025largescalestructureentanglementquantum}.

Consider the infinite spin chain
\begin{equation}\label{eq:spin-chain}
    \A=\bigotimes_{j\in\z} M_d,
    \qquad
    \A_R:=\A_{\half},
\end{equation}
and a pure state $\omega$ on $\A$.  Although $\omega$ is pure, its
restriction to the right half-chain $\A_R$ is in general mixed.  In the
GNS representation of $\omega$, the von Neumann algebra
\begin{equation}\label{eq:halfchain-alg}
    \M_R(\omega):=\pi_\omega(\A_R)''
\end{equation}
encodes the operator-algebraic structure of the entanglement across the
cut \cite{KEYL_2006}.  The type I case is equivalent to the split property
and recovers the familiar tensor-product picture of the two half-chains,
whereas half-chain algebras that are not of type I describe genuinely
infinite entanglement and are tied to Haag duality and to entanglement
distillation \cite{KEYL_2006,Matsui2001,KEYL_2008}.  Ground states of
gapped finite-range Hamiltonians give type I half-chain algebras
\cite{Hastings2007,Matsui2010,matsui2011boundedness}, while standard
critical models and continuum field theories give factors of type
$\mathrm{III}_1$ \cite{vanLuijk2025,Araki1964,haag2012local}.  Type
$\mathrm{II}_1$ is excluded for translation-invariant pure states
\cite{KEYL_2006,KEYL_2008}.  This is the background of the following
question.

\begin{question}[{\cite[Sec.~5.1]{KEYL_2006}}]
\label{ques:half_chain_type}
Must the half-chain algebra $\M_R(\omega)$ of a translation-invariant pure
state $\omega$ be of type I or of type $\mathrm{III}_{1}$?
\end{question}

This question was raised in \cite{KEYL_2006}.  Subsequently, Mohari
\cite{Mohari2014} claimed to prove that the half-chain algebra of any
translation-invariant pure state must be of type I or of type III, which
would in particular exclude type $\mathrm{II}_\infty$.  However, the
argument intended to rule out type $\mathrm{II}_\infty$ contains a gap,
which we discuss in \cref{sec:discussion}.  The possibility of a type
$\mathrm{II}_\infty$ half-chain algebra has therefore remained open, and it
was recently restated as Question~2 in
\cite{vanluijk2025largescalestructureentanglementquantum}.

On the other hand, entanglement is more commonly measured by the
entanglement entropy.  For a spin chain, an area law, that is, a uniform
bound on the entropy of all finite intervals, forces the half-chain
algebra to be of type I \cite{matsui2011boundedness}.  At the opposite end,
every critical translation-invariant pure quasifree state of spinless
fermions has an entropy that grows at least logarithmically with the
length of the interval \cite{Farkas_2005}, and such states have type
$\mathrm{III}_1$ half-chain algebras (\cref{prop:one-species}).  In both
regimes the growth rate of the entropy and the type of the half-chain
algebra move together, which suggests two further questions.

\begin{question}\label{ques:entropy}
Does the growth rate of the entropy alone determine the type of the
half-chain algebra of a translation-invariant pure state?
\end{question}

\begin{question}\label{ques:type_entropy}
If the half-chain algebra of a translation-invariant pure state is of type
$\mathrm{III}_1$, must the entropy of an interval grow at least
logarithmically with its length?
\end{question}

We answer all three questions in the \textbf{negative}.  More precisely, we prove
the following.

\begin{theorem}\label{thm:main}
There is a family $\mathfrak S$ of translation-invariant pure states on
the spin chain $\A=\bigotimes_{j\in\z}M_{16}$ such that:
\begin{enumerate}[label=\textup{(\roman*)}]
    \item for every $0<\lambda\le1$, $\mathfrak S$ contains a state whose
    half-chain algebra is the hyperfinite factor of type
    $\mathrm{III}_\lambda$, and it contains a state whose half-chain
    algebra is the hyperfinite factor of type $\mathrm{II}_\infty$;
    \item every state in $\mathfrak S$ is the unique ground state of a
    translation-invariant, gapless parent Hamiltonian with
    superpolynomially decaying interactions;
    \item the entanglement entropy can be tuned independently of the type:
    for each of the types in \textup{(i)}, $\mathfrak S$ contains states
    whose entropy grows like $\log\ell$, like $\sqrt{\log\ell}$ or
    $\log\log\ell$, essentially like a power of $\ell$, or more slowly
    than any prescribed divergent function of the interval length $\ell$.
\end{enumerate}
\end{theorem}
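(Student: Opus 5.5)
The plan is to build every state in $\mathfrak S$ as a translation-invariant pure quasifree state of $N=4$ complex fermions per site. Since $2^4=16$, the Jordan--Wigner--Araki transformation then lands on $\bigotimes_{j\in\z}M_{16}$. Such a state is determined by a measurable projection-valued symbol $P:\bbT\to M_N(\bbC)$, which is the Fourier multiplier of the two-point function $\langle c_i^* c_j\rangle$. Restricting the state to the right half-chain gives a quasifree state of $\CAR(\ell^2(\bbN)\otimes\bbC^N)$. Its covariance is the block Toeplitz operator $T_P=\Pi_+ P\Pi_+$, where $\Pi_+$ is the projection onto $H^2\otimes\bbC^N$. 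I will use two classical facts. First, by Powers--Størmer/Araki, the quasifree von Neumann algebra is a factor, and its type is read off from the spectrum of $T_P$. Second, on the part of $T_P$ with pure point spectrum $\{\mu_n\}$, the algebra is the Araki--Woods ITPFI factor with ratios $\mu_n/(1-\mu_n)$. Since $P^2=P$, we have $T_P(1-T_P)=H^*H$ with $H=\Pi_-P\Pi_+$ a block Hankel operator. The problem therefore reduces to producing Hankel operators with prescribed singular values.

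\textbf{Step 1 (realising a prescribed spectrum).} I will take $P$ to be a rotated pair of flavour projections, $P(k)=U(k)\,\diag(1,0)\,U(k)^*$, doubled so as to use four flavours. Here $U(k)$ is a smooth loop in the unitary group, chosen so that $H$ is unitarily equivalent to a compact positive Hankel operator with prescribed singular values $s_n$. The existence of such a symbol will come from the Megretskii--Peller--Treil realisation theorem. I then have to upgrade the resulting symbol to a smooth, non-analytic one. This is the step I expect to be the main obstacle, because smoothness and exact spectral control pull against each other. My first attempt is a direct construction: sum bumps of disjoint, shrinking frequency support, each contributing an almost-decoupled $2\times2$ block with eigenvalue $\mu_n$. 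Stability of the ITPFI type under $\ell^2$-small perturbations of the ratios should then let me pass from approximate eigenvalues to exact type.

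\textbf{Step 2 (types).} I choose the sequence $\mu_n$ according to the target type. For $\mathrm{III}_\lambda$, infinitely many $\mu_n=\lambda/(1+\lambda)$, which makes the ratios equal to $\lambda$. For $\mathrm{III}_1$, two such families with $\log\lambda_1/\log\lambda_2\notin\mathbb Q$. For $\mathrm{II}_\infty$, infinitely many $\mu_n=1/2$, which gives $\mathrm{II}_1$, tensored with a type $\mathrm I_\infty$ part coming from the eigenvalues $0,1$. Any remaining eigenvalues must be chosen with $\sum\min(\mu_n,1-\mu_n)$ finite, so that they do not alter the asymptotic ratio set. Hyperfiniteness is automatic. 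For the transfer to spin chains I invoke the earlier Jordan--Wigner--Araki comparison: the spin half-chain algebra differs from the fermionic one only by the $\z_2$ grading. An even/crossed-product argument then preserves hyperfiniteness and the $\mathrm{II}_\infty$ and $\mathrm{III}_\lambda$ types, for example via invariance of Connes' $S$-invariant under a finite-index extension.

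\textbf{Step 3 (Hamiltonian and entropy).} For the parent Hamiltonian I take $H=\sum_k c_k^* h(k)c_k$ with $h(k)=f(k)\bigl(1-2P(k)\bigr)$. The function $f\ge0$ is smooth, vanishes at the points where $P$ fails to be analytic, and is nonzero almost everywhere. The hopping amplitudes are then the Fourier coefficients of a $C^\infty$ non-analytic function, so they decay superpolynomially but not exponentially. The Hamiltonian is gapless because $f$ has zeros. The ground state is unique because $h(k)$ is invertible for almost every $k$, so the filled Fermi sea $P$ is forced. Finally, the interval entropy is $\sum\hb(\nu)$ over the eigenvalues $\nu$ of the compression of $P$ to $[0,\ell)$. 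I will estimate it by counting the eigenvalues that are not exponentially close to $\{0,1\}$. That count is governed by how many bumps have frequency scale at least $1/\ell$, plus the finitely many Fermi-point jumps, each of which contributes $\log\ell$. Tuning the density of bump scales gives the growth rates in (iii). A sparse density gives arbitrarily slow divergence, with no jumps present. Logarithmic or $\log\log$ densities give $\sqrt{\log\ell}$ or $\log\log\ell$. Dense scales give near-power growth. None of these choices changes the tail ratios fixed in Step 2, so the type is unaffected.
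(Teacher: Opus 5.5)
Your Steps~1--2 contain a genuine gap: as written, the construction can only produce type~I. If $U$ is a smooth loop, then $P$ is smooth on $\bbT$, its Fourier coefficients decay rapidly, and $\|H\|_1\le\sum_{r\ge1}r\|\hat P_{-r}\|_1<\infty$. Hence $T_P-T_P^2=H^*H$ is trace class, and the half-chain factor is of type~I by the Powers--St{\o}rmer criterion (\cref{lem:halfchain}). The symbol of the \emph{state} must satisfy $\sum_{r\ge1}r\|\hat P_{-r}\|_2^2=\infty$, so it has to be singular. Only the Hamiltonian may be smooth, and that needs $f$ flat at the singular point together with at most polynomial blow-up of the derivatives of $P$ there.

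Steps~1 and~2 are also inconsistent with each other. Infinitely many eigenvalues $\mu=\lambda/(1+\lambda)$ (or $\mu=\tfrac12$) mean that $H^*H$ has the eigenvalue $\mu(1-\mu)>0$ with infinite multiplicity, so $H$ is not compact. Moreover, Megretskii--Peller--Treil gives a Hankel operator with \emph{some} symbol, not the Hankel operator $(\Id-\Pi)P\Pi$ of a projection-valued symbol, which is the real constraint.

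The bump fallback does not fix this. As $w\to0$, the Hankel operator of a smooth bump of width $w$ is asymptotically unitarily equivalent to a fixed Hankel integral operator on $L^2(\bbR_+)$, whose kernel is the Fourier transform of the profile. That kernel is entire and decays in both directions, so it is not a finite exponential sum, and the operator has infinite rank. Each bump therefore contributes a whole sequence of eigenvalues, not one $2\times2$ block. If you repeat the profile so that the leading eigenvalue recurs, the subleading eigenvalues also recur infinitely often. They are then not summable and enter the asymptotic ratio set. The Araki--Woods stability you invoke does not apply, and generically you get type $\mathrm{III}_1$, losing $\mathrm{II}_\infty$ and $\mathrm{III}_\lambda$.

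The missing idea is exact spectral control through Hankel pieces of rank one, i.e.\ Cauchy kernels, i.e.\ Blaschke factors. The paper takes the off-diagonal entry of $C_\kappa$ to be an infinite Blaschke product $u$. Since $u$ is inner, $T_u$ is an isometry with $\Id-T_uT_u^*=P_{K_u}$. So $C_{\kappa,R}$ is \emph{exactly} a projection on $uH^2\oplus H^2$ plus $(1+\kappa)^{-1}\Id$ on the infinite-dimensional model space $K_u$. Then \cref{prop:constant-sectors} gives the type with no perturbation argument. The zeros accumulate only at $z=1$, where $u$ is discontinuous. There \cref{lem:blaschke-derivatives} supplies the polynomial derivative bounds that let $q=\exp(-|1-z|^{-2})$ smooth the Hamiltonian.

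Your spin-chain step has further gaps.
\begin{itemize}
\item You never show that the dual spin state is pure. The paper derives this in \cref{lem:jw-halfchain} from twisted duality and the non-implementability of the half-chain parity, which holds because the fermionic factor is not of type~I.
\item Your finite-index argument is invalid as stated. In the Powers factor $R_\lambda$, the fixed points of the product grading $\bigotimes_n\mathrm{Ad}\,\diag(1,-1)$ form an index-$2$ subfactor of type $\mathrm{III}_{\lambda^2}$. The argument is also unnecessary, since $\M^{\rm s}_R\cong\M^{\rm f}_R$ once purity is known.
\item Uniqueness of the spin ground state does not follow from fermionic uniqueness, because the spin interaction sees only even observables. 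You need \cref{prop:ground-states}(ii), which requires the spectrum of $h$ to accumulate at $0$ from both sides. You also need the clustering argument for odd test operators (\cref{lem:odd-tests}).
\end{itemize}
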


In particular, \cref{thm:main}\textup{(i)} produces a translation-invariant
pure spin-chain state with a type $\mathrm{II}_\infty$ half-chain algebra.
This is the announced counterexample to the type I / type III dichotomy,
and it answers \cref{ques:half_chain_type}.  Statement (iii) answers
\cref{ques:entropy,ques:type_entropy}: states of different types can share
exactly the same entropy growth, and type $\mathrm{III}_1$ states can have
an entropy that diverges arbitrarily slowly.

Our construction has three tunable parameters;
\cref{tab:entropy-type-summary} lists representative choices.
\begin{enumerate}[label=\textup{(\arabic*)}]
    \item The number $\nu$ of fermion species.\footnote{We use the words
    species, flavour and fermion spin interchangeably.}
    \item The zeros $0\le a_n<1$ of a Blaschke product, which acts as an
    infinite sequence of ``soft shifts'' (\cref{sec:blaschke}).  The
    numbers $\delta_n:=1-a_n$ set the length scales of the correlations
    and hence the entropy growth, but \emph{not} the type of the
    half-chain algebra.
    \item The mixing ratios $\boldsymbol\kappa=(\kappa_1,\dots,\kappa_{\nu/2})$,
    which determine the type of the half-chain algebra but not the growth
    rate of the entropy.
\end{enumerate}
In particular, by choosing the Blaschke zeros appropriately one obtains an
arbitrarily slow divergence of the entropy while keeping the type of the
half-chain algebra unchanged (\cref{prop:slow}).

\begin{table}[htbp]
\centering
\caption{Representative choices of Blaschke zeros and mixing ratios, with
the resulting half-chain type and entropy growth.  The type depends only
on $\boldsymbol\kappa$, the growth rate only on $\{\delta_n\}$; here
$0<\kappa<1$ and $\alpha>1$.}
\label{tab:entropy-type-summary}
\begin{tabular}{@{}llll@{}}
\toprule
$\delta_{n}=1-a_n$ & Mixing ratios $\boldsymbol\kappa$ & Half-chain type
& Entropy growth \\
\midrule
$2^{-n}$ & $(1)$ & $\mathrm{II}_\infty$ & $\log \ell$ \\
$2^{-n}$ & $(\kappa)$ & $\mathrm{III}_\kappa$ & $\log \ell$ \\
$2^{-n}$ & $(e^{-1},e^{-\sqrt2})$ & $\mathrm{III}_1$ & $\log \ell$
\\[2pt]
$2^{-n^2}$ & $(1)$, $(\kappa)$, $(e^{-1},e^{-\sqrt2})$
& $\mathrm{II}_\infty$, $\mathrm{III}_\kappa$, $\mathrm{III}_1$
& $\sqrt{\log \ell}$ \\[2pt]
$2^{-2^n}$ & $(1)$, $(\kappa)$, $(e^{-1},e^{-\sqrt2})$
& $\mathrm{II}_\infty$, $\mathrm{III}_\kappa$, $\mathrm{III}_1$
& $\log\log \ell$ \\[2pt]
$n^{-\alpha}$ & $(1)$, $(\kappa)$, $(e^{-1},e^{-\sqrt2})$
& $\mathrm{II}_\infty$, $\mathrm{III}_\kappa$, $\mathrm{III}_1$
& $\ell^{1/(\alpha+1)}$, up to a factor $\log\ell$ \\
\bottomrule
\end{tabular}
\end{table}

It is generally believed that exponentially local and superpolynomially
local Hamiltonians behave similarly in gapped phases; see, e.g.,
\cite{Kapustin2022Noether}.  Our examples show that their behaviour can
differ drastically in the gapless setting.  For free fermions this
distinction is already visible in momentum space: superpolynomially local
Hamiltonians give rise to smooth Bloch Hamiltonians, whereas exponentially
local ones give rise to analytic Bloch Hamiltonians.  A smooth function
may vanish to infinite order at an isolated point, while a nonzero
analytic function cannot, and this is the mechanism behind our
counterexamples.  Conversely, analyticity forces a sharp dichotomy
(\cref{thm:analytic-dichotomy}).  For a free-fermion Hamiltonian with
finitely many flavours, exponentially decaying hopping and no flat band
at the Fermi energy, either some band crosses the Fermi energy, and then
the half-chain algebra of the ground state is of type $\mathrm{III}_1$ and
its entropy grows logarithmically, or no band crosses, and then the
half-chain algebra is of type I and the ground state obeys an area law.
Frustration-free Hamiltonians of this kind always fall into the second
case.

Logarithmic entanglement growth is common in one-dimensional critical
systems and has been established for broad classes of translation-invariant
quasifree models \cite{Vidal2003Entanglement,Jin2004Quantum,
Keating2004Random,Farkas_2005,Wolf2006Violation,GioevKlich2006,
Its2008Entanglement}.  In particular, Farkas and Zimborás proved a
logarithmic lower bound for every nontrivial pure gauge-invariant
translation-invariant quasifree state of one species \cite{Farkas_2005}.
By contrast, sublogarithmic growth can occur for fermionic chains with
sufficiently long-range couplings \cite{Ares2019Sublogarithmic}.
\Cref{thm:analytic-dichotomy} shows that such behaviour is excluded for
ground states of exponentially local free-fermion Hamiltonians with
finitely many flavours.

We expect this dichotomy to persist beyond free fermions; it seems to be
part of the folklore in the physics community
\cite{CalabreseCardy2004,Jin2004Quantum,CalabreseCardy2009}.  The
following conjecture is closely related to
\cref{ques:half_chain_type,ques:entropy,ques:type_entropy} and remains
open.

\begin{conjecture}[Local-ground-state dichotomy]
\label{conj:short-range-dichotomy}
Let $\Phi$ be a translation-invariant interaction on a one-dimensional spin
chain whose terms decay exponentially with their diameter, and suppose
that $\Phi$ has a unique ground state $\omega$.  Then exactly one of the
following holds: either $\M_R(\omega)$ is of type I and the entropy obeys
an area law, or $\M_R(\omega)$ is of type $\mathrm{III}_1$ and the entropy
of an interval grows at least logarithmically with its length.  If
moreover $\Phi$ is frustration free,\footnote{That is, $\Phi(X)\ge0$ and
$\pi_\omega(\Phi(X))\Omega_\omega=0$ for every $X$, where
$(\pi_\omega,\cH_\omega,\Omega_\omega)$ is the GNS representation of
$\omega$.} the first alternative holds.
\end{conjecture}

The first alternative can occur even when the Hamiltonian is gapless, for
instance at a quadratic band touching.  The frustration-free part of the
conjecture is motivated by recent work
\cite{MasaokaSoejimaWatanabe2024Quadratic,SengokuPoWatanabe2025Quasilocal,
MasaokaSoejimaWatanabe2025Ising,MasaokaSoejimaWatanabe2025Dynamical,
OnoMasaokaWatanabePo2026Free,MasaokaOnoPoWatanabe2026Fermionic} arguing
that a conformal field theory cannot emerge from a frustration-free local
Hamiltonian.  Since lattice models with a conformal scaling limit are
expected to have half-chain algebras of type $\mathrm{III}_1$
\cite{vanLuijk2025,vanluijk2025largescalestructureentanglementquantum},
the conjecture is consistent with that broader picture.
\Cref{thm:analytic-dichotomy} proves it for free fermions without a flat
band at the Fermi energy.

\medskip\noindent\textbf{Organization.}
Our states are translation-invariant pure quasifree states of several
species of fermions, so \cref{sec:prelim} reviews quasifree states,
together with the Toeplitz and Hankel operators that describe their
half-chain restrictions.  \Cref{sec:constructing} first explains why one
species of fermions can only give type I or type $\mathrm{III}_1$, and
then constructs the states of type $\mathrm{II}_\infty$ and
$\mathrm{III}_\lambda$.  \Cref{sec:parent} constructs parent Hamiltonians
and proves the dichotomy for exponentially local free fermions.
\Cref{sec:entropy} estimates the entropy and shows that the mixing ratios
do not affect its growth rate.  \Cref{sec:JWA} transfers everything to
the spin chain via the Jordan--Wigner--Araki transformation and proves
\cref{thm:main}, and \cref{sec:discussion} discusses the results.  The
construction uses classical analysis, namely Hardy spaces, Blaschke
products and Toeplitz operators, which we review in \cref{app:analysis};
facts on quasifree states are collected in \cref{app:quasifree}.

\section{Preliminaries}\label{sec:prelim}

\Cref{tab:notation} lists the notation used throughout the paper, its
meaning, and the place where it first appears. For readers who are already familiar with the background, this section can be skipped safely.

{\footnotesize
\setlength{\tabcolsep}{4pt}
\renewcommand{\arraystretch}{1.15}
\begin{longtable}{@{}>{\raggedright}p{0.22\textwidth}>{\raggedright}p{0.5\textwidth}
>{\raggedright\arraybackslash}p{0.2\textwidth}@{}}
\caption{List of notation.}\label{tab:notation}\\
\toprule
Notation & Meaning & First appearance\\
\midrule
\endfirsthead
\multicolumn{3}{@{}l}{\emph{\Cref{tab:notation}, continued}}\\
\toprule
Notation & Meaning & First appearance\\
\midrule
\endhead
\midrule
\multicolumn{3}{r@{}}{\emph{continued on the next page}}\\
\endfoot
\bottomrule
\endlastfoot
\multicolumn{3}{@{}l}{\emph{General conventions}}\\[2pt]
$\Id$ & identity operator, identity element & \cref{sec:CAR}\\
$\langle\cdot,\cdot\rangle$ & inner product, linear in the second variable
& \eqref{eq:CAR_alg}\\
$\|\cdot\|$ & operator norm & \cref{sec:toeplitz}\\
$\|\cdot\|_p$ & Schatten $p$-norm; $\|\cdot\|_1$ trace norm,
$\|\cdot\|_2$ Hilbert--Schmidt norm & \cref{lem:toeplitz-hankel}\\
$\cS_p$ & Schatten class of operators with $\|\cdot\|_p<\infty$
& \cref{lem:halfchain}\\
$A\asymp B$ & $cB\le A\le CB$ with constants $c,C>0$ independent of the
variables in question & \cref{prop:radial}\\
$\hb(t)$ & binary entropy $-t\log t-(1-t)\log(1-t)$, natural logarithm
& \cref{lem:entropy-bounds}\\
$\lambda$, $\mathrm{III}_\lambda$ & label of a factor of type
$\mathrm{III}_\lambda$ & \cref{thm:main}\\
$\ell$ & length of an interval & \cref{thm:main}\\
$\mathfrak S$ & family of spin-chain states of the main theorem
& \cref{thm:main}\\[4pt]
\multicolumn{3}{@{}l}{\emph{Spin chains and operator algebras}}\\[2pt]
$\A$, $\A_R$ & quasi-local algebra $\bigotimes_{j\in\z}M_d$ of the spin
chain; algebra of the right half-chain $[0,\infty)$
& \eqref{eq:spin-chain}\\
$(\pi_\omega,\cH_\omega,\Omega_\omega)$ & GNS representation of a state
$\omega$ & \eqref{eq:halfchain-alg}\\
$\M_R(\omega)$ & half-chain algebra $\pi_\omega(\A_R)''$
& \eqref{eq:halfchain-alg}\\
$\Phi$ & translation-invariant interaction & \cref{conj:short-range-dichotomy}\\
$S_\omega(\Lambda)$, $S_\omega(\ell)$ & entanglement entropy of a finite set
$\Lambda$, of the interval $\{0,\dots,\ell-1\}$ & \cref{sec:CAR}\\[4pt]
\multicolumn{3}{@{}l}{\emph{Fermions and quasifree states}}\\[2pt]
$\nu$ & number of fermion species per site & \cref{sec:intro}\\
$\hilb$ & one-particle space $\ell^2(\z)\otimes\bbC^\nu$
& \eqref{eq:one-particle}\\
$\CAR(\hilb)$ & CAR algebra over $\hilb$ & \eqref{eq:CAR_alg}\\
$c(f)$, $c^*(f)$, $c_{x,a}$ & annihilation and creation operators;
annihilation operator of species $a$ at site $x$ & \eqref{eq:CAR_alg}\\
$\Theta$ & parity automorphism $c(f)\mapsto-c(f)$; even and odd elements
& \cref{sec:CAR}\\
$\A_{\rm f}(\Lambda)$ & CAR algebra of the modes in $\Lambda\subset\z$
& \cref{sec:CAR}\\
$\omega_C$, $C$ & gauge-invariant quasifree state with covariance $C$
& \eqref{eq:quasifree}\\
$\hilb_R$, $C_R$ & one-particle space of the right half-chain;
right-half covariance $\Pi C\Pi$ & \eqref{eq:CR-def}\\[4pt]
\multicolumn{3}{@{}l}{\emph{Momentum space and Hardy space}}\\[2pt]
$\bbT$, $m$, $z=e^{ik}$ & unit circle (Brillouin zone) with normalized
Haar measure; crystal momentum $k$ & \cref{sec:toeplitz}\\
$\F$ & Fourier transform $\delta_x\mapsto z^x$ & \cref{sec:toeplitz}\\
$M_F$, $\hat F_r$ & multiplication operator by $F$; Fourier coefficients
of $F$ & \eqref{eq:matrix-elements}\\
$H^2$, $\Pi$ & Hardy space; projection onto $H^2\otimes\bbC^n$
& \cref{sec:toeplitz}\\
$T_F$, $H_F$ & Toeplitz and Hankel operators with symbol $F$
& \eqref{eq:toeplitz-hankel-def}\\
$I_\ell$, $Q_\ell$ & interval $\{0,\dots,\ell-1\}$; projection onto its
modes & \cref{sec:toeplitz}\\
$\Hinf$, $\cHinf$ & bounded analytic functions on the unit disk $\bbD$;
their complex conjugates & \cref{app:fourier}\\[4pt]
\multicolumn{3}{@{}l}{\emph{The construction}}\\[2pt]
$a_n$, $\delta_n$ & zeros of the Blaschke product; $\delta_n=1-a_n$
& \cref{sec:intro}\\
$\boldsymbol\kappa$ & vector of mixing ratios $(\kappa_1,\dots,\kappa_m)$
& \cref{sec:intro}\\
$b_a$ & Blaschke factor $(a-z)/(1-az)$ & \eqref{eq:blaschke_factor}\\
$u$ & infinite Blaschke product $\prod_nb_{a_n}$ & \eqref{eq:standing}\\
$K_\theta$ & model space $H^2\ominus\theta H^2$ of an inner function
$\theta$ & \eqref{eq:model-space}\\
$\kappa$, $C_\kappa$ & mixing ratio; covariance of one block
& \eqref{eq:C_kappa}\\
$C_{\boldsymbol\kappa}$, $\omega_{\boldsymbol\kappa}$ & covariance of $m$
blocks; corresponding pure quasifree state & \eqref{eq:2m_flavor}\\
$J_{\boldsymbol\kappa}$, $G_{\boldsymbol\kappa}$ & indices with
$0<\kappa_j<1$; closed group generated by these $\kappa_j$
& \cref{prop:modular-family-types}\\[4pt]
\multicolumn{3}{@{}l}{\emph{Hamiltonians}}\\[2pt]
$h$, $h(z)$, $h_r$ & one-particle Hamiltonian; Bloch Hamiltonian;
hopping amplitude at distance $r$ & \cref{sec:ground-states}\\
$\delta_h$, $\A_{\rm loc}$ & generator of the quasifree dynamics; local
elements of the CAR algebra & \eqref{eq:delta-h}\\
$m_\kappa$, $t_\kappa$ & mass and inter-flavour hopping of the flat-band
Hamiltonian & \eqref{eq:flat_parent}\\
$q$ & mollifier $\exp(-|1-z|^{-2})$ & \eqref{eq:q-def}\\
$\varepsilon_j$, $\varphi_j$ & analytic bands and eigenvectors
& \cref{lem:rellich}\\
$P_{\rm F}$, $\omega_{\rm F}$ & Fermi projection
$\chi_{(-\infty,0)}(h)$; its quasifree state
& \cref{thm:analytic-dichotomy}\\[4pt]
\multicolumn{3}{@{}l}{\emph{Entropy estimates}}\\[2pt]
$S_\kappa(\ell)$ & entropy of one block $\omega_\kappa$ on $I_\ell$
& \cref{sec:entropy}\\
$\gamma_\kappa$, $F_\kappa$ & $4\kappa/(1+\kappa)^2$; entropy of one pair of
modes & \cref{prop:interval-entropy}\\
$T_\ell$, $D_\ell$ & compressed Toeplitz matrix $Q_\ell T_uQ_\ell$;
$\Tr(\Id-T_\ell^*T_\ell)$ & \cref{sec:interval-entropy}\\
$\epsilon_\ell$ & $1-e^{-1/(2\ell)}$ & \cref{prop:radial}\\
$\mathcal N(L)$ & number of $n$ with $\delta_n\ge1/L$ & \cref{sec:lacunary}\\[4pt]
\multicolumn{3}{@{}l}{\emph{Jordan--Wigner--Araki transformation}}\\[2pt]
$\A_{\rm s}$, $\Gamma_x$, $\Theta_{\rm s}$ & spin chain
$\bigotimes_\z M_{2^\nu}$; on-site parity; its grading & \cref{sec:JWA}\\
$\vartheta_I$, $\vartheta$ & local Jordan--Wigner maps; isomorphism of
even parts & \cref{lem:jw-local}\\
$\omega_{\rm s}$ & spin state dual to an even fermionic state
& \cref{def:jw-state}\\
$\M^{\rm f}_{R/L}$, $\M^{\rm s}_{R/L}$ & half-chain algebras of
$\omega_{\rm f}$ and $\omega_{\rm s}$ & \cref{lem:jw-halfchain}\\
$\Phi_{\rm f}$, $\Phi_{\rm s}$, $\delta_{\rm s}$ & fermionic and spin
interactions; spin derivation & \eqref{eq:spin-interaction}\\
$\Sp_\tau(B)$ & Arveson spectrum of $B$ & \cref{app:groundstates}\\
\end{longtable}
}

\subsection{Lattice fermions and quasifree states}\label{sec:CAR}

For $\nu$ species of fermions on $\z$ the one-particle space is
\begin{equation}\label{eq:one-particle}
    \hilb=\ell^2(\z)\otimes\bbC^\nu ,
\end{equation}
with orthonormal basis $\delta_x\otimes e_a$, where $\delta_x$ is a
particle localized at the site $x$ and $e_a$ labels the species (also
called flavour or spin).  The algebra of observables is the CAR algebra
$\CAR(\hilb)$, generated by the annihilation operators $c(f)$, antilinear
in $f$, and the creation operators $c^*(f)=c(f)^*$, subject to
\begin{equation}\label{eq:CAR_alg}
    \{c(f),c(g)\}=0,
    \qquad
    \{c(f),c^*(g)\}=\langle f,g\rangle\Id .
\end{equation}
We write $c_{x,a}:=c(\delta_x\otimes e_a)$, and $\Theta$ for the parity
automorphism $c(f)\mapsto-c(f)$; elements fixed by $\Theta$ are even, those
changed in sign are odd.  For $\Lambda\subset\z$ let
$\A_{\rm f}(\Lambda)=\CAR(\ell^2(\Lambda)\otimes\bbC^\nu)$; for finite
$\Lambda$ this is a full matrix algebra of dimension $2^{2\nu|\Lambda|}$.

For every positive contraction $0\le C\le\Id$ on $\hilb$ there is a unique
gauge-invariant quasifree state $\omega_C$, determined by
\begin{equation}\label{eq:quasifree}
    \omega_C\bigl(c^*(f_n)\cdots c^*(f_1)c(g_1)\cdots c(g_k)\bigr)
    =\delta_{nk}\det\bigl[\langle g_i,Cf_j\rangle\bigr]_{i,j=1}^{n}
\end{equation}
\cite{PowersStormer1970,Araki1970}.  In particular
$\omega_C(c^*(f)c(g))=\langle g,Cf\rangle$; the operator $C$ is the
covariance, or one-particle correlation matrix.  The state $\omega_C$ is
invariant under the gauge automorphisms $c(f)\mapsto c(e^{i\phi}f)$, hence
even, and it is pure if and only if $C$ is a projection.  Its restriction
to $\CAR(\mathfrak k)$, for a closed subspace $\mathfrak k\subset\hilb$ with
projection $P_{\mathfrak k}$, is the quasifree state with covariance
$P_{\mathfrak k}CP_{\mathfrak k}|_{\mathfrak k}$.  These and further facts
are collected in \cref{app:quasifree}.  For a finite set $\Lambda$ we
write $S_\omega(\Lambda)$ for the von Neumann entropy of the density
matrix of $\omega|_{\A_{\rm f}(\Lambda)}$, and
$S_\omega(\ell):=S_\omega(\{0,\dots,\ell-1\})$.

\subsection{Momentum space, Toeplitz and Hankel operators}
\label{sec:toeplitz}

The Fourier transform $\F:\ell^2(\z)\to L^2(\bbT)$, $\delta_x\mapsto z^x$,
is unitary, where $\bbT=\{z\in\bbC:|z|=1\}$ carries the normalized Haar
measure $m$ (\cref{app:fourier}).  It turns the translation
$\delta_x\mapsto\delta_{x+1}$ into multiplication by $z$, and the
translation-invariant bounded operators on $\hilb$ into multiplication
operators $M_F$ by matrix functions $F\in L^\infty(\bbT,M_\nu)$.  With
$\hat F_r:=\int_\bbT F(z)z^{-r}\,\rmd m(z)$ the matrix elements are
\begin{equation}\label{eq:matrix-elements}
    \langle\delta_y\otimes e_a,M_F\,\delta_x\otimes e_b\rangle
    =(\hat F_{y-x})_{ab}.
\end{equation}
The one-particle space $\ell^2(\bbN)\otimes\bbC^\nu$ of the right
half-chain, $\bbN=\{0,1,2,\dots\}$, becomes $H^2\otimes\bbC^\nu$, where
$H^2=\ovl{\rspan}\{1,z,z^2,\dots\}\subset L^2(\bbT)$ is the Hardy space.
For a multiplicity $n\ge1$ let $\Pi$ be the orthogonal projection of
$L^2(\bbT)\otimes\bbC^n$ onto $H^2\otimes\bbC^n$; on $\hilb$ we have
$n=\nu$, and \cref{sec:constructing,sec:entropy} also use the scalar case
$n=1$.  For $F\in L^\infty(\bbT,M_n)$ the Toeplitz and Hankel operators
are
\begin{equation}\label{eq:toeplitz-hankel-def}
    T_F:=\Pi M_F|_{H^2\otimes\bbC^n},
    \qquad
    H_F:=(\Id-\Pi)M_F|_{H^2\otimes\bbC^n}
\end{equation}
\cite{Bottcher2006}.  Both have norm at most $\|F\|_\infty$, and
$T_F^*=T_{F^*}$.  For an interval $I_\ell=\{0,\dots,\ell-1\}$ let $Q_\ell$
be the projection onto $\ell^2(I_\ell)\otimes\bbC^n$.

\begin{lemma}\label{lem:toeplitz-hankel}
Let $F\in L^\infty(\bbT,M_n)$.
\begin{enumerate}[label=\textup{(\roman*)}]
    \item $\|H_F\|_2^2=\sum_{r\ge1}r\,\|\hat F_{-r}\|_2^2$.
    \item With $\mu_\ell(r):=\min\{|r|,\ell\}$,
    \begin{gather*}
    \|(\Id-Q_\ell)M_FQ_\ell\|_2^2=\sum_{r\ne0}\mu_\ell(r)\,\|\hat F_r\|_2^2,
    \\
    \|(\Id-Q_\ell)M_FQ_\ell\|_1\le\sum_{r\ne0}\mu_\ell(r)\,\|\hat F_r\|_1 .
    \end{gather*}
\end{enumerate}
\end{lemma}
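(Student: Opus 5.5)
The plan is to reduce everything to explicit matrix elements via \eqref{eq:matrix-elements} and then count. In the position basis $\{\delta_x\otimes e_a\}$, $M_F$ is the block Toeplitz matrix whose $(y,x)$ block is $\hat F_{y-x}$. The Hilbert--Schmidt norm of any operator is the sum of the squared Frobenius norms of its blocks, so both Hilbert--Schmidt identities come down to counting pairs $(y,x)$ with a prescribed difference $r=y-x$.

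For (i), $H_F$ corresponds under $\F$ to the block matrix with entries $(y,x)$, $x\ge0$, $y<0$. Its squared Hilbert--Schmidt norm is therefore
\[
\sum_{x\ge0}\sum_{y<0}\|\hat F_{y-x}\|_2^2 .
\]
Setting $r=x-y\ge1$, the number of pairs with $x\ge0$, $y\le-1$ and $x-y=r$ is exactly $r$ (take $x=0,\dots,r-1$). This gives $\sum_{r\ge1}r\|\hat F_{-r}\|_2^2$. Both sides may be infinite; the identity holds in $[0,\infty]$ because all terms are nonnegative.

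For the Hilbert--Schmidt part of (ii), I would sum over $x\in I_\ell$ and $y\notin I_\ell$ in the same way. For fixed $r=y-x\neq0$, the number of $x\in\{0,\dots,\ell-1\}$ with $x+r\notin I_\ell$ is $\min\{|r|,\ell\}=\mu_\ell(r)$. For $r>0$ these are the $x\ge\ell-r$, intersected with $I_\ell$; the case $r<0$ is symmetric. Pairs with $r=0$ never contribute.

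For the trace-norm bound, write
\[
(\Id-Q_\ell)M_FQ_\ell=\sum_{r\ne0}A_r ,
\]
where $A_r$ collects the blocks on the $r$-th diagonal, i.e.\ the blocks with $y-x=r$, $x\in I_\ell$, $y\notin I_\ell$. Each $A_r$ is an orthogonal direct sum of $\mu_\ell(r)$ copies of $\hat F_r$: it maps the disjoint subspaces $\delta_x\otimes\bbC^n$ into the disjoint subspaces $\delta_{x+r}\otimes\bbC^n$. Hence $\|A_r\|_1=\mu_\ell(r)\|\hat F_r\|_1$, and the triangle inequality for $\|\cdot\|_1$ gives the bound.

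The only step needing care is convergence of the decomposition. When the right-hand side is finite, the series $\sum_r A_r$ converges absolutely in $\cS_1$. Its limit agrees with $(\Id-Q_\ell)M_FQ_\ell$ on each basis vector $\delta_x\otimes e_b$, since $\sum_r\|\hat F_r\|_2^2<\infty$ for bounded $F$. Two operators that agree on a basis of the finite-dimensional range of $Q_\ell$ are equal, so the estimate follows. If the right-hand side is infinite, there is nothing to prove.
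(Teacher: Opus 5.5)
Your proposal is correct and is essentially the paper's proof: you read off the blocks $\hat F_{y-x}$ from \eqref{eq:matrix-elements}, count the pairs $(x,y)$ with a fixed difference to get the two Hilbert--Schmidt identities, and use the triangle inequality in $\cS_1$ for the trace-norm bound. The differences are cosmetic. You group the blocks by diagonal, so that $\|A_r\|_1=\mu_\ell(r)\|\hat F_r\|_1$, whereas the paper applies the triangle inequality block by block, and you spell out the convergence of the infinite sum, which the paper leaves implicit.
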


\begin{proof}
By
\cref{eq:matrix-elements} the operators consist of the blocks
$\hat F_{y-x}$ with $x\ge0>y$, respectively with $x\in I_\ell$ and
$y\notin I_\ell$.  For fixed $r\ge1$ there are $r$ pairs with $y-x=-r$ in
the first case, and for fixed $r\ne0$ there are $\min\{|r|,\ell\}$ pairs
with $y-x=r$ in the second.  The trace-norm bound follows from the
triangle inequality, each block being an operator of trace norm
$\|\hat F_{y-x}\|_1$.
\end{proof}

\subsection{Half-chain algebras of quasifree states}\label{sec:halfchain}

Let $\omega_C$ be a pure gauge-invariant quasifree state on $\CAR(\hilb)$,
$\hilb_R=\ell^2(\bbN)\otimes\bbC^\nu$, and
\begin{equation}\label{eq:CR-def}
    \M_R(\omega_C):=\pi_{\omega_C}\bigl(\CAR(\hilb_R)\bigr)'',
    \qquad
    C_R:=\Pi C\Pi|_{\hilb_R}.
\end{equation}
If $C=M_{C(z)}$ is translation invariant, then $C_R=T_{C(z)}$.

\begin{lemma}\label{lem:halfchain}
$\M_R(\omega_C)$ is a factor isomorphic to
$\pi_{C_R}(\CAR(\hilb_R))''$, the von Neumann algebra of the GNS
representation of the quasifree state $\omega_{C_R}$.  It is of type I if
and only if $(\Id-\Pi)C\Pi\in\cS_2$; for translation-invariant $C=M_{C(z)}$
this means $\sum_{r\ge1}r\|\hat C_{-r}\|_2^2<\infty$.
\end{lemma}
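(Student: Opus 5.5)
The plan is to assemble the lemma from standard facts about quasifree states, collected in \cref{app:quasifree}, and then compute the Hilbert--Schmidt criterion with \cref{lem:toeplitz-hankel}. The argument has three steps: identify $\M_R(\omega_C)$, show it is a factor, and characterise type I.

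\textbf{Identification.} The restriction of $\omega_C$ to $\CAR(\hilb_R)$ is the quasifree state $\omega_{C_R}$. Let $\mathcal K$ be the closed cyclic subspace generated by $\Omega_{\omega_C}$ under $\pi_{\omega_C}(\CAR(\hilb_R))$. The subrepresentation on $\mathcal K$ is unitarily equivalent to the GNS representation of $\omega_{C_R}$.

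To pass from $\mathcal K$ to the whole space, we need the following fact. Because $\omega_C$ is pure, $\Omega_{\omega_C}$ is also cyclic for the algebra generated by the even part of $\CAR(\hilb_R)$ together with $\CAR(\hilb_L)$, the latter twisted by $\Theta$. This twisted left algebra commutes with $\M_R$. Hence the projection onto $\mathcal K$ lies in $\M_R'$ and has central support $\Id$. The map $x\mapsto x|_{\mathcal K}$ is therefore an isomorphism $\M_R\cong\pi_{C_R}(\CAR(\hilb_R))''$. A cleaner alternative is to cite the Araki/Powers--Størmer result that the reduction of $\pi_{\omega_C}$ is quasi-equivalent to $\pi_{C_R}$.

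\textbf{Factoriality.} Gauge-invariant quasifree states are factor states, a result of Powers--Størmer and Araki (see \cref{app:quasifree}). So $\pi_{C_R}(\CAR(\hilb_R))''$ is a factor. Alternatively, one can argue that a pure state induces a factor on any half-algebra via the twisted duality above.

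\textbf{Type I criterion.} The Powers--Størmer theorem says $\pi_{C_R}$ generates a type I factor if and only if $C_R-C_R^2\in\cS_1$, equivalently $\sqrt{C_R(\Id-C_R)}\in\cS_2$. Since $C$ is a projection,
\[
  C_R-C_R^2=\Pi C\Pi-\Pi C\Pi C\Pi=\Pi C(\Id-\Pi)C\Pi=\bigl((\Id-\Pi)C\Pi\bigr)^*\bigl((\Id-\Pi)C\Pi\bigr),
\]
so $C_R-C_R^2$ is trace class exactly when $(\Id-\Pi)C\Pi\in\cS_2$. In the translation-invariant case $(\Id-\Pi)C\Pi=H_{C(z)}$, and \cref{lem:toeplitz-hankel}(i) turns this condition into $\sum_{r\ge1}r\|\hat C_{-r}\|_2^2<\infty$.

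The main obstacle is the first step: making rigorous that the cyclic-subspace reduction is an isomorphism of von Neumann algebras, i.e.\ that the projection onto $\mathcal K$ has full central support. This is where the fermionic grading enters, and I would handle it by the twisted-duality argument or by citing the quasi-equivalence result from \cref{app:quasifree}.
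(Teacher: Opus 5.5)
Your third step is exactly the paper's computation: $C_R-C_R^2=X^*X$ with $X=(\Id-\Pi)C\Pi=H_{C(z)}$, then \cref{prop:qf-basic}(v) and \cref{lem:toeplitz-hankel}(i). Your overall plan works, but the sentence meant to carry the first step is wrong as written. It asserts that, by purity, $\Omega$ is cyclic for the algebra generated by the \emph{even} part of $\CAR(\hilb_R)$ and the twisted left algebra. This fails in general. Take a pure product state $C=C_L\oplus C_R$, realized on $\cH_L\otimes\cH_R$ as in \cref{lem:block-diagonal}. That algebra leaves invariant $\cH_L\otimes\cH_R^{+}$, the even sector of the right factor, which is a proper subspace containing $\Omega$.

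Neither purity nor twisted duality is needed; twisted locality suffices. Let $\Gamma$ implement $\Theta$ with $\Gamma\Omega=\Omega$. For homogeneous $A\in\CAR(\hilb_L)$, take $\pi(A)$ if $A$ is even and $\pi(A)\Gamma$ if $A$ is odd. This operator commutes with $\pi(\CAR(\hilb_R))$, so it lies in $\M_R'$. Applied to $\pi(B)\Omega\in\mathcal K$ with $B\in\CAR(\hilb_R)$ homogeneous, it gives $\pm\pi(A)\pi(B)\Omega$. The products $AB$ span a dense subset of $\CAR(\hilb)$ and $\Omega$ is cyclic, so $\M_R'\mathcal K$ is total in $\cH$.

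Now let $p$ be the projection onto $\mathcal K$; it lies in $\M_R'$ simply because $\mathcal K$ is $\M_R$-invariant. By the totality just shown, the central support of $p$ is $\Id$. Hence $x\mapsto xp$ is an isomorphism of $\M_R$ onto $\M_Rp\cong\pi_{C_R}(\CAR(\hilb_R))''$. Your fallback of citing a quasi-equivalence result does not help. No such statement is in \cref{app:quasifree}, and that quasi-equivalence is exactly what has to be shown.

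With this repair your route is correct, and it differs from the paper's in the order of the first two steps. The paper first gets factoriality of $\M_R(\omega_C)$ from evenness and purity via \cite[Lemma~6]{vanLuijk2025}. The isomorphism $x\mapsto xp$ is then automatic, because every nonzero projection in the commutant of a factor has central support $\Id$ \cite[Prop.~5.5.6]{kadison1997fundamentals1}. You instead establish the isomorphism directly from twisted locality. You then deduce factoriality from the fact that $\omega_{C_R}$, being quasifree, is a factor state (\cref{prop:qf-basic}(i)).

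Your version stays inside the paper's appendix. It also shows that the identification and factoriality hold for every gauge-invariant quasifree state, with purity entering only through $C^2=C$ in the type I criterion. The paper's version is shorter but outsources the key point to an external lemma.
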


\begin{proof}
Since $\omega_C$ is even and pure, $\M_R(\omega_C)$ is a factor
\cite[Lemma~6]{vanLuijk2025}.  The projection $p$ onto the closure of
$\pi_{\omega_C}(\CAR(\hilb_R))\Omega$ lies in $\M_R(\omega_C)'$, and
$\M_R(\omega_C)p$ is the von Neumann algebra of the GNS representation of
$\omega_C|_{\CAR(\hilb_R)}=\omega_{C_R}$.  As $\M_R(\omega_C)$ is a factor,
$x\mapsto xp$ is an isomorphism \cite[Prop.~5.5.6]{kadison1997fundamentals1}.
By \cref{prop:qf-basic}, $\omega_{C_R}$ is of type I if and only if
$C_R-C_R^2$ is trace class, and
$C_R-C_R^2=\Pi C(\Id-\Pi)C\Pi|_{\hilb_R}=X^*X$ with $X=(\Id-\Pi)C\Pi$.  The
last statement is \cref{lem:toeplitz-hankel}(i), since
$(\Id-\Pi)C\Pi=H_{C(z)}$.
\end{proof}

\subsection{Entropy of intervals}\label{sec:entropy-prelim}

\begin{lemma}\label{lem:entropy-bounds}
Let $C$ be a projection on $\hilb$, $\Lambda\subset\z$ finite, $Q_\Lambda$
the projection onto $\ell^2(\Lambda)\otimes\bbC^\nu$, and
$X_\Lambda:=(\Id-Q_\Lambda)CQ_\Lambda$.  Then
\begin{equation}\label{eq:entropy-bounds}
    4\log2\,\|X_\Lambda\|_2^2
    \le S_{\omega_C}(\Lambda)
    \le 2\log2\,\|X_\Lambda\|_1 .
\end{equation}
\end{lemma}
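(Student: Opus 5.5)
The restricted covariance carries all the information. By the restriction rule in \cref{sec:CAR}, $\omega_C|_{\A_{\rm f}(\Lambda)}$ is the gauge-invariant quasifree state with covariance $A:=Q_\Lambda CQ_\Lambda|_{\ell^2(\Lambda)\otimes\bbC^\nu}$, a finite matrix with $0\le A\le\Id$. Its density matrix is a product of independent two-level factors over the eigenmodes of $A$, so
\[
S_{\omega_C}(\Lambda)=\sum_j \hb(\lambda_j),
\]
where $\lambda_j$ are the eigenvalues of $A$. Since $C=C^2$,
\[
A-A^2=Q_\Lambda C(\Id-Q_\Lambda)CQ_\Lambda=X_\Lambda^*X_\Lambda .
\]
Hence $\sum_j\lambda_j(1-\lambda_j)=\|X_\Lambda\|_2^2$. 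Moreover, the eigenvalues $\lambda_j(1-\lambda_j)$ of $A-A^2$ are exactly the squares of the singular values $s_j$ of $X_\Lambda$, so $\sqrt{\lambda_j(1-\lambda_j)}=s_j$ and $\sum_j\sqrt{\lambda_j(1-\lambda_j)}=\|X_\Lambda\|_1$.

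Both inequalities therefore reduce to scalar bounds on $[0,1]$:
\[
4\log 2\; t(1-t)\;\le\;\hb(t)\;\le\;2\log2\;\sqrt{t(1-t)} .
\]
Summing these over the eigenvalues of $A$ gives \eqref{eq:entropy-bounds}.

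\textbf{Lower bound.} Write $x=t(1-t)\in[0,1/4]$. Then $\hb$ is a function $g(x)$, with $g(0)=0$ and $g(1/4)=\log 2$. Parametrize by $t\in[0,1/2]$. One computes $\hb'(t)=\log\frac{1-t}{t}$ and $\frac{dx}{dt}=1-2t$, so
\[
g'(x)=\frac{\log\frac{1-t}{t}}{1-2t}.
\]
Set $s=1-2t$. Then $g'=\frac{1}{s}\log\frac{1+s}{1-s}=2\,\mathrm{artanh}(s)/s$. This is increasing in $s$, and $s$ decreases as $x$ increases, so $g'$ is decreasing in $x$. Thus $g$ is concave on $[0,1/4]$, and concavity gives $g(x)\ge 4x\,g(1/4)=4\log2\,x$.

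\textbf{Upper bound.} Take the same parametrization and write $y=\sqrt{x}\in[0,1/2]$, with endpoint values $0$ and $\log2$. It suffices to show that $\hb/\sqrt{t(1-t)}$ is maximal at $t=1/2$, where it equals $2\log2$. Near $t=0$ the ratio behaves like $\sqrt t\log(1/t)\to0$. In general I would check concavity of $y\mapsto g(y^2)$ in the same way as above. Its derivative is
\[
2y\,g'(x)=2\sqrt{1-s^2}\cdot\frac{\mathrm{artanh}(s)}{s}\cdot\frac12\cdot 2 ,
\]
and one needs $\sqrt{1-s^2}\,\mathrm{artanh}(s)/s$ to be increasing in $s$. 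That is an elementary but fiddly inequality. As a fallback I would instead use the cruder estimate $\hb(t)\le 2\sqrt{t(1-t)}\cdot\log 2$, verified by checking that $\phi(t)=\hb(t)-2\log2\sqrt{t(1-t)}$ is $\le 0$: it vanishes at $t=0$ and $t=1/2$, its derivative is sign-controlled, and it has at most one interior critical point.

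I expect this second scalar inequality to be the main obstacle. The rest of the argument (quasifree restriction, the spectral identities linking $A-A^2$ to $X_\Lambda$, and summing) is routine.
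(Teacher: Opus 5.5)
Your reduction is the paper's. You restrict to the quasifree state with covariance $A=Q_\Lambda CQ_\Lambda$, use $A-A^2=X_\Lambda^*X_\Lambda$, and sum the scalar bounds $4\log2\,t(1-t)\le\hb(t)\le2\log2\sqrt{t(1-t)}$; the paper takes these bounds from Topsøe. Your concavity proof of the lower bound is correct. The upper bound, however, is argued in the wrong direction, and as written it does not go through.

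Put $G(y):=\hb(t)$ with $y=\sqrt{t(1-t)}\in[0,\tfrac12]$ and $t\le\tfrac12$. Then $G(0)=0$ and $G(\tfrac12)=\log2$, so concavity of $G$ would give $G(y)\ge2y\log2$. That is the \emph{reverse} inequality, and it is false near $t=0$, where $\hb(t)\sim t\log(1/t)\ll\sqrt t$. Accordingly, the monotonicity you say is needed fails: $\psi(s):=\sqrt{1-s^2}\,\mathrm{artanh}(s)/s$ decreases from $1$ at $s=0$ to $0$ as $s\to1$. Your stated criterion, that $\hb/\sqrt{t(1-t)}$ is maximal at $t=\tfrac12$, is right, but it follows from convexity, not concavity. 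Your fallback is not a cruder estimate; it is the same inequality. Its key claim, that $\hb-2\log2\sqrt{t(1-t)}$ has at most one interior critical point, is asserted without proof.

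The repair is short. Since $s=\sqrt{1-4y^2}$ decreases as $y$ increases and $G'(y)=2\psi(s)$, convexity of $G$ is equivalent to $\psi$ being decreasing. This holds because
\[
\frac{\rmd}{\rmd s}\log\psi(s)=\frac{s-\mathrm{artanh}(s)}{s(1-s^2)\,\mathrm{artanh}(s)}<0,\qquad 0<s<1,
\]
and $\mathrm{artanh}(s)>s$. Convexity together with $G(0)=0$ gives $G(y)\le2y\,G(\tfrac12)=2\log2\,y$, which is the upper bound. The same monotonicity completes your fallback: its interior critical points are exactly the solutions of $\psi(s)=\log2$, so there is exactly one. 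Alternatively, do as the paper does: cite Topsøe's sharper bound $\hb(t)\le\log2\,(4t(1-t))^{1/\log4}$ and use $4t(1-t)\le1$ and $1/\log4>\tfrac12$.
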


\begin{proof}
By \cref{lem:qf-entropy}, $S_{\omega_C}(\Lambda)=\Tr\hb(C_\Lambda)$ with
$C_\Lambda=Q_\Lambda CQ_\Lambda|_{\Ran Q_\Lambda}$.  Since
$C_\Lambda-C_\Lambda^2=Q_\Lambda C(\Id-Q_\Lambda)CQ_\Lambda
=X_\Lambda^*X_\Lambda$, the numbers $\eta(1-\eta)$, $\eta$ an eigenvalue of
$C_\Lambda$, are the squared singular values of $X_\Lambda$.  Topsøe's
inequalities \cite{Topsoe2001}
\begin{equation}\label{eq:topsoe}
    4\log2\;t(1-t)\le\hb(t)\le\log2\,\bigl(4t(1-t)\bigr)^{1/\log4}
    \le2\log2\,\sqrt{t(1-t)}
\end{equation}
for $0\le t\le1$, where the last step uses $1/\log4>1/2$, give the claim.
\end{proof}

\begin{corollary}\label{cor:entropy-diverges}
If $\M_R(\omega_C)$ is not of type I, then $S_{\omega_C}(\ell)\to\infty$.
\end{corollary}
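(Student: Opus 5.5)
By the lower bound in \cref{lem:entropy-bounds}, applied to $\Lambda=I_\ell$, we have $S_{\omega_C}(\ell)\ge4\log2\,\|X_\ell\|_2^2$ with $X_\ell=(\Id-Q_\ell)CQ_\ell$. It therefore suffices to show $\|X_\ell\|_2\to\infty$ whenever $\M_R(\omega_C)$ is not of type I. By \cref{lem:halfchain}, the latter means exactly that $Y:=(\Id-\Pi)C\Pi\notin\cS_2$, i.e.\ $\|Y\|_2=\infty$.

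The plan is to compare $X_\ell$ with a truncation of $Y$. Write $\Pi_-:=\Id-\Pi$ for the projection onto the left half-chain $y<0$. Since $Q_\ell\le\Pi$ and $\Pi_-\le\Id-Q_\ell$, the operator $\Pi_-CQ_\ell=\Pi_- X_\ell$ is a compression of $X_\ell$, so
\begin{equation*}
\|X_\ell\|_2\ge\|\Pi_-CQ_\ell\|_2=\|YQ_\ell\|_2 .
\end{equation*}
As $\ell\to\infty$, $Q_\ell\uparrow\Pi$ strongly. Moreover, $\|YQ_\ell\|_2^2=\sum_{x\in I_\ell,a}\|Y(\delta_x\otimes e_a)\|^2$ is a partial sum of the nonnegative series $\sum_{x\ge0,a}\|Y(\delta_x\otimes e_a)\|^2=\|Y\|_2^2$. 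Hence $\|YQ_\ell\|_2^2$ increases monotonically to $\|Y\|_2^2=\infty$, and so $S_{\omega_C}(\ell)\to\infty$.

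The argument does not use translation invariance. The only points to check are that the Hilbert--Schmidt norm of a compression is bounded by the norm of the original operator, and the monotone convergence of the column sums; both are routine, so no real obstacle is expected. The one mild subtlety is that $C$ being a projection is needed to invoke \cref{lem:entropy-bounds}. This holds because $\omega_C$ is pure, which is the standing assumption in \cref{lem:halfchain}.
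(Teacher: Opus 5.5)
Your proposal is correct and is essentially the paper's own proof: you use the lower bound of \cref{lem:entropy-bounds}, the compression inequality $\|X_\ell\|_2\ge\|(\Id-\Pi)CQ_\ell\|_2$ coming from $Q_\ell\le\Pi$, and monotone convergence to $\|(\Id-\Pi)C\Pi\|_2=\infty$ via \cref{lem:halfchain}. You spell out the monotone column-sum argument that the paper leaves implicit, and you rightly note that the argument does not need translation invariance.
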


\begin{proof}
Since $Q_\ell\le\Pi$, we have $\|X_{I_\ell}\|_2\ge\|(\Id-\Pi)CQ_\ell\|_2$,
which increases to $\|(\Id-\Pi)C\Pi\|_2=\infty$ by \cref{lem:halfchain}.
\end{proof}

\section{Constructing the family}\label{sec:constructing}

\subsection{One species of fermions}\label{sec:spinless_fermion}

\begin{proposition}\label{prop:one-species}
Let $\omega_C$ be a translation-invariant pure gauge-invariant quasifree
state of one species of fermions.  Then $C=M_{\chi_E}$ for a measurable
set $E\subset\bbT$, the filled Fermi sea.  If $E$ or $\bbT\setminus E$ is
a null set, $\M_R(\omega_C)$ is of type I; otherwise it is of type
$\mathrm{III}_1$.
\end{proposition}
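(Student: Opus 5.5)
Since $C$ is translation invariant it equals $M_{C(z)}$ for a scalar function $C\in L^\infty(\bbT)$. Purity means $C$ is a projection, so $C(z)^2=C(z)=\ovl{C(z)}$ almost everywhere. Hence $C(z)\in\{0,1\}$ a.e., i.e.\ $C=\chi_E$ for $E=\{C=1\}$. If $m(E)=0$ or $m(\bbT\setminus E)=0$, then $C=0$ or $C=\Id$. In that case $(\Id-\Pi)C\Pi=0\in\cS_2$, and \cref{lem:halfchain} gives type I.

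In the nontrivial case $0<m(E)<1$, I first show the algebra is not of type I. By \cref{lem:halfchain} it suffices to show $H_{\chi_E}\notin\cS_2$. A Hilbert--Schmidt Hankel operator has a symbol whose antianalytic part lies in the Sobolev space $H^{1/2}$, which embeds in VMO. Since $\chi_E$ is real, its analytic part is then also in $H^{1/2}$, so $\chi_E$ itself would lie in $H^{1/2}(\bbT)$. But a $\{0,1\}$-valued function with finite $H^{1/2}$ seminorm,
\[
\iint\frac{|\chi_E(z)-\chi_E(w)|^2}{|z-w|^2}\,\rmd m(z)\,\rmd m(w)=2\int_E\int_{\bbT\setminus E}\frac{\rmd m(z)\,\rmd m(w)}{|z-w|^2},
\]
must be constant. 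I would prove this directly. Around a Lebesgue density point of $E$ and one of $\bbT\setminus E$, take a short arc joining them. Along it there is a point where both $E$ and its complement occupy a fixed proportion of every small neighbourhood at all scales down to zero, which I would get by a continuity argument on the density function $r\mapsto m(E\cap(x-r,x+r))/2r$. Each dyadic scale around that point then contributes a fixed amount to the integral, so the integral diverges logarithmically. (Alternatively, cite that $H^{1/2}\subset$ VMO and that VMO functions taking only the values $0,1$ are constant.)

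For type $\mathrm{III}_1$, I use the structure of quasifree factors, as collected in \cref{app:quasifree} via Araki--Wyss / Powers--Størmer. A gauge-invariant quasifree factor state with covariance $C_R=T_{\chi_E}$ has modular spectrum determined by the spectrum of $\log(C_R(\Id-C_R)^{-1})$, with the type read off from the essential spectrum. Concretely, if $C_R$ has essential spectrum containing a nondegenerate interval in $(0,1)$, the ratios $\eta/(1-\eta)$ fill an interval. Then the asymptotic ratio set is $[0,\infty)$ and the factor is $\mathrm{III}_1$. By the Hartman--Wintner / Widom theorem on Toeplitz operators with real symbols, $\sigma(T_{\chi_E})$ is the convex hull of the essential range, here $[0,1]$. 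I need the essential spectrum to be all of $[0,1]$. Since $T_{\chi_E}-T_{\chi_E}^2=H^*H$ is not compact in the relevant sense, I expect to use that for a selfadjoint Toeplitz operator $\sigma_{\rm ess}=\sigma$ (Coburn: the spectrum has no isolated eigenvalues of finite multiplicity, as the kernel of $T_\phi-\lambda$ or its adjoint is trivial). Hence $\sigma_{\rm ess}(C_R)=[0,1]$, and the quasifree criterion yields type $\mathrm{III}_1$.

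The main obstacle is the precise type criterion. The statement I would invoke is that if $C_R(\Id-C_R)$ is not trace class and the essential spectrum of $C_R$ contains a point $\eta\in(0,1)$ together with a neighbourhood, then the factor is $\mathrm{III}_1$. Here the essential spectrum is the whole interval, which makes every $\lambda\in(0,1]$ appear in the asymptotic ratio set by choosing infinitely many disjoint spectral pieces. I would verify this via the Araki--Woods ITPFI reduction: approximate $C_R$ by a diagonal operator up to Hilbert--Schmidt error, which preserves quasi-equivalence. With the diagonal entries dense in $[0,1]$ with infinite multiplicity, Araki--Woods gives ratio set $[0,\infty)$.
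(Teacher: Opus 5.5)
Your main line is the paper's: $C=M_{\chi_E}$, Hartman--Wintner gives $\sigma(T_{\chi_E})=[0,1]$, and a quasifree covariance with a nondegenerate interval in its spectrum yields type $\mathrm{III}_1$. The paper simply cites that last criterion (Lemma~26 of \cite{vanLuijk2025}); if you cite it too, your proof is complete.

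\textbf{The step that fails.} The problem is in your proposed verification of that criterion: a Hilbert--Schmidt error in the covariance does \emph{not} preserve quasi-equivalence. The Powers--St{\o}rmer criterion requires $A^{1/2}-B^{1/2}$ and $(\Id-A)^{1/2}-(\Id-B)^{1/2}$ to be Hilbert--Schmidt. The square root is not Lipschitz at $0$ and $1$, and both points lie in $\sigma(C_R)$. For example, $A=0$ and $B=\diag(1/n)$ differ by a Hilbert--Schmidt operator. Yet $\omega_A$ is the Fock state, while $B-B^2\notin\cS_1$, so $\omega_B$ is not even of type I.

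You also cannot upgrade to a trace-class error, for which $\|A^{1/2}-B^{1/2}\|_2^2\le\|A-B\|_1$ would suffice. The operator $T_{\chi_E}$ has absolutely continuous spectrum (Rosenblum), so by Kato--Rosenblum it is not diagonal modulo $\cS_1$.

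\textbf{A repair.} Apply Weyl--von Neumann--Kuroda to $\Theta=\arcsin(C_R^{1/2})$ to get a diagonal $\Delta$ with $\Theta-\Delta\in\cS_2$. Replace $\Delta$ by $r(\Delta)$, where $r$ is the $1$-Lipschitz retraction of $\bbR$ onto $[0,\pi/2]$, and set $D=\sin^2 r(\Delta)$. Lipschitz functions of self-adjoint operators are Lipschitz in $\|\cdot\|_2$. Hence $C_R^{1/2}-D^{1/2}=\sin\Theta-\sin r(\Delta)$ and $(\Id-C_R)^{1/2}-(\Id-D)^{1/2}=\cos\Theta-\cos r(\Delta)$ are both Hilbert--Schmidt. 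Moreover $\sigma_{\rm ess}(D)=\sigma_{\rm ess}(C_R)=[0,1]$, so the Araki--Woods argument then goes through as you intend.

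\textbf{Superfluous steps.} Two of your other steps are correct but not needed.

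\emph{The $H^{1/2}$ argument.} Showing the algebra is not of type I is implied by type $\mathrm{III}_1$; an interval in $\sigma_{\rm ess}(C_R)$ already makes $C_R-C_R^2$ non-compact. Its direct version also overreaches. A point where $E$ has density bounded away from $0$ and $1$ at all small scales does exist, but that is a nontrivial theorem of Kolyada, not a continuity argument in $r$. What continuity in the centre $x$ actually gives is, at each small scale $\delta$, an arc of length $2\delta$ on which $E$ has proportion $\frac12$. That bounds below the contribution of pairs at distance comparable to $\delta$, and summing over disjoint distance bands gives the divergence.

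\emph{Coburn's lemma.} It is not needed to pass from $\sigma$ to $\sigma_{\rm ess}$. Interior points of an interval contained in the spectrum of a self-adjoint operator are not isolated, so they lie in the essential spectrum. This is what the paper implicitly uses.
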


\begin{proof}
Translation invariance means that $C$ commutes with $M_z$.  Put
$\chi:=C\Id\in L^2(\bbT)$; then $Cp=CM_p\Id=M_pC\Id=\chi p$ for every
trigonometric polynomial $p$, so $\|\chi p\|_2\le\|p\|_2$, whence
$\chi\in L^\infty$ and $C=M_\chi$ by density.  As $C$ is a projection,
$\chi^2=\chi$, so $\chi=\chi_E$.  Then $C_R=T_{\chi_E}$.  If $E$ or its
complement is null, $C_R\in\{0,\Id\}$ is a projection, $\omega_{C_R}$ is
pure and the factor is of type I by \cref{lem:halfchain}.  Otherwise the
spectrum of the self-adjoint Toeplitz operator $T_{\chi_E}$ is the interval
between the essential infimum and supremum of $\chi_E$, that is $[0,1]$
\cite{HartmanWintner1954,Bottcher2006}, and a quasifree state whose
covariance has a nondegenerate interval in its spectrum generates a factor
of type $\mathrm{III}_1$ \cite[Lemma~26]{vanLuijk2025}.
\end{proof}

\begin{figure}[htbp]
    \centering
    \includegraphics[width=\textwidth]{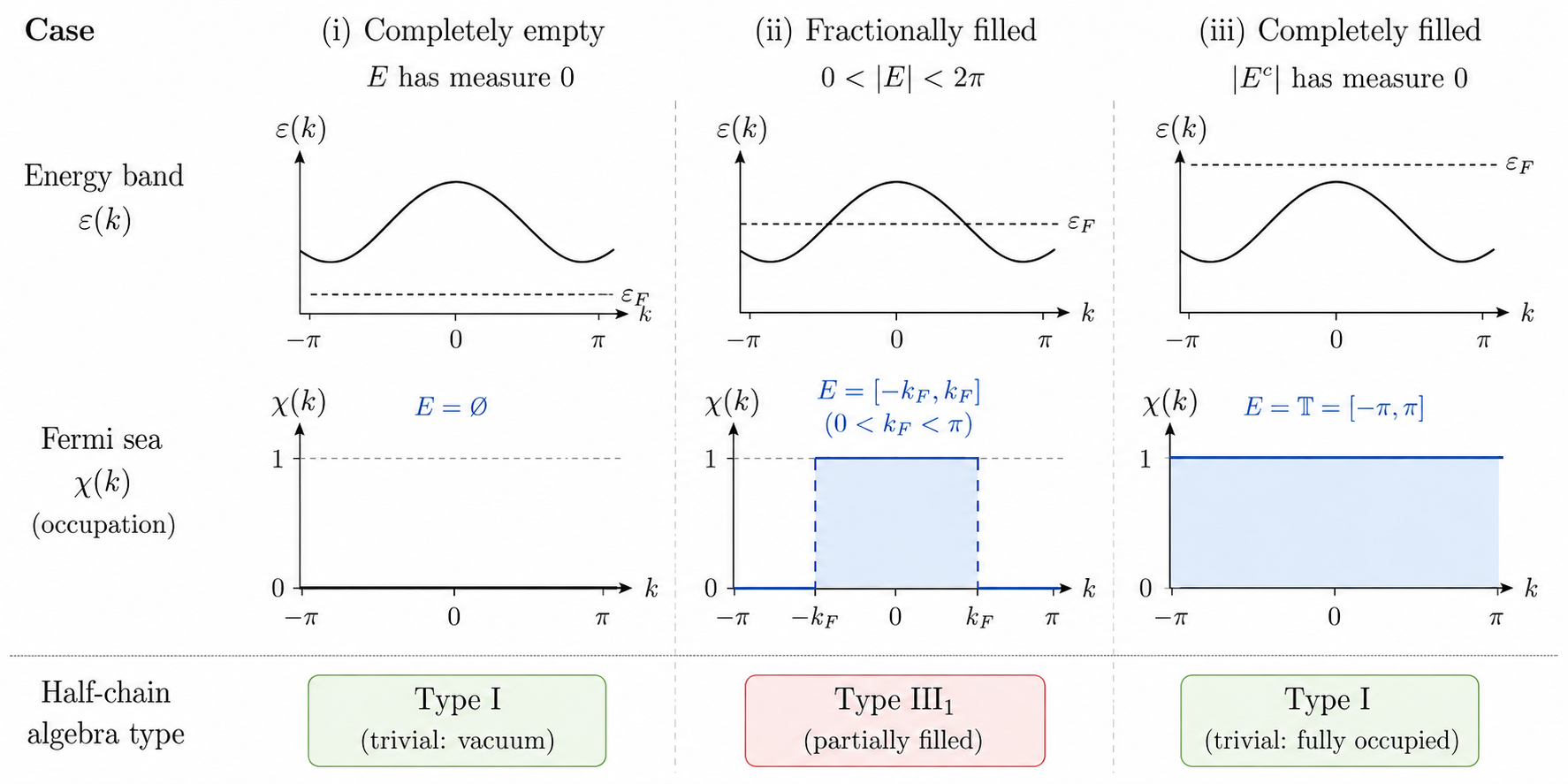}
    \caption{Fermi seas of a translation-invariant pure quasifree state of
    one fermion species.  An empty or completely filled band gives a type
    I half-chain algebra; every partially filled band, with Fermi sea $E$
    and its complement of positive measure, gives type $\mathrm{III}_1$.}
    \label{fig:spinless-fermion-band}
\end{figure}

One species therefore yields only type I and type $\mathrm{III}_1$.  Other
types require internal degrees of freedom, and the flavours must be
correlated: if the covariance is block diagonal in the flavour index, so is
the right-half covariance, the half-chain factor is the tensor product of
the single-species factors up to isomorphism
\cite{PowersStormer1970,ArakiWoods1968}, and the type remains I or
$\mathrm{III}_1$.

\subsection{Blaschke products as soft shifts}\label{sec:blaschke}

The second ingredient is the Blaschke product \cite{Duren1970,Garnett2007};
the mathematical facts are collected in \cref{app:blaschke}.  For
$0\le a<1$ put
\begin{equation}\label{eq:blaschke_factor}
    b_{a}(z)=\frac{a-z}{1-az}
    =a-(1-a^{2})\sum_{n=1}^{\infty}a^{n-1}z^{n}.
\end{equation}
A particle localized at the site $0$ is the constant function $1$ in
momentum space, and the unitary $M_{b_a}$ maps it to
\begin{equation}
    M_{b_a}\delta_0=a\,\delta_0-(1-a^{2})\sum_{n=1}^{\infty}a^{n-1}\delta_n .
\end{equation}
Part of the amplitude stays at the origin and the rest spreads to the
right with an exponential tail of decay length $\xi_a=1/|\log a|$
(\cref{fig:onezero_soft_shift}).  For $a=0$ we get $b_0(z)=-z$, the
one-site translation up to a sign; for $a$ close to $1$ the tail carries
little weight but is long.  A Blaschke factor is therefore a soft shift.

\begin{figure}[htbp]
    \centering
    \includegraphics[width=0.9\linewidth]{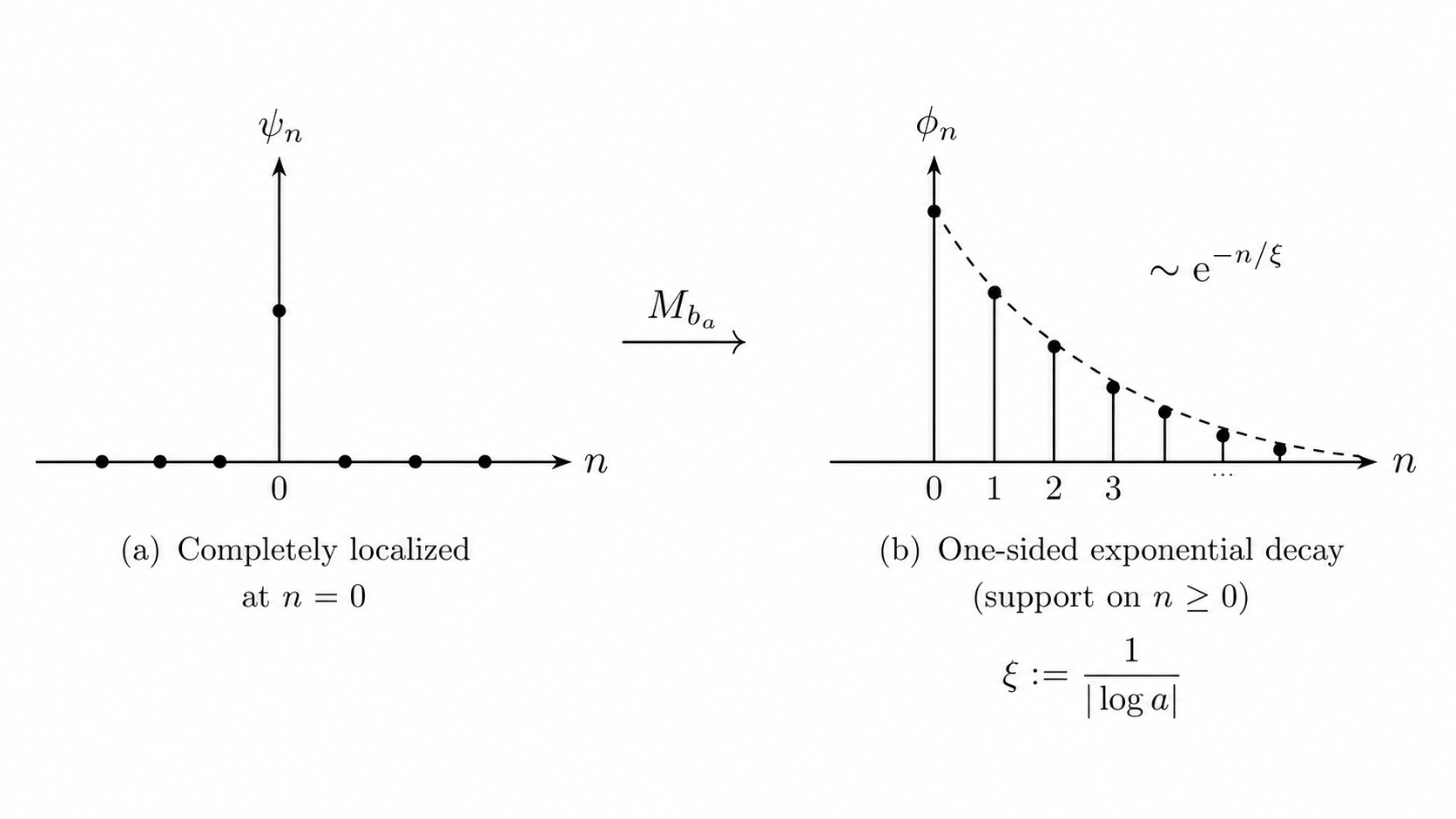}
    \caption{The soft shift $M_{b_a}$ acting on a fermion localized at
    $n=0$.  The resulting wave function is supported on $n\ge0$ and
    decays exponentially with localization length $\xi=1/|\log a|$.}
    \label{fig:onezero_soft_shift}
\end{figure}

Throughout, $u$ denotes an infinite Blaschke product with real zeros,
\begin{equation}\label{eq:standing}
    u(z)=\prod_{n=1}^{\infty}b_{a_n}(z),
    \qquad
    0\le a_n<1,
    \qquad
    \delta_n:=1-a_n,
    \qquad
    \sum_{n=1}^\infty\delta_n<\infty .
\end{equation}
The product converges uniformly on compact subsets of the unit disk
$\bbD$ and defines an inner function, that is, a bounded analytic function
on $\bbD$ with unimodular boundary values almost everywhere
(\cref{prop:model-spaces}).  The zero $a_n$ contributes the length scale
$\xi_{a_n}$, and $\xi_{a_n}\to\infty$; the zeros control how far a
localized particle is spread.  For example one may take
$a_n=1-2^{-n}$.

\subsection{A type \texorpdfstring{$\mathrm{II}_{\infty}$}{II-infinity}
half-chain algebra}\label{sec:typeIIoo}

Take two species, $\uparrow$ and $\downarrow$.  Start from the
translation-invariant pure product state in which, at every site, the
$\uparrow$-mode and the $\downarrow$-mode share one fermion in a Bell
pair; its covariance is $\frac12\left(\begin{smallmatrix}\Id&\Id\\
\Id&\Id\end{smallmatrix}\right)$.  Shifting all $\downarrow$-fermions by
$k$ sites makes $|k|$ Bell pairs cross a fixed cut, so the entanglement
across the cut is finite and the half-chain factor remains of type I.
Letting $k\to\infty$ does not help: the two flavours decorrelate and the
states converge in the weak-$*$ topology to the tracial state, which is
not pure.  Replacing the exact translation by the infinite sequence of
soft shifts $M_u$ keeps the global state pure and produces infinitely
many partially entangled modes across the cut (\cref{fig:soft_shift}).

\begin{figure}[htbp]
    \centering
    \includegraphics[width=\linewidth]{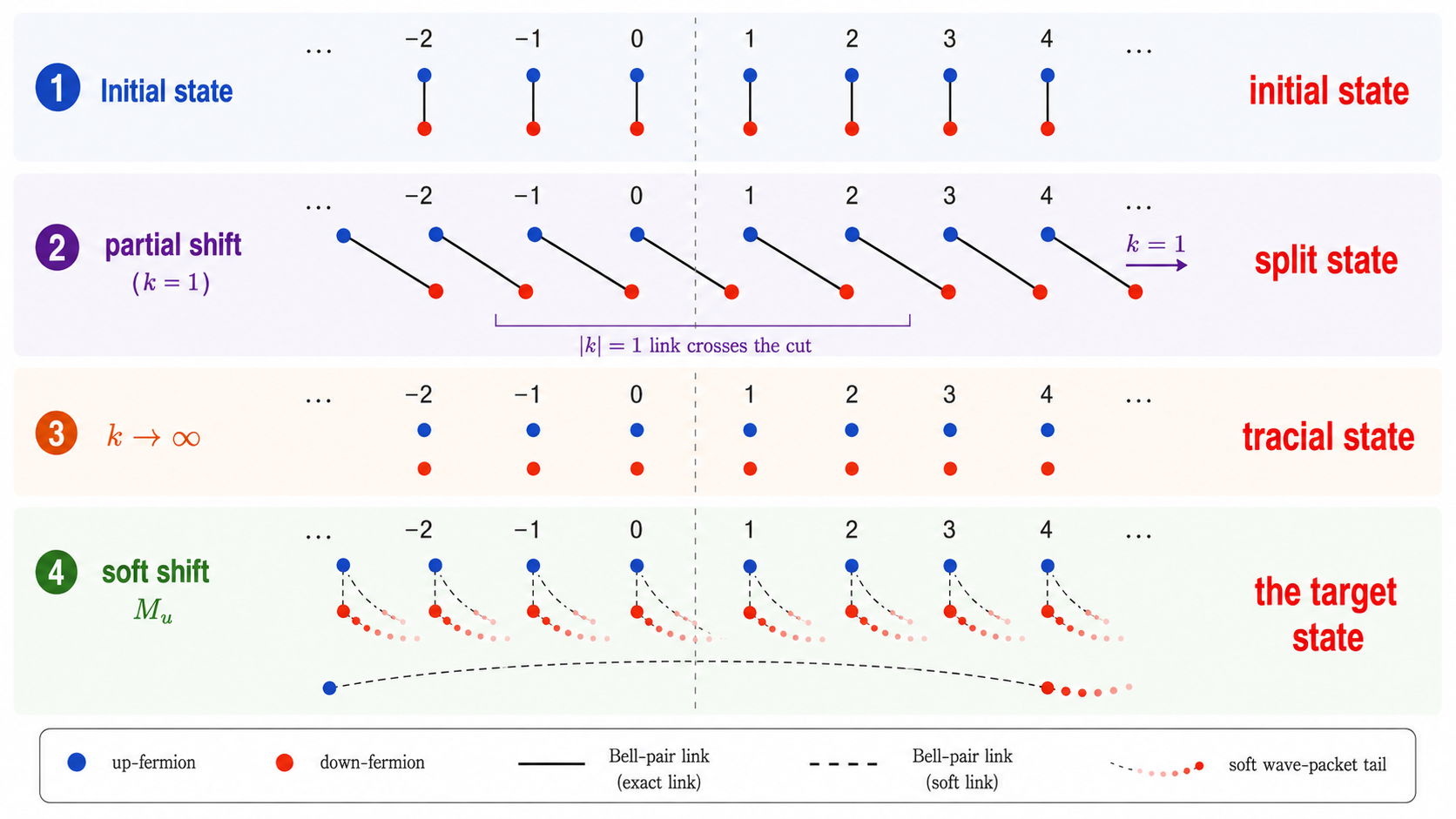}
    \caption{Exact partial shifts versus soft shifts.  Exact shifts give
    finite-range Bell-pair links; the soft shift $M_{u}$ produces long
    wave-packet tails and infinitely many entangled modes across the
    cut.}
    \label{fig:soft_shift}
\end{figure}

Translation invariance forces every entry of a covariance to be a
multiplication operator.  We choose
\begin{equation}\label{eq:cov_II_oo}
    C:=\frac{1}{2}
    \begin{pmatrix}
        \Id & M_u\\
        M_u^* & \Id
    \end{pmatrix}
    \quad\text{on } L^2(\bbT)\otimes\bbC^2 .
\end{equation}
Since $|u|=1$ almost everywhere, $M_u$ is unitary and $C^2=C$, so
$\omega_C$ is a translation-invariant pure quasifree state.  Its
right-half covariance is
\begin{equation}
    C_R=\frac{1}{2}
    \begin{pmatrix}
        \Id & T_u\\
        T_u^* & \Id
    \end{pmatrix}.
\end{equation}
As $u$ is inner, $T_u=M_u|_{H^2}$ is an isometry, and its failure to be
unitary is measured by the model space
\begin{equation}\label{eq:model-space}
    K_u:=H^2\ominus uH^2=\ker T_u^* .
\end{equation}
With respect to $H^2\otimes\bbC^2=(K_u\oplus0)\oplus(uH^2\oplus H^2)$ the
operator $C_R$ is block diagonal: $C_R=\frac12\Id$ on $K_u\oplus0$, while
on $uH^2\oplus H^2$, in the coordinates $(ug,h)\leftrightarrow(g,h)$, it is
the projection $\frac12\left(\begin{smallmatrix}1&1\\1&1\end{smallmatrix}
\right)$.  All the mixedness of the right-half state comes from the model
space.  By \cref{prop:model-spaces}, $\dim K_u$ equals the number of zeros
of $u$, so $\dim K_u=\infty$.

\begin{proposition}\label{prop:IIoo}
For $u$ as in \cref{eq:standing}, the half-chain algebra of the state
\cref{eq:cov_II_oo} is the hyperfinite factor of type
$\mathrm{II}_\infty$.
\end{proposition}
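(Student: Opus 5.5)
By \cref{lem:halfchain}, $\M_R(\omega_C)$ is isomorphic to the von Neumann algebra generated by the GNS representation of the quasifree state $\omega_{C_R}$ on $\CAR(H^2\otimes\bbC^2)$. The plan is to use the block decomposition of $C_R$ recorded just above the statement to split this state into a product.

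Write $\hilb_R=\mathfrak k_1\oplus\mathfrak k_2$ with $\mathfrak k_1=K_u\oplus0$ and $\mathfrak k_2=uH^2\oplus H^2$. Both subspaces reduce $C_R$. Since $C_R$ commutes with the projection onto $\mathfrak k_1$, the quasifree state factorizes on the graded tensor product $\CAR(\mathfrak k_1)\hat\otimes\CAR(\mathfrak k_2)$ (see \cref{app:quasifree}):
\[
\omega_{C_R}=\omega_{\frac12\Id_{\mathfrak k_1}}\hat\otimes\omega_{P},
\]
where $P$ is a projection on $\mathfrak k_2$. Both factors are even. Because $P$ is a projection, $\omega_P$ is a pure (Fock-type) state, so it generates a type $\mathrm I_\infty$ factor $B(\mathcal K)$; note $\dim\mathfrak k_2=\infty$. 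The state $\omega_{\frac12\Id}$ on $\CAR(K_u)$ with $\dim K_u=\infty$ (by \cref{prop:model-spaces} and infinitely many zeros) is the unique tracial state of the CAR algebra. Its GNS von Neumann algebra is therefore the hyperfinite $\mathrm{II}_1$ factor $R$. This follows because $\CAR$ of an infinite-dimensional separable space is the UHF algebra $M_{2^\infty}$ and the trace generates $R$ \cite{PowersStormer1970}.

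It remains to show that the GNS algebra of the product state is $R\bar\otimes B(\mathcal K)$. The subtlety is the $\mathbb Z_2$-grading, since the two CAR algebras anticommute instead of commuting. To handle it, choose an orthonormal basis of $\mathfrak k_1\oplus\mathfrak k_2$ compatible with the splitting. A Jordan--Wigner/Klein transformation identifies $\CAR(\mathfrak k_1)\hat\otimes\CAR(\mathfrak k_2)$ with an ordinary tensor product $\CAR(\mathfrak k_1)\otimes\CAR(\mathfrak k_2)$. It does so by twisting the odd elements of $\CAR(\mathfrak k_2)$ with the parity unitary of $\mathfrak k_1$. That parity unitary lies in the $\mathfrak k_1$ algebra, and the state is even, so the twist does not change the state. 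The product state on the honest tensor product then has GNS algebra $\pi_1(\CAR(\mathfrak k_1))''\bar\otimes\pi_2(\CAR(\mathfrak k_2))''$. Alternatively, one can apply Araki's criterion in \cref{prop:qf-basic} and the Powers--Størmer classification directly. The spectrum of $C_R$ consists of $\{0,1\}$, with infinite multiplicity on an orthogonal complement, together with $\tfrac12$ with infinite multiplicity. These are the Araki--Woods data of $R_0\otimes\mathrm I_\infty$, namely the $\mathrm{II}_\infty$ factor \cite{ArakiWoods1968}.

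Finally, $R\bar\otimes B(\mathcal K)$ is the hyperfinite $\mathrm{II}_\infty$ factor, and hyperfiniteness is inherited. The main obstacle is making the graded-tensor-product step rigorous, that is, checking that the Klein twist preserves the generated von Neumann algebra up to isomorphism. I would settle it with the evenness of both states, using the standard result that the twisted algebra contains the untwisted one and conversely.
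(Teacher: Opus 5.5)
Your skeleton matches the paper's: reduce via \cref{lem:halfchain}, split $H^2\otimes\bbC^2=(K_u\oplus0)\oplus(uH^2\oplus H^2)$, get $R$ from the trace on $K_u$ and $\B(\cH_P)$ from the Fock part. The gap is in the step you flagged yourself.

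You twist the odd elements of $\CAR(\mathfrak k_2)$ by the parity unitary $\Gamma_1$ of $\mathfrak k_1=K_u\oplus0$, and you assert that $\Gamma_1$ lies in the $\mathfrak k_1$ algebra. That is false. Let $\pi_{1/2}$ be the GNS representation of $\omega_{\frac12\Id}$ on $\CAR(K_u)$ and $R=\pi_{1/2}(\CAR(K_u))''$. For infinite-dimensional $K_u$ the parity $\Theta$ is not inner in $R$, so $\Gamma_1\notin R$, and it is not an element of the $C^*$-algebra either. To see this, suppose a unitary $V\in R$ implemented $\Theta$. Choose $A$ in the algebra of finitely many modes with $\|V-A\|_2<1$, where $\|x\|_2=\tau(x^*x)^{1/2}$. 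Conjugating by $V$ gives $\|V-\Theta(A)\|_2<1$. Now let $s=c(e)+c^*(e)$ with $e\in K_u$ orthogonal to those modes; this is an odd self-adjoint unitary with $sVs=-V$ and $sAs=\Theta(A)$. Conjugating by $s$ gives $\|V+\Theta(A)\|_2<1$. The two estimates together contradict $\|V\|_2=1$. Evenness of the states only makes the parity implementable, not inner, so it cannot close this step. Without $\Gamma_1\in R$ you cannot strip $\Gamma_1$ off $\Gamma_1\otimes\pi_P(c(f_2))$ inside the generated algebra.

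The repair is to twist the other way. Since $\omega_P$ is pure, its GNS representation is irreducible, so its parity implementer $\Gamma_P$ lies in $\pi_P(\CAR(\mathfrak k_2))''=\B(\cH_P)$. Realize the graded product on $\cH_P\otimes\cH_1$ by $c(f_2)\mapsto\pi_P(c(f_2))\otimes\Id$ and $c(f_1)\mapsto\Gamma_P\otimes\pi_{1/2}(c(f_1))$. The generated algebra contains $\B(\cH_P)\otimes\Id$, hence $\Gamma_P\otimes\Id$, hence $\Id\otimes\pi_{1/2}(c(f_1))$, so it equals $\B(\cH_P)\,\ovl\otimes\,R$. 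This is exactly \cref{lem:block-diagonal} with $C_1=P$, which the paper feeds into \cref{prop:constant-sectors} and \cref{cor:IIoo-sector}.

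Your alternative route is sound as it stands. $C_R$ has an orthonormal eigenbasis: eigenvalues $0$ and $1$, each of infinite multiplicity, on $uH^2\oplus H^2$, and $\frac12$ on $K_u$. So \cref{prop:qf-basic}(vi) gives an ITPFI factor which after regrouping is $\B(\ell^2)\,\ovl\otimes\,R$. This route avoids the grading issue altogether, because the Jordan--Wigner strings along a single ordering of an eigenbasis are finite.
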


\begin{proof}
By \cref{lem:halfchain} it suffices to identify the factor of
$\omega_{C_R}$.  The covariance is a projection on the infinite-dimensional
space $uH^2\oplus H^2$ and equals $\frac12\Id$ on the infinite-dimensional
space $K_u$, so \cref{cor:IIoo-sector} applies.
\end{proof}

\subsection{Mixing ratios and type
\texorpdfstring{$\mathrm{III}_\lambda$}{III-lambda}}\label{sec:others}

For type $\mathrm{III}_\lambda$ the Bell pairs are biased.  For a mixing
ratio $0\le\kappa\le1$ put
\begin{equation}\label{eq:C_kappa}
C_{\kappa}
:=
\frac{1}{1+\kappa}
\begin{pmatrix}
\Id & \sqrt{\kappa}\,M_u\\
\sqrt{\kappa}\,M_u^* & \kappa\Id
\end{pmatrix},
\qquad
C_{\kappa,R}
=
\frac{1}{1+\kappa}
\begin{pmatrix}
\Id & \sqrt{\kappa}\,T_u\\
\sqrt{\kappa}\,T_u^* & \kappa\Id
\end{pmatrix}.
\end{equation}
Again $C_\kappa^2=C_\kappa$, so $C_\kappa$ defines a pure
translation-invariant quasifree state, and $\kappa=1$ is the state
\cref{eq:cov_II_oo}.  The decomposition $H^2=K_u\oplus uH^2$ shows that
$C_{\kappa,R}$ is a projection on $uH^2\oplus H^2$ and equals
$(1+\kappa)^{-1}\Id$ on $K_u\oplus0$.  Every vector of $K_u$ therefore
carries one mixed mode with density matrix
$(1+\kappa)^{-1}\diag(1,\kappa)$, up to the order of the basis states,
whose eigenvalue ratio is $\kappa$.

For an even number $\nu=2m$ of species and
$\boldsymbol\kappa=(\kappa_1,\ldots,\kappa_m)\in[0,1]^m$ put
\begin{equation}\label{eq:2m_flavor}
C_{\boldsymbol\kappa}
:=C_{\kappa_1}\oplus\cdots\oplus C_{\kappa_m}
\quad\text{on }L^2(\bbT)\otimes\bbC^{2m},
\end{equation}
and let $\omega_{\boldsymbol\kappa}$ be the corresponding pure
translation-invariant quasifree state.  Each summand, called a block, uses
the same inner function $u$; only the mixing ratio changes.  A block with
$\kappa_j=0$ is the product state in which one flavour is filled and the
other empty, so states of fewer species are included.

\begin{proposition}\label{prop:modular-family-types}
Let $\boldsymbol\kappa\in[0,1]^m$, put
$J_{\boldsymbol\kappa}:=\{j:0<\kappa_j<1\}$ and, if
$J_{\boldsymbol\kappa}\ne\varnothing$, let $G_{\boldsymbol\kappa}$ be the
closed subgroup of the multiplicative group $\bbR_+^\times$ generated by
$\{\kappa_j:j\in J_{\boldsymbol\kappa}\}$.  The half-chain algebra of
$\omega_{\boldsymbol\kappa}$ is a hyperfinite factor of the following type:
\begin{center}
\begin{tabular}{ll}
\toprule
mixing ratios & type\\
\midrule
$\kappa_j=0$ for all $j$ & $\mathrm{I}$\\
$J_{\boldsymbol\kappa}=\varnothing$ and $\kappa_j=1$ for some $j$
& $\mathrm{II}_\infty$\\
$G_{\boldsymbol\kappa}=\{\lambda^n:n\in\z\}$ with $0<\lambda<1$
& $\mathrm{III}_\lambda$\\
$G_{\boldsymbol\kappa}=\bbR_+^\times$ & $\mathrm{III}_1$\\
\bottomrule
\end{tabular}
\end{center}
These cases are exhaustive.
\end{proposition}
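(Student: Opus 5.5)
By \cref{lem:halfchain}, it suffices to identify the factor generated by the quasifree state $\omega_{C_R}$ with $C_R=C_{\boldsymbol\kappa,R}=\bigoplus_j C_{\kappa_j,R}$.

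The plan is to diagonalise $C_R$ explicitly using the decomposition already noted after \cref{eq:C_kappa}. For each block $j$ we split $H^2\otimes\bbC^2=(K_u\oplus0)\oplus(uH^2\oplus H^2)$. On the second summand $C_{\kappa_j,R}$ is a projection. On $K_u\oplus0$ it is the scalar $(1+\kappa_j)^{-1}$. Collecting all blocks, $\hilb_R$ splits as $\mathfrak p\oplus\bigoplus_j\mathfrak k_j$, where $C_R|_{\mathfrak p}$ is a projection and $\mathfrak k_j\cong K_u$ carries the scalar $(1+\kappa_j)^{-1}$. Note that $\dim K_u=\infty$ by \cref{prop:model-spaces}.

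A quasifree state whose covariance is reducible by a decomposition of the one-particle space gives, up to isomorphism, the tensor product of the corresponding factors \cite{PowersStormer1970,ArakiWoods1968}. The parity subtleties of the graded tensor product do not change the type, since all pieces are even. We then analyse each piece.
\begin{itemize}
\item The piece on $\mathfrak p$ is a pure (Fock-type) state and contributes a type I factor.
\item For $\kappa_j=0$, the scalar is $1$, so $C_R|_{\mathfrak k_j}$ is also a projection and contributes type I.
\item For $\kappa_j=1$, the scalar is $\tfrac12$ on an infinite-dimensional space, giving the hyperfinite $\mathrm{II}_1$ factor (the tracial CAR state). Combined with the infinite-dimensional type I part $\mathfrak p$, this yields $\mathrm{II}_\infty$; this is the content of \cref{cor:IIoo-sector}.
\item For $0<\kappa_j<1$, a scalar covariance $(1+\kappa_j)^{-1}$ on countably many modes is an infinite tensor product of identical $2\times2$ density matrices $\diag(1,\kappa_j)/(1+\kappa_j)$, i.e.\ a Powers factor $R_{\kappa_j}$ of type $\mathrm{III}_{\kappa_j}$.
\end{itemize}

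The tensor product of these pieces is an ITPFI (Araki--Woods) factor. Its type is read off from the asymptotic ratio set $r_\infty$. Since each ratio $\kappa_j$, $j\in J_{\boldsymbol\kappa}$, appears in infinitely many tensor factors with fixed nondegenerate eigenvalues, the Araki--Woods computation gives $r_\infty=G_{\boldsymbol\kappa}\cup\{0\}$. This yields the following cases.
\begin{itemize}
\item If $G_{\boldsymbol\kappa}=\{\lambda^n\}$, the factor is of type $\mathrm{III}_\lambda$.
\item If $G_{\boldsymbol\kappa}=\bbR_+^\times$, it is of type $\mathrm{III}_1$.
\item If $J_{\boldsymbol\kappa}=\varnothing$, then $r_\infty\subset\{1\}$ and the factor is semifinite: type I if no $\kappa_j=1$, and $\mathrm{II}_\infty$ otherwise, by tensoring with the infinite type I part.
\end{itemize}
Hyperfiniteness is automatic for ITPFI factors.

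Exhaustiveness is a fact about closed subgroups. A closed subgroup of $\bbR_+^\times\cong\bbR$ is $\{1\}$, cyclic, or everything. It cannot be $\{1\}$ when $J_{\boldsymbol\kappa}\neq\varnothing$, since it contains some $\kappa_j\neq1$.

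The main obstacle is justifying $r_\infty=G_{\boldsymbol\kappa}\cup\{0\}$, and in particular that the whole closed group appears. I would handle it through the Araki--Woods criterion: $\lambda\in r_\infty$ iff for every tail of the product there are disjoint finite subfamilies whose eigenvalue ratios approximate $\lambda$ with non-summable weight. Since each $\kappa_j^{\pm1}$ occurs infinitely often with a fixed weight, products of these ratios are realised in every tail. Alternatively, one can cite the standard result that $\bigotimes_j R_{\kappa_j}$ is of type $\mathrm{III}_\lambda$ for the generator $\lambda$ of the closed group, and is of type $\mathrm{III}_1$ when the $\log\kappa_j$ are rationally independent. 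We then note that tensoring with a type I or $\mathrm{II}$ factor preserves a type $\mathrm{III}_\lambda$ type.
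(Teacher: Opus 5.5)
Your proposal is correct and follows the paper's proof essentially step for step. The paper likewise splits $C_{\boldsymbol\kappa,R}$ into a projection part, which absorbs the blocks with $\kappa_j=0$, and scalar sectors $(1+\kappa_j)^{-1}\Id$ on copies of $K_u$. It then invokes \cref{prop:constant-sectors}, whose proof is your Araki--Woods asymptotic-ratio-set argument together with \cref{cor:IIoo-sector} for the $\mathrm{II}_\infty$ case.

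One step needs a better justification: the evenness of the pieces is not what makes the tensor-product decomposition valid. Instead, apply \cref{prop:qf-basic}(vi) to the whole non-projection part at once, which has an orthonormal eigenbasis and so gives the ITPFI factor directly. Then \cref{lem:block-diagonal} splits off the projection part, because its parity is implemented inside $\B(\cH_P)$.
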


\begin{proof}
By the discussion above, the right-half covariance is a projection on an
infinite-dimensional subspace, direct sum of the operators
$(1+\kappa_j)^{-1}\Id$ on $m$ copies of $K_u$.  Copies with $\kappa_j=0$
belong to the projection part.  \Cref{prop:constant-sectors}, with
$t_j=(1+\kappa_j)^{-1}$ and eigenvalue ratios
$\min\{t_j/(1-t_j),(1-t_j)/t_j\}=\kappa_j$, gives the type.  A closed
subgroup of $\bbR_+^\times$ different from $\{1\}$ is either of the form
$\lambda^\z$ or all of $\bbR_+^\times$, so the cases are exhaustive.
\end{proof}

\begin{corollary}\label{cor:modular-family-examples}
Two species give type I for $\boldsymbol\kappa=(0)$, type
$\mathrm{II}_\infty$ for $\boldsymbol\kappa=(1)$ and type
$\mathrm{III}_\lambda$ for $\boldsymbol\kappa=(\lambda)$, $0<\lambda<1$.
Four species give type $\mathrm{III}_1$, for instance for
$\boldsymbol\kappa=(e^{-1},e^{-\sqrt2})$, since $\z+\sqrt2\,\z$ is dense
in $\bbR$.
\end{corollary}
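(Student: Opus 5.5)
The corollary follows by applying \cref{prop:modular-family-types} to each listed choice of $\boldsymbol\kappa$. In each case we compute the index set $J_{\boldsymbol\kappa}$ and the group $G_{\boldsymbol\kappa}$ and read the type off the table. Hyperfiniteness and the factor property come with that proposition, so only the case identification is needed.

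For two species ($m=1$) we take the three choices in turn.
\begin{enumerate}[label=\textup{(\alph*)}]
\item If $\boldsymbol\kappa=(0)$, then all $\kappa_j=0$ and the type is $\mathrm{I}$. Directly, $C_{0,R}$ is a projection, so $\omega_{C_R}$ is pure.
\item If $\boldsymbol\kappa=(1)$, then $J_{\boldsymbol\kappa}=\varnothing$ and $\kappa_1=1$, giving type $\mathrm{II}_\infty$. This agrees with \cref{prop:IIoo}.
\item If $\boldsymbol\kappa=(\lambda)$ with $0<\lambda<1$, then $J_{\boldsymbol\kappa}=\{1\}$. The closed subgroup of $\bbR_+^\times$ generated by $\lambda$ is $\lambda^{\z}$, since this cyclic group is discrete (because $\log\lambda\ne0$) and hence closed. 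The type is therefore $\mathrm{III}_\lambda$.
\end{enumerate}

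For four species ($m=2$) take $\boldsymbol\kappa=(e^{-1},e^{-\sqrt2})$. Both entries lie in $(0,1)$, so $J_{\boldsymbol\kappa}=\{1,2\}$. Passing to logarithms via the isomorphism $\log:\bbR_+^\times\to\bbR$, the subgroup generated by $\kappa_1$ and $\kappa_2$ corresponds to $\z+\sqrt2\,\z$.

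The only step with real content is showing that $\z+\sqrt2\,\z$ is dense in $\bbR$. Suppose not. A proper closed subgroup of $\bbR$ is of the form $c\z$ for some $c\ge0$. Then $1=pc$ and $\sqrt2=qc$ for integers $p,q$, which forces $\sqrt2=q/p\in\mathbb Q$, a contradiction. Equivalently, Kronecker's theorem gives that $\{n\sqrt2\bmod1\}$ is dense.

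Taking closures, $G_{\boldsymbol\kappa}=\exp\bigl(\ovl{\z+\sqrt2\,\z}\bigr)=\bbR_+^\times$, and the table gives type $\mathrm{III}_1$.
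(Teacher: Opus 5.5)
Your proposal is correct and takes the same route as the paper: the corollary is read off case by case from \cref{prop:modular-family-types}, and the only nontrivial input is the density of $\z+\sqrt2\,\z$ in $\bbR$, which the paper simply asserts. Your checks that $\lambda^{\z}$ is closed and that $\z+\sqrt2\,\z$ is dense (via the classification of closed subgroups of $\bbR$) correctly supply the details the paper leaves implicit.
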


Type $\mathrm{III}_1$ is already realized by one species
(\cref{prop:one-species}), but the four-species states of
\cref{cor:modular-family-examples} are those whose entropy can grow
arbitrarily slowly (\cref{sec:entropy}).

\section{Parent Hamiltonians}\label{sec:parent}

\subsection{Quadratic Hamiltonians and their ground states}
\label{sec:ground-states}

Let $h$ be a bounded self-adjoint operator on $\hilb=\ell^2(\z)\otimes
\bbC^\nu$.  It generates the quasifree dynamics
$\tau_t(c^*(f))=c^*(e^{ith}f)$ on $\CAR(\hilb)$ \cite{bratteli2013operator2},
with generator $\delta_h$ satisfying
\begin{equation}\label{eq:delta-h}
    \delta_h\bigl(c^*(f)\bigr)=i\,c^*(hf),
    \qquad
    \delta_h\bigl(c(f)\bigr)=-i\,c(hf).
\end{equation}
The $*$-algebra generated by all $c(f)$ is $\tau$-invariant and consists
of entire analytic elements, so it is a core for the generator
\cite[Cor.~3.1.7]{bratteli2013operator1}.  Since
$f\mapsto\delta_h(c(f))$ is norm continuous, the $*$-algebra
$\A_{\rm loc}$ of local elements, the polynomials in finitely many
$c_{x,a}$, is a core as well.  If $h=M_{h(z)}$ is translation invariant
with Bloch Hamiltonian $h(z)=\sum_rh_rz^r$, then
$h=\sum_{x,y}c_x^*h_{x-y}c_y$ in the usual notation
$c_x=(c_{x,1},\dots,c_{x,\nu})$, and for $\sum_r\|h_r\|<\infty$ the
generator acts on local elements as the convergent sum
$\delta_h(A)=i\sum_{x,y}[c_x^*h_{x-y}c_y,A]$.

A state $\omega$ is a ground state if
$-i\,\omega(A^*\delta_h(A))\ge0$ for every $A\in\A_{\rm loc}$, or
equivalently for every $A$ in the domain of the generator
\cite[Def.~5.3.18, Prop.~5.3.19]{bratteli2013operator2}.  The Fermi energy
is $0$, and $h$ is gapless if $0\in\Sp(h)$.

\begin{proposition}\label{prop:ground-states}
Let $h$ be bounded and self-adjoint with $\ker h=\{0\}$, and
$P:=\chi_{(-\infty,0)}(h)$.
\begin{enumerate}[label=\textup{(\roman*)}]
    \item $\omega_P$ is the unique ground state of the quasifree dynamics
    generated by $h$.
    \item If moreover the spectral projections $\chi_{(0,\epsilon)}(h)$
    and $\chi_{(-\epsilon,0)}(h)$ are nonzero for every $\epsilon>0$, then
    $\omega_P$ is the only even state $\omega$ with
    $-i\,\omega(A^*\delta_h(A))\ge0$ for all even $A\in\A_{\rm loc}$.
\end{enumerate}
\end{proposition}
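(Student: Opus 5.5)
Part (i) is the classical statement that the Fermi sea is the unique ground state of a quasifree dynamics without zero modes, and I would prove it directly from the ground-state inequality. First, $\omega_P$ is a ground state. In its GNS representation it is the Fock-type state obtained by particle--hole transformation: the modes in $P\hilb$ are filled and those in $(\Id-P)\hilb$ are empty. The dynamics is implemented by the second-quantised positive operator $d\Gamma(|h|)$, which annihilates the vacuum. Hence $-i\,\omega_P(A^*\delta_h(A))=\langle\pi(A)\Omega,d\Gamma(|h|)\pi(A)\Omega\rangle\ge0$, which I would verify first on polynomials in the $c(f)$ and then extend via the core property stated above.

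For uniqueness, let $\omega$ be any ground state. Take $f$ in the range of a spectral projection $\chi_{(\epsilon,\infty)}(h)$ and set $A=c^*(f)$. Then $-i\,\omega(c(f)\,i\,c^*(hf))\ge0$, i.e. $\omega(c(f)c^*(hf))\ge0$. Better is the standard argument \cite[Prop.~5.3.19]{bratteli2013operator2}: a ground state satisfies $\omega(A^*\hat B)=0$ whenever the Arveson spectrum of $B$ lies in $(-\infty,0)$. With $B=c^*(f)$, $f\in\chi_{(0,\infty)}(h)\hilb$, the Arveson spectrum of $c^*(f)$ is contained in $\Sp(h|_{\ldots})\subset(0,\infty)$. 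Conversely $c(g)$ with $g\in P\hilb$ has spectrum in $(0,\infty)$, which gives $\omega(c(f)c^*(f))=0$ for $f\in(\Id-P)\hilb$. Thus the vectors $c^*(f)\Omega_\omega$ vanish for $f\in(\Id-P)\hilb$, and likewise $c(g)\Omega_\omega=0$ for $g\in P\hilb$, after correcting signs: $\delta_h(c^*(f))=ic^*(hf)$, so $c^*(f)$ raises energy when $hf>0$ and $c(g)$ raises energy when $hg<0$. Since $\ker h=0$, $\hilb=P\hilb\oplus(\Id-P)\hilb$, so these annihilation relations hold on a total set. Every monomial can then be normal-ordered so that its expectation reduces to the determinant formula \cref{eq:quasifree} with covariance $P$. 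Hence $\omega=\omega_P$.

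For (ii) only even $A$ are allowed, so I can no longer use $A=c^*(f)$. I would instead use the even elements $A=c^*(f)c(g)$. The plan is to show that any such even ground state satisfies $\omega(c^*(g)c(f)\,c^*(f)c(g))=0$ for $f\in(\Id-P)\hilb$ and $g\in P\hilb$ with suitable spectral support. Choosing $f,g$ with spectra in $(0,\epsilon)$ and $(-\epsilon,0)$ respectively, both pieces are nonzero by hypothesis. The energy of $c^*(f)c(g)$ is then positive, and its spectral support can be pushed arbitrarily close to $0$. The goal is to derive that $c(g')\Omega_\omega$ and $c^*(f')\Omega_\omega$ vanish for all $g'\in P\hilb$ and $f'\in(\Id-P)\hilb$, not only pairwise.

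The idea is to combine pairs. If $\omega(c^*(f')c(f'))=\alpha>0$ for some $f'\in(\Id-P)\hilb$, then pairing $f'$ with a low-energy $g$ gives an even operator $c^*(g)c(f')$, of strictly negative energy, whose norm on $\Omega_\omega$ is bounded below. This needs a clustering or CAR estimate such as $\|c^*(g)c(f')\Omega\|^2\ge\|g\|^2\alpha-\ldots$, using that $c(g)c^*(g)+c^*(g)c(g)=\|g\|^2$. That contradicts the ground-state inequality. This is where the hypothesis that spectrum accumulates at $0$ from both sides is used, and it is the main obstacle: the even sector only controls two-point quantities, so I must extract one-particle annihilation statements from pair statements. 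Once those are obtained, the argument finishes as in (i), since even states are determined by even correlations, and the odd ones vanish by evenness.
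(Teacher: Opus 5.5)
Your part (i) follows the paper's route: a positive implementer for existence, then the Arveson spectrum together with \cref{lem:fock} for uniqueness. However, you state the annihilation relations backwards. It is the operators with Arveson spectrum in $(-\infty,-\epsilon]$ that annihilate $\Omega_\omega$. So one gets $\pi_\omega(c(f))\Omega_\omega=0$ for $f\in\chi_{[\epsilon,M]}(h)\hilb$ and $\pi_\omega(c^*(g))\Omega_\omega=0$ for $g\in\chi_{[-M,-\epsilon]}(h)\hilb$, and then concludes by density and $\ker h=\{0\}$. Your version, with $c^*(f)\Omega_\omega=0$ on $\Ran(\Id-P)$ and $c(g)\Omega_\omega=0$ on $\Ran P$, would characterize $\omega_{\Id-P}$.

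The same confusion derails part (ii), where there is a genuine gap. Your first plan aims at a false identity. For $f\in\Ran(\Id-P)$ and $g\in\Ran P$, the operator $c^*(f)c(g)$ raises energy, and already in $\omega_P$ one has $\omega_P\bigl(c^*(g)c(f)c^*(f)c(g)\bigr)=\|f\|^2\|g\|^2\neq0$.

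Your second idea is the right one, but it stops exactly at the decisive step. Put $v=\pi_\omega(c(f'))\Omega_\omega$ with $f'\in\chi_{[2\epsilon,M]}(h)\hilb$. Knowing only that the even, energy-lowering operator $c^*(g)c(f')$ annihilates $\Omega_\omega$ gives $\|c(g)v\|^2=\|g\|^2\|v\|^2$, which is no contradiction. No clustering estimate is available either, since there is no translation invariance in this proposition.

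The missing observation is that $c(g)c(f')$ is also even and energy-lowering. For $0\neq g\in\chi_{(0,\epsilon)}(h)\hilb$, both $\Sp_\tau(c^*(g))$ and $\Sp_\tau(c(g))$ lie in $[-\epsilon,\epsilon]$, while $\Sp_\tau(c(f'))\subset[-M,-2\epsilon]$. Hence both products have Arveson spectrum in $(-\infty,-\epsilon]$. Therefore $c^*(g)v=c(g)v=0$, and $\|g\|^2v=\{c(g),c^*(g)\}v=0$, that is, $\omega(c^*(f')c(f'))=0$. The accumulation of spectrum at $0$ is what supplies such $g$ as $\epsilon\to0$. The negative-energy side uses $c(g)c^*(f)$ and $c^*(g)c^*(f)$ with $g\in\chi_{(-\epsilon,0)}(h)\hilb$.

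You also need to explain why the Arveson argument applies to these non-local even operators, when the hypothesis is imposed only on even local $A$. The even local elements form a core for the generator of $\tau$ on $\CAR(\hilb)^+$. So $\omega|_{\CAR(\hilb)^+}$ is a ground state of $(\CAR(\hilb)^+,\tau)$, and its positive implementer can be used there. With these two annihilation statements, \cref{lem:fock} finishes as in (i).
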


The proof is given in \cref{app:groundstates}.  For translation-invariant
$h$, $\ker h=\{0\}$ means that $\{z:\det h(z)=0\}$ is a null set; the
hypothesis of (ii) holds when the bands take values in $(0,\epsilon)$ and
in $(-\epsilon,0)$ on sets of positive measure for every $\epsilon>0$.
Statement (ii) is used in \cref{sec:JWA} for the spin chain, where only
even fermionic observables have a local meaning.

Different blocks of \cref{eq:2m_flavor} are uncorrelated, so the parent
Hamiltonians below are direct sums over blocks, and it suffices to write
them for one block.

\subsection{Gapped parent Hamiltonians are long-ranged}\label{sec:gapped}

\begin{proposition}\label{prop:gapped-long-range}
Let $C$ be a translation-invariant projection whose half-chain factor is
not of type I, and let $h=M_{h(z)}$ be bounded with $0\notin\Sp(h)$ and
$\chi_{(-\infty,0)}(h)=C$.  Then $\sum_{r\in\z}|r|\,\|h_r\|_1=\infty$.
\end{proposition}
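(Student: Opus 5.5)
We argue by contraposition: assuming $\sum_r|r|\,\|h_r\|_1<\infty$, we show that $(\Id-\Pi)C\Pi\in\cS_2$. By \cref{lem:halfchain} this forces the half-chain factor to be of type I, contradicting the hypothesis. The strategy is to write the Fermi projection as a norm-convergent contour integral of resolvents of $h$. Since $0\notin\Sp(h)$, there is a gap $(-g,g)$ with $g>0$ free of spectrum. Let $\Gamma$ be a bounded contour in $\bbC$ that encloses $\Sp(h)\cap(-\infty,-g]$, avoids the rest of $\Sp(h)$, and crosses the real axis at $-g/2$ (and at some point below $-\|h\|$). Then
\begin{equation*}
C=\chi_{(-\infty,0)}(h)=\frac{1}{2\pi i}\oint_\Gamma (\zeta-h)^{-1}\,\rmd\zeta ,
\end{equation*}
and, since $\Pi$ and $\Id-\Pi$ are bounded,
\begin{equation*}
(\Id-\Pi)C\Pi=\frac{1}{2\pi i}\oint_\Gamma (\Id-\Pi)(\zeta-h)^{-1}\Pi\,\rmd\zeta .
\end{equation*}
Because $\cS_2$ is a Banach space and $\Gamma$ is compact, it suffices to show that $\zeta\mapsto (\Id-\Pi)(\zeta-h)^{-1}\Pi$ is a continuous map into $\cS_2$ with uniformly bounded Hilbert--Schmidt norm on $\Gamma$.

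The key step is a commutator identity that trades the Hankel part of the resolvent for the Hankel part of $h$. Since $(\Id-\Pi)\Pi=0$, we have $(\Id-\Pi)[R,\Pi]=(\Id-\Pi)R\Pi$, and $[R,\Pi]=R[\Pi,h]R$ for $R=(\zeta-h)^{-1}$. Hence
\begin{equation*}
(\Id-\Pi)(\zeta-h)^{-1}\Pi=(\Id-\Pi)\,R\,[\Pi,h]\,R\,\Pi ,
\end{equation*}
so $\|(\Id-\Pi)R\Pi\|_2\le\|R\|^2\,\|[\Pi,h]\|_2$. The commutator splits as $[\Pi,h]=\Pi h(\Id-\Pi)-(\Id-\Pi)h\Pi$, and these two off-diagonal pieces are the Hankel operators of $h$ and of $h^*=h$ in the two directions. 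By \cref{lem:toeplitz-hankel}(i) and its mirror image, their squared Hilbert--Schmidt norms are $\sum_{r\ge1}r\|\hat h_{\mp r}\|_2^2$. Using $\|\cdot\|_2\le\|\cdot\|_1$ on $\nu\times\nu$ matrices and the boundedness $\|h_r\|_2\le\|h\|$, this gives $\sum_r|r|\|h_r\|_2^2\le \|h\|\sum_r|r|\|h_r\|_1<\infty$. Meanwhile $\|R\|=\operatorname{dist}(\zeta,\Sp(h))^{-1}$ is uniformly bounded on $\Gamma$. Continuity of the integrand in $\cS_2$ follows from the same identity together with the norm continuity of $\zeta\mapsto R$.

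Combining these estimates gives $(\Id-\Pi)C\Pi\in\cS_2$, and \cref{lem:halfchain} then says the factor is of type I, which is the desired contradiction. I expect no serious obstacle. The only points needing care are the choice of contour, so that $\Gamma$ separates the negative from the positive spectrum through the gap, and checking that the Bochner integral in $\cS_2$ agrees with the strong-operator integral defining $C$. The latter holds because the two integrals agree on matrix elements.
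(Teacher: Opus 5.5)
Your argument is correct and is essentially the paper's proof, with the same Riesz-projection contour and resolvent commutator identity; the only difference is that the paper shows $[C,\Pi]\in\cS_1$ from the trace-norm bound $\|(\Id-\Pi)h\Pi\|_1\le\sum_{r\ge1}r\|h_{-r}\|_1$, while you work directly in $\cS_2$, which in fact only uses $\sum_r|r|\,\|h_r\|_2^2<\infty$. Two harmless slips: the identity is $[R,\Pi]=R[h,\Pi]R$ (your sign is flipped, which does not affect the norm bound), and $\|h_r\|_2\le\|h\|$ can fail by a factor $\sqrt\nu$, so instead use $\|h_r\|_2^2\le\|h_r\|\,\|h_r\|_1\le\|h\|\,\|h_r\|_1$ with the operator norm bound $\|h_r\|\le\|h\|$.
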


\begin{proof}
Suppose $\sum_r|r|\|h_r\|_1<\infty$.  As in \cref{lem:toeplitz-hankel},
$\|(\Id-\Pi)h\Pi\|_1\le\sum_{r\ge1}r\|h_{-r}\|_1$ and likewise for
$\Pi h(\Id-\Pi)$, so $[h,\Pi]\in\cS_1$.  Let $\Gamma$ be a contour in the
resolvent set enclosing $\Sp(h)\cap(-\infty,0)$ and crossing the real
axis at $0$.  Then
\begin{equation}
    [C,\Pi]=\frac{1}{2\pi i}\oint_\Gamma
    (\zeta-h)^{-1}[h,\Pi](\zeta-h)^{-1}\,\rmd\zeta\in\cS_1,
\end{equation}
hence $(\Id-\Pi)C\Pi=(\Id-\Pi)[C,\Pi]\Pi\in\cS_1\subset\cS_2$, and the
half-chain factor is of type I by \cref{lem:halfchain}.
\end{proof}

For the states of \cref{sec:constructing} the natural gapped parent is the
flat-band Hamiltonian
\begin{equation}\label{eq:flat_parent}
\Id-2C_{\kappa}
=
\begin{pmatrix}
    m_\kappa\Id & -t_\kappa M_u\\
    -t_\kappa M_u^* & -m_\kappa\Id
\end{pmatrix},
\qquad
m_{\kappa}:=\frac{\kappa-1}{1+\kappa},
\quad
t_{\kappa}:=\frac{2\sqrt{\kappa}}{1+\kappa}.
\end{equation}
Its spectrum is $\{\pm1\}$, so $\omega_\kappa$ is its unique ground state
by \cref{prop:ground-states}.  Writing $u(z)=\sum_{n\ge0}u_nz^n$ and
labelling the two flavours by $1$ and $2$, the second-quantized form is
\begin{equation}\label{eq:long_range_Hamiltonian}
H^{\rm flat}_{\kappa}
=
m_{\kappa}\sum_{x\in\z}
\bigl(c^*_{x,1}c_{x,1}-c^*_{x,2}c_{x,2}\bigr)
-
t_{\kappa}
\sum_{x\in\z}\sum_{n\geq0}
\bigl(u_n c^*_{x+n,1}c_{x,2}+\overline{u_n}\,c^*_{x,2}c_{x+n,1}\bigr),
\end{equation}
so $u_n$ is the inter-flavour hopping amplitude at distance $n$.  By
\cref{prop:gapped-long-range} every gapped parent of $\omega_\kappa$ with
$\kappa\ne0$ has $\sum_r|r|\|h_r\|=\infty$; for
\cref{eq:long_range_Hamiltonian} even $\sum_n|u_n|=\infty$
(\cref{lem:blaschke_long_range}), so the hopping is not absolutely
summable.

\subsection{Almost-local gapless parent Hamiltonians}\label{sec:almost-local}

The long range of $u_n$ comes from the singularity of $u$ at $z=1$, where
the zeros accumulate.  We damp it with the smooth function
\begin{equation}\label{eq:q-def}
    q(z):=\exp\bigl(-|1-z|^{-2}\bigr)\quad(z\ne1),
    \qquad q(1):=0 ,
\end{equation}
which is positive on $\bbT\setminus\{1\}$ and vanishes to infinite order
at $z=1$.

\begin{lemma}\label{lem:mollifier}
Let $F:\bbT\setminus\{1\}\to M_n$ be smooth with
$\|\partial_\theta^rF(e^{i\theta})\|\le C_r|\theta|^{-M_r}$ for
$0<|\theta|\le\pi$ and every $r\ge0$.  Then $qF$ extends to a smooth
function on $\bbT$ that vanishes to infinite order at $1$, and its Fourier
coefficients decay faster than any power.
\end{lemma}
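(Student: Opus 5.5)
The plan is to work in the angular coordinate $\theta\in(-\pi,\pi]$, $z=e^{i\theta}$, and to prove that every derivative of $qF$ tends to $0$ as $\theta\to0$. Smoothness across $\theta=0$ and the infinite-order vanishing then follow from a standard argument, and rapid Fourier decay follows by integrating by parts.

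First we treat $q$ alone. Since $|1-e^{i\theta}|^2=2-2\cos\theta=4\sin^2(\theta/2)$, we have $q(e^{i\theta})=\exp\bigl(-1/(4\sin^2(\theta/2))\bigr)$. This is $\exp(-g(\theta))$ with $g(\theta)=\tfrac14\sin^{-2}(\theta/2)$ smooth on $0<|\theta|\le\pi$. An induction using Faà di Bruno shows that $\partial_\theta^r q=P_r\,q$, where $P_r$ is a polynomial in $g',\dots,g^{(r)}$. Each $g^{(j)}$ is $O(|\theta|^{-2-j})$, so $|\partial_\theta^r q(e^{i\theta})|\le C'_r|\theta|^{-N_r}\exp\bigl(-c/\theta^2\bigr)$ for some $N_r$ and $c>0$; one can take $c=1$, because $4\sin^2(\theta/2)\le\theta^2$.

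Next comes the product. By Leibniz, $\partial_\theta^r(qF)=\sum_{j\le r}\binom rj\partial^{j}q\,\partial^{r-j}F$. Combining the bound on $\partial^j q$ with the hypothesis $\|\partial^{r-j}F\|\le C_{r-j}|\theta|^{-M_{r-j}}$ gives $\|\partial_\theta^r(qF)\|\le D_r|\theta|^{-K_r}e^{-1/\theta^2}$. This tends to $0$ as $\theta\to0$ for every $r$, since $e^{-1/\theta^2}$ beats any power.

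Now define $qF(1):=0$. By the mean value theorem applied inductively, a function that is smooth on a punctured neighbourhood and whose derivatives of every order have limit $0$ at the puncture extends to a $C^\infty$ function with all derivatives $0$ there. So $qF\in C^\infty(\bbT,M_n)$ and vanishes to infinite order at $1$.

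Finally, for $r\ne0$ and any $k$, integrating by parts $k$ times on the circle gives
\[
\widehat{(qF)}_r=(ir)^{-k}\int_\bbT \partial_\theta^k(qF)\,z^{-r}\,\rmd m(z),
\]
with no boundary terms because the function is smooth and periodic. Hence $\|\widehat{(qF)}_r\|\le\|\partial^k_\theta(qF)\|_\infty|r|^{-k}$, which is the claimed superpolynomial decay.

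The only mildly delicate step is the bookkeeping for $\partial^r_\theta q$: one must check that only polynomially many inverse powers of $\theta$ appear while the Gaussian-type factor $e^{-1/\theta^2}$ persists. The inductive form $\partial^r q=P_r q$ handles this cleanly.
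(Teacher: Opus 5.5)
Your proof is correct and follows the paper's argument essentially step for step. The shared chain is: write $q=e^{-g}$ with $g\ge\theta^{-2}$, bound $g^{(j)}=O(|\theta|^{-2-j})$, use Fa\`a di Bruno to get $|\partial_\theta^r q|\le C|\theta|^{-N_r}e^{-1/\theta^2}$, apply Leibniz, extend by zero, and conclude rapid Fourier decay. You simply spell out the mean-value extension argument and the integration by parts, which the paper cites as standard.
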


\begin{proof}
Write $q(e^{i\theta})=e^{-g(\theta)}$ with
$g(\theta)=(4\sin^2(\theta/2))^{-1}\ge\theta^{-2}$.  The derivatives of $g$
obey $|g^{(j)}(\theta)|\le C_j|\theta|^{-2-j}$, so by the Faà di Bruno
formula \cite[p.~139]{Comtet1974}
$|\partial_\theta^jq(e^{i\theta})|\le C_j'|\theta|^{-3j}e^{-1/\theta^2}$.
By the Leibniz rule every derivative of $qF$ tends to $0$ as
$\theta\to0$, so $qF$ extends smoothly by $0$.  Smooth functions on $\bbT$
have rapidly decreasing Fourier coefficients \cite{Katznelson2004}.
\end{proof}

\begin{theorem}\label{thm:almost-local}
Let $u$ be as in \cref{eq:standing} and
$\boldsymbol\kappa\in[0,1]^m$.  The Bloch Hamiltonian
\begin{equation}\label{eq:almost_local_parent}
    h_{\boldsymbol\kappa}(z):=q(z)\bigl(\Id-2C_{\boldsymbol\kappa}(z)\bigr)
\end{equation}
is smooth on $\bbT$, so its hopping amplitudes satisfy
$\|(\hat h_{\boldsymbol\kappa})_r\|\le C_N(1+|r|)^{-N}$ for every $N$.  Its
bands are $\pm q(z)$, each of multiplicity $m$, it is gapless, and
$\omega_{\boldsymbol\kappa}$ is its unique ground state; the hypotheses of
\cref{prop:ground-states}(ii) are satisfied.
\end{theorem}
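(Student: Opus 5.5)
We apply \cref{lem:mollifier} to $F:=\Id-2C_{\boldsymbol\kappa}(z)$, read off the spectrum algebraically from the flat-band structure, and then invoke \cref{prop:ground-states}.

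\textbf{Smoothness.} It suffices to treat one block, since $h_{\boldsymbol\kappa}$ is block diagonal. By \cref{eq:flat_parent} the entries of $\Id-2C_\kappa(z)$ are the constants $\pm m_\kappa$ and $-t_\kappa u(z)$, $-t_\kappa\overline{u(z)}$. The zeros $a_n$ are real and accumulate only at $1$. Hence $u$ is analytic on a neighbourhood of $\bbT\setminus\{1\}$, because the product converges locally uniformly off the closure of the zero set.

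The main obstacle is the derivative bound $|\partial_\theta^r u(e^{i\theta})|\le C_r|\theta|^{-M_r}$ required by \cref{lem:mollifier}. I would obtain it from Cauchy's estimates on a disc around $e^{i\theta}$ of radius comparable to $|\theta|$. On such a disc the product $\prod_n b_{a_n}$ converges. I would bound it by $|b_a(w)|\le 1+O\bigl((1-a)/|1-aw|\bigr)$, together with $|1-a_nw|\gtrsim|\theta|$ on a disc of radius $|\theta|/2$ around $e^{i\theta}$. This gives
\[
\log|u(w)|\le\sum_n C\delta_n/|\theta|\lesssim|\theta|^{-1},
\]
which would yield an exponential bound $e^{C/|\theta|}$ rather than a polynomial one.

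If that happens, I would rely on the strength of $q$ instead. Since $q\sim e^{-1/\theta^2}$, it dominates $e^{C/|\theta|}|\theta|^{-3r}$, and the proof of \cref{lem:mollifier} goes through verbatim with $M_r$ replaced by such exponential bounds. Better still, I would refine the estimate on $|b_a(w)|$. For $w$ slightly outside the disc we have $|b_a(w)|$ close to $1$, and choosing the radius $\rho=|\theta|^2$ gives $\log\sup|u|\lesssim\rho/|\theta|^2=O(1)$. Cauchy's estimates then give $|u^{(r)}|\le C_r|\theta|^{-2r}$, which is polynomial.

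With these bounds, \cref{lem:mollifier} gives smoothness and rapid decay of the Fourier coefficients, hence $\|(\hat h_{\boldsymbol\kappa})_r\|\le C_N(1+|r|)^{-N}$.

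\textbf{Spectrum and ground state.} Since $C_{\boldsymbol\kappa}(z)$ is a rank-$m$ projection for a.e.\ $z$ (rank one in each block), $\Id-2C_{\boldsymbol\kappa}(z)$ has eigenvalues $\pm1$, each with multiplicity $m$. The bands are therefore $\pm q(z)$. Because $q>0$ off the single point $1$, the set $\{\det h=0\}$ is null and $\ker h=\{0\}$. Moreover $\chi_{(-\infty,0)}(h)=M_{C_{\boldsymbol\kappa}}$, since $-q(z)$ is attained exactly on $\Ran C_{\boldsymbol\kappa}(z)$. By \cref{prop:ground-states}(i), $\omega_{\boldsymbol\kappa}$ is the unique ground state.

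\textbf{Gaplessness and hypothesis (ii).} We have $q(e^{i\theta})\to0$ continuously as $\theta\to0$. Hence for every $\epsilon>0$ the set $\{0<q<\epsilon\}$ is a neighbourhood of $1$ with the point removed, which has positive measure. The spectral projections $\chi_{(0,\epsilon)}(h)$ and $\chi_{(-\epsilon,0)}(h)$ are multiplication by the eigenprojections restricted to this set, so they are nonzero. This gives $0\in\Sp(h)$ and verifies the hypothesis of \cref{prop:ground-states}(ii).
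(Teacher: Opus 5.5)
Your proof is correct. The spectral part follows the paper exactly: blockwise rank-one projections, bands $\pm q$, $\ker h=\{0\}$, negative spectral projection $C_{\boldsymbol\kappa}$, positive-measure sets $\{0<q<\epsilon\}$, then \cref{prop:ground-states}.

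The one genuine difference is the derivative bound on $u$ that feeds \cref{lem:mollifier}.

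\textbf{The paper's route.} It quotes \cref{lem:blaschke-derivatives}, which stays on the circle. It writes $u(e^{i\theta})=\pm e^{iw(\theta)}$, where $w'=\sum_nP_{a_n}$ is a sum of Poisson kernels, and bounds $|w^{(j)}|\le C_j\Delta|\theta|^{-1-j}$. It then assembles $\partial_\theta^ru$ by Fa\`a di Bruno, getting $C_r(1+\Delta)^r|\theta|^{-2r}$.

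\textbf{Your route.} You continue $u$ analytically across $\bbT\setminus\{1\}$ and use Cauchy estimates. Your refined version is sound. On $D(e^{i\theta},c\theta^2)$ with $c\le1/(2\pi)$ the disc stays inside $D(e^{i\theta},|\theta|/2)$, so $|1-aw|\ge c'|\theta|$ uniformly in $a\in[0,1)$. The identity
\[
|b_a(w)|^2-1=\frac{(1-a^2)(|w|^2-1)}{|1-aw|^2}
\]
is the precise form of your ``$|b_a(w)|$ close to $1$''. It gives $\log\sup|u|\le C\Delta$, hence $|\partial_\theta^ru(e^{i\theta})|\le C_re^{C\Delta}|\theta|^{-2r}$. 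This is the same power of $|\theta|$ as the paper's bound. The dependence on $\Delta$ is worse, but that does not matter here, and you avoid the Fa\`a di Bruno bookkeeping for the phase.

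Your fallback also works, and it shows the argument is robust. Since $q\le e^{-1/\theta^2}$, the crude bound $e^{C\Delta/|\theta|}|\theta|^{-r}$ from the disc of radius $|\theta|/2$ already suffices. The mollifier absorbs any growth of order $e^{o(\theta^{-2})}$. The only cost is restating \cref{lem:mollifier} with such bounds in place of $|\theta|^{-M_r}$.

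One small imprecision: local uniform convergence of $\prod_nb_{a_n}$ is governed by the poles $1/a_n$ and their accumulation point $1$, not by the zero set. Both lie on the real axis and accumulate only at $1$, so your conclusion that $u$ is analytic on a neighbourhood of $\bbT\setminus\{1\}$ still holds.
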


\begin{proof}
The entries of $C_{\boldsymbol\kappa}(z)$ are constants and constant
multiples of $u$ and $\bar u$.  By \cref{lem:blaschke-derivatives},
$|\partial_\theta^ru(e^{i\theta})|\le C_r(1+\sum_n\delta_n)^r|\theta|^{-2r}$,
so \cref{lem:mollifier} shows that $h_{\boldsymbol\kappa}$ is smooth.  Each
block $C_{\kappa_j}(z)$ is a projection of rank one, so
$\Id-2C_{\boldsymbol\kappa}(z)$ has eigenvalues $\pm1$, each of
multiplicity $m$, and $h_{\boldsymbol\kappa}(z)$ has eigenvalues
$\pm q(z)$.  Since $q>0$ off the point $1$, the negative spectral
projection of $h_{\boldsymbol\kappa}$ is $C_{\boldsymbol\kappa}$ and
$\ker h_{\boldsymbol\kappa}=\{0\}$.  Since $q$ is continuous, positive and
vanishes at $1$, it takes values in $(0,\epsilon)$ on a set of positive
measure for every $\epsilon>0$, so $\Sp(h_{\boldsymbol\kappa})$ accumulates
at $0$ from both sides.  \Cref{prop:ground-states} concludes the proof.
\end{proof}

The only Fermi point is $z=1$, where the bands behave like
$\pm e^{-1/\theta^2}$: the dispersion is flatter than any power.  For one
block, with $m_\kappa,t_\kappa$ as in \cref{eq:flat_parent} and
$q(z)=\sum_nq_nz^n$, $q(z)u(z)=\sum_nv_nz^n$, the real-space Hamiltonian
is
\begin{equation}\label{eq:almost_local_Hamiltonian}
\begin{split}
H_{\kappa}
={}&
m_{\kappa}\sum_{x,n\in\z}q_n
\bigl(c^*_{x+n,1}c_{x,1}-c^*_{x+n,2}c_{x,2}\bigr)\\
&-t_{\kappa}\sum_{x,n\in\z}
\bigl(v_n c^*_{x+n,1}c_{x,2}+\overline{v_n}\,c^*_{x,2}c_{x+n,1}\bigr),
\end{split}
\end{equation}
with $|q_n|+|v_n|\le C_N(1+|n|)^{-N}$ for every $N$.

\subsection{Exponentially local Hamiltonians}\label{sec:exponential}

If $\|h_r\|\le Ce^{-\mu|r|}$, then $h(z)$ is holomorphic in the annulus
$e^{-\mu}<|z|<e^{\mu}$; conversely a holomorphic extension to an annulus
gives exponential decay after shrinking the annulus.  Exponential locality
therefore means that $k\mapsto h(e^{ik})$ is real analytic, and analytic
band theory applies.

\begin{lemma}[Rellich]\label{lem:rellich}
Let $k\mapsto A(k)\in M_\nu(\bbC)$ be a real-analytic family of Hermitian
matrices on an open interval.  Then there are real-analytic functions
$\varepsilon_j(k)$ and an orthonormal basis of real-analytic vectors
$\varphi_j(k)$ with $A(k)\varphi_j(k)=\varepsilon_j(k)\varphi_j(k)$
\cite{Rellich1969}, \cite[Thm.~II.6.1]{Kato1995}.
\end{lemma}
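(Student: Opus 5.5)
The statement to prove is Rellich's theorem. Since the paper cites it from \cite{Rellich1969} and \cite[Thm.~II.6.1]{Kato1995}, the plan is to recall the standard argument rather than invent a new one.

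The argument is local. Fix $k_0$ in the interval and write $A(k)=\sum_{j\ge0}A_j(k-k_0)^j$ near $k_0$. For complex $k$ near the real axis, the eigenvalues of $A(k)$ are roots of the characteristic polynomial $\det(\lambda-A(k))$, whose coefficients are analytic in $k$. By Puiseux's theorem these roots are branches of a multivalued algebraic function, expandable near $k_0$ in powers of $(k-k_0)^{1/p}$. Hermiticity now enters. For real $k$ all branches are real, and a Puiseux series $\sum_j c_j(k-k_0)^{j/p}$ that is real for both signs of $k-k_0$ must have $c_j=0$ whenever $p\nmid j$. To see this, look at the first nonintegral exponent: its coefficient has to be real along two rays whose arguments differ by $\pi j/p\notin\pi\bbZ$, which forces it to vanish. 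Hence every branch is a genuine power series, giving locally analytic eigenvalues $\varepsilon_j(k)$ counted with multiplicity. This is Kato's route via the reduction process, or equivalently via boundedness of the total projection under self-adjointness.

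The next step is eigenvectors. Take distinct analytic eigenvalue branches $\mu_1,\dots,\mu_s$ near $k_0$. They differ identically or coincide only at isolated points. The Riesz projections $P_i(k)=\frac1{2\pi i}\oint(\zeta-A(k))^{-1}\rmd\zeta$ around a group of eigenvalues are analytic. On the punctured neighbourhood, the eigenprojection of each individual branch $\mu_i$ is analytic and orthogonal. At an isolated crossing it stays bounded, because it is an orthogonal projection with norm $\le1$. Riemann's removable singularity theorem therefore extends it analytically across the crossing.

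For each analytic orthogonal projection family $P(k)$ of constant rank, Kato's transformation function $U(k)$ solves $U'=[P',P]U$ with $U(k_0)=\Id$. It is analytic and unitary for real $k$, because $[P',P]$ is skew-Hermitian there, and it satisfies $P(k)=U(k)P(k_0)U(k)^*$. Applying $U(k)$ to a fixed orthonormal basis of $\Ran P(k_0)$ gives real-analytic orthonormal eigenvectors.

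Finally, the local constructions must be globalised along the interval. The eigenvalue branches continue uniquely by the identity theorem. The eigenvectors are obtained by solving the global linear ODE $U'=\sum_i[P_i',P_i]U$ on the whole interval, since the $P_i$ are globally analytic once the branches are.

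The main obstacle is the crossing points. One must show that the Puiseux expansions have no fractional part and that the individual eigenprojections extend analytically through crossings. The key input is Hermiticity: it yields both the reality of the branches and the norm bound $\|P_i\|\le1$. Without it the statement fails, as the non-Hermitian example $\begin{pmatrix}0&1\\k&0\end{pmatrix}$ shows.
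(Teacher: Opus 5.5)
The paper gives no proof of its own here, only the citation of Rellich and Kato's Thm.~II.6.1, and your outline follows Kato's argument. The Puiseux step, where reality on both sides of $k_0$ kills the fractional exponents, is fine.

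\textbf{Extension across a crossing.} The genuine gap is the extension of the individual eigenprojections across a crossing. Your bound $\|P_i(k)\|\le1$ comes from orthogonality, which holds only for real $k$. For complex $k$ the matrix $A(k)$ is not Hermitian and you have no bound at all, whereas Riemann's removable singularity theorem needs boundedness on a punctured \emph{complex} neighbourhood. Boundedness along the real diameter does not exclude an essential singularity.

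The properties you actually invoke (real analytic, orthogonal and bounded on the punctured interval, isolated crossing) do not by themselves imply extendability. Rellich's smooth example shows this:
\[
A(k)=e^{-1/k^2}\begin{pmatrix}\cos(2/k)&\sin(2/k)\\ \sin(2/k)&-\cos(2/k)\end{pmatrix},\qquad A(0)=0 .
\]
It uses the same flat-at-a-point mechanism as the paper's $q$. For $k\ne0$ its eigenprojections are real analytic, orthogonal and bounded, with eigenvectors $(\cos(1/k),\sin(1/k))$ and $(-\sin(1/k),\cos(1/k))$. Yet they have no limit at $k=0$.

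What is missing is the fact that, for analytic $A$, the singularity of $P_i$ at $k_0$ is at most a pole. One way to see this:
\begin{itemize}
\item By your first step the $\mu_i$ are holomorphic on a complex disc around $k_0$ and pairwise distinct on a punctured one. So the Riesz projections $P_i$ are single-valued and holomorphic there.
\item $(A-\mu_i)P_i$ vanishes for real $k$, hence identically. So $A=\sum_i\mu_iP_i$ with $\sum_iP_i=\Id$, and
\[
P_i(k)=\prod_{i'\ne i}\frac{A(k)-\mu_{i'}(k)}{\mu_i(k)-\mu_{i'}(k)} .
\]
\item The denominators are holomorphic with zeros of finite order at $k_0$, so $P_i$ is meromorphic at $k_0$.
\item A Laurent series with finitely many negative powers that stays bounded on the real axis has no principal part. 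Only at this point is the singularity removable.
\end{itemize}
This is exactly where Kato uses that eigenprojections have at most algebraic singularities.

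\textbf{The global ODE.} Your global equation is off by a factor $2$. Set $Q:=\sum_iP_i'P_i$. Since $\sum_iP_i=\Id$ gives $\sum_iP_iP_i'=-\sum_iP_i'P_i$, you have $\sum_i[P_i',P_i]=2Q$. Differentiating $P_kP_j=\delta_{kj}P_j$ gives $[Q,P_k]=P_k'$. Hence with $U'=2QU$,
\[
(U^{-1}P_kU)'=U^{-1}\bigl(P_k'-[2Q,P_k]\bigr)U=-U^{-1}P_k'U\ne0,
\]
so $U(k)$ does not carry $\Ran P_k(k_0)$ onto $\Ran P_k(k)$. Already for the pair $P,\Id-P$ your generator is $2[P',P]$, which rotates the frame twice as fast as it should.

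Use instead $U'=\tfrac12\sum_i[P_i',P_i]U$, which is Kato's choice. Alternatively, solve $U_i'=[P_i',P_i]U_i$ on the whole interval separately for each $i$, as in your correct local step. Either gives real-analytic orthonormal eigenvectors, once the $P_i$ are known to be analytic through the crossings.
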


\begin{lemma}\label{lem:hilbert-matrix}
For $\ell\ge1$ the $\ell\times\infty$ matrix
$G=[(x+j)^{-1}]_{0\le x<\ell,\,j\ge1}$ satisfies $\|G\|_1\le\log\ell+3$.
\end{lemma}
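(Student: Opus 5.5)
We need to bound $\|G\|_1\le\log\ell+3$ for $G=[(x+j)^{-1}]_{0\le x<\ell,\,j\ge1}$.

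The plan is to decompose $G$ into a sum of rank-one pieces, or nearly rank-one pieces, whose trace norms are easy to bound. Bounding each column or row separately is the natural first attempt: the trace norm of a rank-one operator is the product of the Hilbert--Schmidt norms of its factors. Summing over the $\ell$ rows gives $\sum_x\|(x+j)^{-1}\|_{\ell^2_j}$, and each row has $\ell^2$-norm of order $(x+1)^{-1/2}$. That total is about $\sqrt\ell$, which is far too large, so a cleverer splitting is needed.

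\textbf{Integral representation.} Write
\[
(x+j)^{-1}=\int_0^1 t^{x+j-1}\,\rmd t .
\]
Then $G=\int_0^1 v_t w_t^{*}\,\rmd t$, with $v_t=(t^{x})_{0\le x<\ell}$ and $w_t=(t^{j-1})_{j\ge1}$. Hence
\[
\|G\|_1\le\int_0^1\|v_t\|_2\|w_t\|_2\,\rmd t ,
\]
where
\[
\|w_t\|_2^2=(1-t^2)^{-1},
\qquad
\|v_t\|_2^2=\frac{1-t^{2\ell}}{1-t^2}\le\min\Bigl\{\ell,\frac1{1-t^2}\Bigr\}.
\]
The integrand is therefore at most
\[
\min\Bigl\{\sqrt{\ell/(1-t^2)},\,(1-t^2)^{-1}\Bigr\}.
\]
Using $1-t^2\ge1-t$, split at $1-t=1/\ell$:
\begin{itemize}
\item For $1-t\ge1/\ell$, the bound $(1-t)^{-1}$ integrates to $\log\ell$.
\item For $1-t<1/\ell$, the bound $\sqrt{\ell}(1-t)^{-1/2}$ integrates to $\sqrt\ell\cdot 2\ell^{-1/2}=2$.
\end{itemize}
This gives $\|G\|_1\le\log\ell+2\le\log\ell+3$.

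\textbf{Justification.} The Bochner integral of rank-one operators converges in trace norm because the integrand bound is integrable. The triangle inequality then applies to the trace norm of the integral.

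\textbf{Main obstacle.} The only delicate point is the constant bookkeeping near $t=1$, where $1-t^2=(1-t)(1+t)$. If that step caused trouble, one could use the slack between $2$ and $3$, or instead split the $j$-range dyadically.
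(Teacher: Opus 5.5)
Your proof is correct and is essentially the paper's argument: the same representation $G=\int_0^1 v_tw_t^*\,\rmd t$, the triangle inequality for the trace norm, and the split at $t=1-1/\ell$. The only difference is on $[1-1/\ell,1]$. There your bound $\|v_t\|_2^2\le\ell$ gives the constant $2$, whereas the paper uses $1-t^{2\ell}\le2\ell(1-t)$ and gets $2\sqrt2$; both are within the stated $3$.
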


\begin{proof}
With $a_t=(t^x)_{0\le x<\ell}$ and $b_t=(t^{j-1})_{j\ge1}$ we have
$G=\int_0^1a_tb_t^*\,\rmd t$, hence
\begin{equation}
\|G\|_1\le\int_0^1\|a_t\|\,\|b_t\|\,\rmd t
=\int_0^1\frac{\sqrt{1-t^{2\ell}}}{1-t^2}\,\rmd t .
\end{equation}
On $[0,1-1/\ell]$ the integrand is at most $(1-t)^{-1}$, which contributes
at most $\log\ell$.  On $[1-1/\ell,1]$ we use $1-t^{2\ell}\le2\ell(1-t)$ and
$1-t^2\ge1-t$, which bound the remaining contribution by $2\sqrt2$.
\end{proof}

\begin{theorem}\label{thm:analytic-dichotomy}
Let $\nu<\infty$, let $h(z)=\sum_rh_rz^r\in M_\nu(\bbC)$ be Hermitian on
$\bbT$ with $\|h_r\|\le Ce^{-\mu|r|}$ for some $\mu>0$, and assume
$\det h(e^{ik})\not\equiv0$.  The set $Z$ of Fermi points, the zeros of
$\det h(e^{ik})$, is finite; near each $k_0\in Z$ the bands can be chosen
real analytic, and every band $\varepsilon_j$ with $\varepsilon_j(k_0)=0$
vanishes there to a finite order $m_j\ge1$.  The quasifree state
$\omega_{\rm F}$ with Fermi projection $P_{\rm F}=\chi_{(-\infty,0)}(h)$ is
the unique ground state, and:
\begin{enumerate}[label=\textup{(\alph*)}]
    \item if some $m_j$ is odd, $\M_R(\omega_{\rm F})$ is of type
    $\mathrm{III}_1$, and with $N_{\rm odd}\ge1$ the number of pairs
    $(k_0,j)$ with $m_j$ odd,
    \begin{equation}\label{eq:analytic-log}
        \frac{2\log2}{\pi^2}\,N_{\rm odd}\log\ell-C
        \le S_{\omega_{\rm F}}(\ell)\le C'\log\ell
        \qquad(\ell\ge2);
    \end{equation}
    \item if all $m_j$ are even, $\M_R(\omega_{\rm F})$ is of type I and
    $\sup_\ell S_{\omega_{\rm F}}(\ell)<\infty$.
\end{enumerate}
If $Z=\varnothing$ we are in case \textup{(b)}.  Suppose moreover that
$hw=\sum_Xa^Xw$ for every finitely supported $w\in\hilb$, where $X$ runs
over finite subsets of $\z$, $a^X$ is self-adjoint on
$\ell^2(X)\otimes\bbC^\nu$ and $\sup_{x}\sum_{X\ni x}\|a^X\|<\infty$, and
that the local terms $\Phi(X)=\sum_{x,y\in X}c_x^*a^X_{xy}c_y+b_X\Id$ satisfy
$\pi_{\rm F}(\Phi(X))\Omega_{\rm F}=0$ in the GNS representation of
$\omega_{\rm F}$.  This holds in particular if the Hamiltonian is frustration
free.  Then case \textup{(b)} holds.
\end{theorem}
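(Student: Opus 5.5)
We organize the proof around the local band structure near the Fermi points. First, $\det h(e^{ik})$ is a real-analytic function of $k$ on the compact circle that does not vanish identically, so its zero set $Z$ is finite. Near each $k_0\in Z$, \cref{lem:rellich} gives real-analytic bands $\varepsilon_j$ and eigenvectors $\varphi_j$. Any band vanishing at $k_0$ is not identically zero, so it vanishes there to a finite order $m_j\ge1$. Since $\ker h=\{0\}$, \cref{prop:ground-states}(i) makes $\omega_{\rm F}$ the unique ground state.

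Away from $Z$, $P_{\rm F}(z)$ is a Riesz projection and hence real analytic. Near $k_0$ we have
\[
P_{\rm F}(k)=\sum_j\chi_{(-\infty,0)}(\varepsilon_j(k))\,\varphi_j(k)\varphi_j(k)^*.
\]
A band with $m_j$ even keeps its sign across $k_0$, so its term is analytic. A band with $m_j$ odd changes sign, so its term equals an analytic rank-one projection times a Heaviside function $\chi_{k>k_0}$ or $\chi_{k<k_0}$. Using a partition of unity on $\bbT$, we write
\[
P_{\rm F}(z)=A(z)+\sum_{\text{odd }(k_0,j)}\chi_{\pm}(k)\,\varphi_j\varphi_j^*,
\]
where $A$ is smooth (indeed piecewise analytic and glued smoothly), the jump terms are localized near $k_0$, and the eigenprojections are smooth cutoffs. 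The jump terms are the only source of slowly decaying Fourier coefficients.

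For case (b) there are no jumps, so $P_{\rm F}$ is smooth and $\|\hat P_r\|$ decays rapidly. Then \cref{lem:toeplitz-hankel} gives $\sum_{r\ge1}r\|\hat P_{-r}\|_2^2<\infty$, and \cref{lem:halfchain} yields type I. The same lemma bounds $\|X_{I_\ell}\|_1\le\sum_r|r|\,\|\hat P_r\|_1$ uniformly in $\ell$, and \cref{lem:entropy-bounds} turns this into the area law. This also covers $Z=\varnothing$.

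For case (a), each jump term has Fourier coefficients of the form $c\,e^{-irk_0}\,\varphi_j(k_0)\varphi_j(k_0)^*/(2\pi i r)+O(r^{-2})$. The error comes from integrating by parts once against the smooth amplitude. Hence $\sum_r r\|\hat P_{-r}\|_2^2$ diverges like $\sum 1/r$, so the factor is not of type I.

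The upper bound $C'\log\ell$ follows from \cref{lem:entropy-bounds} together with \cref{lem:hilbert-matrix}. The $1/r$ part of $X_{I_\ell}$ is, up to unimodular phases and rank-one matrices, a Hilbert-type matrix with trace norm $O(\log\ell)$. The $O(r^{-2})$ remainder contributes $\sum_r\mu_\ell(r)r^{-2}=O(\log\ell)$.

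For the lower bound, $4\log2\,\|X_{I_\ell}\|_2^2$ alone gives only a constant of order $\frac{4\log2}{4\pi^2}\cdot2N_{\rm odd}\log\ell$. Computing $\sum_{r\ne0}\mu_\ell(r)/(4\pi^2r^2)\approx\frac{1}{2\pi^2}\cdot2\log\ell$ per jump indeed matches the stated constant $\frac{2\log2}{\pi^2}N_{\rm odd}$. Cross terms between different Fermi points or between different bands at the same point oscillate or are orthogonal, so they contribute $O(1)$. I expect this bookkeeping to be the main technical step. It requires careful control of the cross terms $e^{ir(k_0-k_0')}/r^2$, which are summable, and of the non-commuting rank-one projections in the Hilbert–Schmidt norm.

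For type $\mathrm{III}_1$ rather than just non-type I, I would show that $\Sp(T_{P_{\rm F}})$ contains the whole interval $[0,1]$. Near a jump the symbol locally looks like a scalar Fermi sea in one rank-one direction, and I would use a localization argument for Toeplitz operators with piecewise continuous symbols. Such an operator has essential spectrum containing the segment joining the two one-sided limits, which here are $0$ and $1$ in the $\varphi_j$ direction. This gives a nondegenerate interval, and \cite[Lemma~26]{vanLuijk2025} then yields type $\mathrm{III}_1$, as in \cref{prop:one-species}. This spectral step is the main obstacle. If it fails, I would fall back on the Araki–Woods asymptotic ratio set computed from the eigenvalue distribution of $C_R$, whose density near $0$ and $1$ is continuous.

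Finally, under the frustration-free hypothesis, suppose some band crosses with odd order. Since $\pi(\Phi(X))\Omega=0$ for every $X$, the one-particle vectors satisfy $a^X Q_X P_{\rm F}=0$ on the range of $P_{\rm F}$ and $a^X Q_X(\Id-P_{\rm F})=0$ on the range of $\Id-P_{\rm F}$. Equivalently, each $a^X$ commutes with $P_{\rm F}$ in a way that forces $[a^X,P_{\rm F}]=0$, since the local kernel conditions separate filled and empty states. I would deduce that $P_{\rm F}$ is a spectral projection of finite-range compressions. Summing the $a^X$ then shows that $h$ leaves the range of $P_{\rm F}$ invariant with $hP_{\rm F}\le0$ and $h(\Id-P_{\rm F})\ge0$ through sums of local nonnegative pieces. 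A sign change of odd order would force a local term to have both signs on the same vector, which is a contradiction, so every $m_j$ is even and case (b) holds.
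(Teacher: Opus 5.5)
Your treatment of the band structure, of case (b), and of the entropy and type statements in case (a) is essentially the paper's argument. Your cutoff-Heaviside terms $\rho\,\chi_\pm\varphi_j\varphi_j^*$ play the role of the paper's sawtooth functions $J_{k_s}\sigma_{k_s}$. The type $\mathrm{III}_1$ step, via the essential spectrum of a Toeplitz operator with piecewise continuous symbol followed by \cite[Lemma~26]{vanLuijk2025}, is exactly what the paper cites.

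Two small corrections there. First, the $O(r^{-2})$ remainder needs a second integration by parts. Second, in $\sum_{1\le|r|\le\ell}|r|\,\|\hat P_{{\rm F},r}\|_2^2$ the cross terms have size $e^{ir(k_s-k_{s'})}/|r|$, not $1/r^2$. They are therefore not absolutely summable. They are $O(1)$ only because the partial sums of $\sum_{r\ge1}\cos(r\theta)/r$ are bounded for $\theta\notin2\pi\z$.

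The genuine gap is in the frustration-free part, at two points.

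\emph{Deriving the commutation relation.} The condition $\pi_{\rm F}(\Phi(X))\Omega_{\rm F}=0$ does not say that $a^X$ annihilates filled or empty states; taken literally, that would force $a^X=0$. By Wick's rule the variance of $\Phi(X)$ in $\omega_{\rm F}$ equals $\|(\Id-P_{\rm F})a^XP_{\rm F}\|_2^2$. So the condition gives exactly $(\Id-P_{\rm F})a^XP_{\rm F}=0$, that is $[a^X,P_{\rm F}]=0$.

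\emph{The contradiction.} Your final step does not work. The relations $hP_{\rm F}\le0$ and $h(\Id-P_{\rm F})\ge0$ hold for every $h$ by the definition $P_{\rm F}=\chi_{(-\infty,0)}(h)$, so they carry no information. The stated hypothesis gives no sign information on the individual $a^X$. An odd-order crossing also produces no vector on which some local term ``has both signs''; the sign change happens across different momenta.

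\emph{The missing idea.} The obstruction is locality combined with analyticity. Since $a^X$ acts on the finite-dimensional space $\ell^2(X)\otimes\bbC^\nu$ and commutes with $P_{\rm F}$, the projection $P_{\rm F}$ maps $\Ran a^X$ into itself. Hence for $w\in\Ran a^X$ both $\hat w$ and $P_{\rm F}\hat w$ are trigonometric polynomials.

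Near an odd crossing $k_0$, $P_{\rm F}\hat w$ coincides with $P^{(+)}\hat w$ on one side and with $P^{(-)}\hat w$ on the other, and all three functions are analytic. So $(P^{(+)}-P^{(-)})\hat w\equiv0$ near $k_0$. This gives $\hat w(k)\perp\varphi_j(k)$ there for every band $j$ that changes sign.

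Now write $h(e^{ik})v=\sum_X\bigl(a^X(\delta_0\otimes v)\bigr)^\wedge(k)$, which converges in $L^2$ because $\sum_{X\ni0}\|a^X\|<\infty$. This shows $h(e^{ik})v\perp\varphi_j(k)$ for all $v$. Hence $\varepsilon_j\equiv0$ near $k_0$, contradicting $\det h\not\equiv0$.
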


\begin{proof}
\emph{Band structure.}  The function $\det h(e^{ik})$ is real analytic and
not identically zero, so $Z$ is finite, $\ker h=\{0\}$, and $\omega_{\rm F}$
is the unique ground state by \cref{prop:ground-states}.  Away from $Z$,
$P_{\rm F}$ is real analytic, being a Riesz projection.  Near $k_0\in Z$,
\cref{lem:rellich} gives analytic bands $\varepsilon_j$ with rank-one
spectral projections $P_j$; since $\prod_j\varepsilon_j=\det h\not\equiv0$,
no band vanishes identically, so each band vanishing at $k_0$ does so to a
finite order.  On a small interval to the right of $k_0$, respectively to
the left, the set $J_\pm$ of negative bands is constant, and $P_{\rm F}$
coincides there with the analytic projection
$P^{(\pm)}:=\sum_{j\in J_\pm}P_j$.  A zero of even order preserves the sign
of a band and one of odd order reverses it, so
$J_+\triangle J_-=\{j:\varepsilon_j(k_0)=0,\ m_j\ \text{odd}\}$, and the
jump
\begin{equation}
    J_{k_0}:=P^{(+)}(k_0)-P^{(-)}(k_0)
    =\sum_{j\in J_+\setminus J_-}P_j(k_0)-\sum_{j\in J_-\setminus J_+}P_j(k_0)
\end{equation}
satisfies $\|J_{k_0}\|_2^2=|J_+\triangle J_-|$, because the $P_j(k_0)$ are
mutually orthogonal rank-one projections.  Let $Z_{\rm odd}=\{k_s\}$ be the
set of points with $J_{k_s}\ne0$, so that
$N_{\rm odd}=\sum_s\|J_{k_s}\|_2^2$.  After redefinition on the finite set
$Z\setminus Z_{\rm odd}$, the function $P_{\rm F}$ is continuous off
$Z_{\rm odd}$, and on each closed arc between consecutive points of
$Z_{\rm odd}$ it is the restriction of a function analytic on a
neighbourhood of the arc.

\emph{Case (b).}  Here $Z_{\rm odd}=\varnothing$ and $P_{\rm F}$ is real
analytic on $\bbT$, so $\|\hat P_{{\rm F},r}\|\le Ce^{-\mu'|r|}$ for some
$\mu'>0$.  By \cref{lem:toeplitz-hankel}(i) and \cref{lem:halfchain} the
half-chain factor is of type I.  By \cref{lem:entropy-bounds} and
\cref{lem:toeplitz-hankel}(ii),
$S_{\omega_{\rm F}}(\ell)\le2\log2\sum_{r\ne0}|r|\,\|\hat P_{{\rm F},r}\|_1
<\infty$ uniformly in $\ell$.

\emph{Case (a), type.}  Now $P_{\rm F}$ is a piecewise continuous
projection-valued symbol with at least one discontinuity.  The essential
spectrum of the block Toeplitz operator $T_{P_{\rm F}}$ then contains a
nondegenerate interval \cite[Sec.~6.2]{vanLuijk2025}, and the factor is of
type $\mathrm{III}_1$ by \cite[Lemma~26]{vanLuijk2025} and
\cref{lem:halfchain}.

\emph{Case (a), entropy.}  For $k_0\in\bbT$ let $\sigma_{k_0}$ be the
sawtooth function with $\sigma_{k_0}(e^{ik})=(\pi-t)/(2\pi)$ for
$k-k_0\equiv t\in(0,2\pi)$; it jumps by $+1$ at $k_0$ and
$(\hat\sigma_{k_0})_r=e^{-irk_0}/(2\pi ir)$ for $r\ne0$.  Then
\begin{equation}
    A:=P_{\rm F}-\sum_sJ_{k_s}\sigma_{k_s}
\end{equation}
is continuous on $\bbT$ and smooth up to the endpoints on each arc between
consecutive points of $Z_{\rm odd}$.  Two integrations by parts give
$\|\hat A_r\|_1\le C|r|^{-2}$.  Let $X_\ell=(\Id-Q_\ell)P_{\rm F}Q_\ell$.

For the upper bound, \cref{lem:toeplitz-hankel}(ii) gives
$\|(\Id-Q_\ell)M_AQ_\ell\|_1\le C\sum_{r\ge1}\min\{r,\ell\}r^{-2}
\le C(1+\log\ell)$.  The operator $(\Id-Q_\ell)M_{\sigma_{k_0}}Q_\ell$ has
the kernel $e^{-ik_0(y-x)}/(2\pi i(y-x))$ for $x\in I_\ell$, $y\notin I_\ell$.
The phases are diagonal unitaries, and the parts $y<0$ and $y\ge\ell$ are,
after the substitutions $y=-j$ and $y=\ell-1+j$, $x\mapsto\ell-1-x$, copies
of the matrix of \cref{lem:hilbert-matrix}.  Hence
$\|(\Id-Q_\ell)M_{\sigma_{k_0}}Q_\ell\|_1\le\pi^{-1}(\log\ell+3)$, so
$\|X_\ell\|_1\le C'(1+\log\ell)$, and \cref{lem:entropy-bounds} gives the
upper bound in \cref{eq:analytic-log}.

For the lower bound, \cref{lem:entropy-bounds} and
\cref{lem:toeplitz-hankel}(ii) give
$S_{\omega_{\rm F}}(\ell)\ge4\log2\sum_{1\le|r|\le\ell}|r|\,
\|\hat P_{{\rm F},r}\|_2^2$.  Write
$\hat P_{{\rm F},r}=j_r+\hat A_r$ with
$j_r=(2\pi ir)^{-1}\sum_se^{-irk_s}J_{k_s}$.  Since $\|j_r\|_2\le C/|r|$
and $\|\hat A_r\|_2\le C/r^2$, we have
$|r|\,\|\hat P_{{\rm F},r}\|_2^2\ge|r|\,\|j_r\|_2^2-C/r^2$, and
\begin{equation}
    |r|\,\|j_r\|_2^2=\frac{1}{4\pi^2|r|}
    \Bigl(N_{\rm odd}+\sum_{s\ne s'}e^{ir(k_s-k_{s'})}
    \Tr\bigl(J_{k_s}^*J_{k_{s'}}\bigr)\Bigr).
\end{equation}
Now $\sum_{1\le|r|\le\ell}|r|^{-1}\ge2\log\ell$, while for
$\theta\notin2\pi\z$ the partial sums of $\sum_{r\ge1}\cos(r\theta)/r$ are
bounded by $|\sin(\theta/2)|^{-1}$ by summation by parts.  The cross terms
therefore contribute $O(1)$, and the lower bound in
\cref{eq:analytic-log} follows.

\emph{Frustration-free case.}  In the state $\omega_{\rm F}$ the variance of
the quadratic operator $\Phi(X)$ equals $\|(\Id-P_{\rm F})a^XP_{\rm F}\|_2^2$,
as follows from Wick's rule \cref{eq:quasifree}.  Hence
$\pi_{\rm F}(\Phi(X))\Omega_{\rm F}=0$ forces $(\Id-P_{\rm F})a^XP_{\rm F}=0$,
that is $[a^X,P_{\rm F}]=0$, and $P_{\rm F}$ maps the finite-dimensional
range of $a^X$, which consists of finitely supported vectors, into
itself.  Thus for each $w\in\Ran a^X$ both $\hat w(k)$ and
$P_{\rm F}(k)\hat w(k)$ are trigonometric polynomials.  Suppose
$k_0\in Z_{\rm odd}$.  The analytic function $P_{\rm F}\hat w$ coincides
with $P^{(+)}\hat w$ to the right of $k_0$ and with $P^{(-)}\hat w$ to the
left, so $(P^{(+)}-P^{(-)})\hat w\equiv0$ near $k_0$ by analyticity, and
$P_j(k)\hat w(k)=0$ near $k_0$ for every $j\in J_+\triangle J_-$.  For
$v\in\bbC^\nu$ only the sets $X\ni0$ contribute to
$h(\delta_0\otimes v)=\sum_Xa^X(\delta_0\otimes v)$, and
$\sum_{X\ni0}\|a^X\|<\infty$, so the sum converges absolutely in $\ell^2$;
hence
$h(e^{ik})v=\sum_X\bigl(a^X(\delta_0\otimes v)\bigr)^\wedge(k)$ in $L^2$,
and every term is orthogonal to $\varphi_j(k)$ near $k_0$.  Hence
$\varepsilon_j(k)=\langle\varphi_j(k),h(e^{ik})\varphi_j(k)\rangle=0$ near
$k_0$, so $\varepsilon_j\equiv0$, contradicting $\det h\not\equiv0$.
Therefore $Z_{\rm odd}=\varnothing$.
\end{proof}

\begin{remark}
The hypothesis $\det h\not\equiv0$ excludes a flat band at the Fermi
energy.  If a band vanishes identically, the ground states of the
quasifree dynamics include $\omega_C$ for every translation-invariant
covariance $P_{\rm F}\le C\le P_{\rm F}+P_0$, with $P_0$ the projection
onto the flat band, and the ground state is not unique.  For instance,
every state $\omega_{\boldsymbol\kappa}$ of \cref{sec:others} is a ground
state of the zero Hamiltonian.
\end{remark}

\begin{remark}
An odd-order crossing produces a jump of the Fermi projection and type
$\mathrm{III}_1$, while an even-order contact, such as an analytic
quadratic band touching, produces no jump and gives type I together with
an area law, although the Hamiltonian is gapless.  The Hamiltonians of
\cref{thm:almost-local} escape \cref{thm:analytic-dichotomy} because $q$
is flat at $z=1$: the functions $qu$ and $q\bar u$ are smooth but not
analytic, so their Fourier coefficients decay faster than any power, yet
not exponentially.
\end{remark}

\section{Entropy scaling}\label{sec:entropy}

We now estimate the entropy of the states $\omega_{\boldsymbol\kappa}$ of
\cref{sec:others}.  Blocks are uncorrelated, so their entropies add, and
blocks with $\kappa_j=0$ are product states with zero entropy.  It
therefore suffices to treat one block $C_\kappa$ with $0<\kappa\le1$.
Throughout this section $u$ is as in \cref{eq:standing},
$I_\ell=\{0,\dots,\ell-1\}$, and $S_\kappa(\ell)$ is the entropy of
$\omega_\kappa$ on $I_\ell$.

\subsection{Finite-interval entropy and \texorpdfstring{$D_\ell$}{D}}
\label{sec:interval-entropy}

\begin{proposition}\label{prop:interval-entropy}
Let $0<\kappa\le1$, let $T$ be a contraction on $\bbC^n$ with singular
values $\sigma_1,\dots,\sigma_n$, and let
\begin{equation}\label{eq:interval-cov}
C=\frac{1}{1+\kappa}
\begin{pmatrix}
\Id & \sqrt{\kappa}\,T\\
\sqrt{\kappa}\,T^* & \kappa\Id
\end{pmatrix}
\quad\text{on }\bbC^n\oplus\bbC^n .
\end{equation}
Put $d_j:=1-\sigma_j^2$, $D:=\sum_jd_j=\Tr(\Id-T^*T)$ and
$\gamma_\kappa:=4\kappa/(1+\kappa)^2$.  The eigenvalues of $C$ are
$\frac12(1\pm\sqrt{1-\gamma_\kappa d_j})$, and
\begin{equation}\label{eq:entropy-F}
\Tr\hb(C)=\sum_{j=1}^{n}F_\kappa(d_j),
\qquad
F_\kappa(d):=2\,\hb\Bigl(\tfrac12\bigl(1-\sqrt{1-\gamma_\kappa d}\bigr)\Bigr).
\end{equation}
There is a universal constant $C_0$ such that
\begin{equation}\label{eq:F-bounds}
2\gamma_\kappa\log2\;d\le F_\kappa(d)
\le C_0\,\gamma_\kappa d\log\frac{e}{\gamma_\kappa d},
\qquad0<d\le1,
\end{equation}
and consequently
\begin{equation}\label{eq:entropy-DL}
2\gamma_\kappa\log2\;D\le\Tr\hb(C)
\le C_0\,\gamma_\kappa D\log\frac{en}{\gamma_\kappa D}
\quad(D>0).
\end{equation}
\end{proposition}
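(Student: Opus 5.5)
The plan is to diagonalize $C$ by means of the singular value decomposition of $T$.
\begin{itemize}
\item[] Write $T=\sum_j\sigma_j\,e_jf_j^*$ with orthonormal bases $(e_j)$ and $(f_j)$ of $\bbC^n$.
\item[] The two-dimensional subspaces $\rspan\{e_j\oplus0,\,0\oplus f_j\}$ are mutually orthogonal, span $\bbC^n\oplus\bbC^n$, and each reduces $C$.
\item[] On the $j$-th subspace, $C$ acts as the $2\times2$ matrix
\[
\frac1{1+\kappa}\begin{pmatrix}1&\sqrt\kappa\,\sigma_j\\ \sqrt\kappa\,\sigma_j&\kappa\end{pmatrix}.
\]
\end{itemize}
This matrix has trace $1$ and determinant $\kappa(1-\sigma_j^2)/(1+\kappa)^2=\gamma_\kappa d_j/4$. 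Its eigenvalues are therefore $\tfrac12(1\pm\sqrt{1-\gamma_\kappa d_j})$. Since $\hb(t)=\hb(1-t)$, the two eigenvalues of each block contribute equally, and summing over the $n$ blocks gives \cref{eq:entropy-F}.

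For \cref{eq:F-bounds}, put $t:=\tfrac12(1-\sqrt{1-x})$ with $x=\gamma_\kappa d\in(0,1]$; note $\gamma_\kappa\le1$. Then $t(1-t)=x/4$, and $F_\kappa(d)=2\hb(t)$.
\begin{itemize}
\item[] Lower bound: Topsøe's inequality \cref{eq:topsoe} gives $\hb(t)\ge4\log2\,t(1-t)=x\log2$, hence $F_\kappa(d)\ge2\log2\,x$.
\item[] Upper bound: $t\le x/2$, because $1-\sqrt{1-x}\le x$. We also have $\hb(t)\le t\log(1/t)+t$, using $-(1-t)\log(1-t)\le t$. Since $t\ge x/4$, this gives $\hb(t)\le\tfrac x2(\log\tfrac4x+1)\le C\,x\log\tfrac ex$. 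The function $s\mapsto s\log(e/s)$ is increasing on $(0,1]$, which justifies substituting the bounds on $t$ in terms of $x$.
\end{itemize}

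For \cref{eq:entropy-DL}, the lower bound follows by summing the lower bound in \cref{eq:F-bounds} over $j$, keeping in mind that $F_\kappa(0)=0$. For the upper bound, let $\phi(s)=s\log(e/s)$, which is concave on $[0,1]$. By Jensen's inequality,
\[
\sum_j\phi(\gamma_\kappa d_j)\le n\,\phi\bigl(\gamma_\kappa D/n\bigr)=\gamma_\kappa D\log\frac{en}{\gamma_\kappa D},
\]
which is the claim.

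The only delicate step is the upper estimate of $\hb$ near $t=0$. It has to produce the $x\log(1/x)$ form uniformly in $x$, with a constant $C_0$ independent of $\kappa$. Once the identity $t(1-t)=x/4$ and the concavity of $\phi$ are in place, the remaining steps are routine.
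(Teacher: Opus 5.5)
Your proposal is correct and follows the paper's proof step by step: the SVD reduction to $2\times2$ blocks with trace $1$ and determinant $\gamma_\kappa d_j/4$, Topsøe's inequality for the lower bound, the bounds $x/4\le t\le x/2$ together with $\hb(t)\le t\log(e/t)$ for the upper bound, and Jensen's inequality for the concave function $s\mapsto s\log(e/s)$. The only cosmetic difference is that the paper first uses the monotonicity of $\hb$ on $[0,\frac12]$ to replace $t$ by $x/2$, whereas you bound $t\log(e/t)$ directly; as you note, the monotonicity of $s\log(e/s)$ already makes the lower bound $t\ge x/4$ unnecessary there.
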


\begin{proof}
A singular-value decomposition of $T$ turns $C$ into a direct sum of the
$2\times2$ matrices
$(1+\kappa)^{-1}\left(\begin{smallmatrix}1&\sqrt\kappa\sigma_j\\
\sqrt\kappa\sigma_j&\kappa\end{smallmatrix}\right)$, with trace $1$ and
determinant $\gamma_\kappa d_j/4$; this gives the eigenvalues, and
\cref{eq:entropy-F} follows from $\hb(t)=\hb(1-t)$.  For the lower bound in
\cref{eq:F-bounds} apply \cref{eq:topsoe} with $t(1-t)=\gamma_\kappa d/4$.
For the upper bound put $x=\gamma_\kappa d\in(0,1]$ and use
$x/4\le\frac12(1-\sqrt{1-x})\le x/2$, the monotonicity of $\hb$ on
$[0,\frac12]$ and $\hb(t)\le t\log(e/t)$, which follows from
$(1-t)\log\frac1{1-t}\le t$.  Summing gives the lower bound in
\cref{eq:entropy-DL}; the upper bound follows from Jensen's inequality for
the concave function $x\mapsto x\log(e/x)$ applied to the $n$ numbers
$\gamma_\kappa d_j$.
\end{proof}

The covariance of $\omega_\kappa$ on $I_\ell$ is \cref{eq:interval-cov}
with $T=T_\ell:=Q_\ell T_uQ_\ell$, the lower triangular Toeplitz matrix
$[u_{j-k}]_{0\le k\le j<\ell}$, which is a contraction.  We write
\begin{equation}\label{eq:DL-definition}
    D_\ell:=\Tr(\Id-T_\ell^*T_\ell),
\end{equation}
so that $S_\kappa(\ell)=\sum_jF_\kappa(d_j(\ell))$ and
\cref{eq:entropy-DL} applies with $n=\ell$.

\begin{corollary}\label{cor:kappa-comparison}
For $\kappa,\kappa'\in(0,1]$ there are constants $c,C>0$, depending only
on $\kappa$ and $\kappa'$, with $cS_{\kappa'}(\ell)\le S_\kappa(\ell)\le
CS_{\kappa'}(\ell)$ for every $\ell$ and every choice of zeros.
\end{corollary}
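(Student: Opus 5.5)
The idea is to compare the two entropies term by term through the function $F_\kappa$ of \cref{prop:interval-entropy}. The interval covariances of $\omega_\kappa$ and $\omega_{\kappa'}$ on $I_\ell$ use the same matrix $T_\ell=Q_\ell T_uQ_\ell$, so they have the same numbers $d_j=d_j(\ell)=1-\sigma_j(T_\ell)^2\in[0,1]$, and
\[
S_\kappa(\ell)=\sum_{j=1}^{\ell}F_\kappa(d_j),\qquad S_{\kappa'}(\ell)=\sum_{j=1}^{\ell}F_{\kappa'}(d_j).
\]
It therefore suffices to find constants $c,C>0$, depending only on $\kappa,\kappa'$, with
\[
cF_{\kappa'}(d)\le F_\kappa(d)\le CF_{\kappa'}(d)\qquad\text{for all }0\le d\le1 .
\]
Both sides vanish at $d=0$, so only $0<d\le1$ matters. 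Summing over $j$ then gives the corollary, uniformly in $\ell$ and in the zeros $a_n$.

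To obtain the pointwise comparison, I would write $F_\kappa(d)=G(\gamma_\kappa d)$, where
\[
G(x):=2\,\hb\Bigl(\tfrac12\bigl(1-\sqrt{1-x}\bigr)\Bigr),\qquad 0\le x\le1 .
\]
This is legitimate because $\gamma_\kappa=4\kappa/(1+\kappa)^2\in(0,1]$ for $0<\kappa\le1$, so $\gamma_\kappa d\in(0,1]$. Put $x=\gamma_\kappa d$ and $x'=\gamma_{\kappa'}d$, so that $x/x'=\gamma_\kappa/\gamma_{\kappa'}=:\rho$, a constant in $(0,\infty)$ depending only on $\kappa,\kappa'$. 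The claim reduces to the following statement about $G$ alone:
\[
G(x)\asymp G(x')\quad\text{uniformly whenever }x,x'\in(0,1]\text{ and }x/x'=\rho .
\]

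The estimate \cref{eq:F-bounds} gives $2\log2\,x\le G(x)\le C_0\,x\log(e/x)$. This is not enough by itself, because the logarithmic factor breaks a linear comparison as $x\to0$. I would instead use two properties of $G$. First, $G$ is increasing on $[0,1]$, since $t\mapsto\tfrac12(1-\sqrt{1-t})$ increases from $0$ to $\tfrac12$ and $\hb$ increases on $[0,\tfrac12]$. Second, $G$ is doubling: $G(2x)\le A\,G(x)$ for some constant $A$ whenever $2x\le1$. Given these, if $\rho\le1$ we get $G(x)\le G(x')$ and $G(x')\le A^{k}G(x)$ with $2^k\ge1/\rho$, and the case $\rho>1$ is symmetric.

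The doubling property is the step I expect to need the most care. To prove it, I would use $\tfrac12(1-\sqrt{1-x})\in[x/4,x/2]$, as in the proof of \cref{prop:interval-entropy}. Together with monotonicity of $\hb$ on $[0,\tfrac12]$ this reduces matters to $\hb(2t)\le A'\hb(t)$ for small $t$, which follows from $\hb(t)\sim t\log(1/t)$ as $t\to0$. On the remaining compact range of $x$, the function $G$ is continuous and bounded above and below by positive constants, so doubling is immediate there.
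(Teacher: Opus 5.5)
Your proof is correct, and its first half is exactly the paper's reduction. Both interval covariances use the same $T_\ell$, hence the same $d_j(\ell)$, so it suffices to compare $F_\kappa$ and $F_{\kappa'}$ pointwise on $(0,1]$.

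The two arguments differ in how this pointwise comparison is proved. The paper does it in one line: the ratio $F_\kappa(d)/F_{\kappa'}(d)$ is continuous and positive on $(0,1]$ and tends to $\gamma_\kappa/\gamma_{\kappa'}$ as $d\to0$. It therefore extends to a positive continuous function on $[0,1]$ and is bounded above and below. The logarithmic factor you were worried about causes no trouble there. Both functions behave like $\tfrac12\gamma d\log(1/d)$ as $d\to0$, so the logarithms cancel in the ratio, even though they spoil a comparison based on the two-sided bounds \eqref{eq:F-bounds}.

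Your route replaces this asymptotic observation by monotonicity of $G$ plus a doubling estimate. The doubling estimate itself rests on $\hb(t)\sim t\log(1/t)$, so the analytic input is the same. Your version is somewhat longer but more structural: any increasing doubling function is comparable to its dilates. In your case one constant is $1$ and the other is $A^{k}$ with $k=\lceil|\log_2(\gamma_\kappa/\gamma_{\kappa'})|\rceil$, which makes the dependence on $\kappa,\kappa'$ explicit. The paper's version is shorter and identifies the exact limiting ratio.

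One detail to tidy: when you iterate $G(2y)\le A\,G(y)$ starting from $y=x$ until $2^kx\ge x'$, the last dilate may exceed $1$, where $G$ is undefined. The fix is to state doubling as $G(\min\{2y,1\})\le A\,G(y)$ for all $y\in(0,1]$. For $y>1/2$ this follows from $G(1)\le\bigl(G(1)/G(1/2)\bigr)G(y)$, and your argument otherwise goes through unchanged.
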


\begin{proof}
On $(0,1]$ the ratio $F_\kappa(d)/F_{\kappa'}(d)$ is positive and
continuous and tends to $\gamma_\kappa/\gamma_{\kappa'}$ as $d\to0$, so it
is bounded above and below by positive constants.
\end{proof}

At fixed $\kappa$ the entropy is therefore governed by the singular values
of $T_\ell$, hence by the Blaschke zeros alone.

\begin{lemma}\label{lem:DL-formulas}
With $u(z)=\sum_{m\ge0}u_mz^m$,
\begin{equation}\label{eq:DL-coefficients}
D_\ell=\sum_{m\ge1}\min\{m,\ell\}\,|u_m|^2
=\Tr\bigl(Q_\ell P_{K_u}Q_\ell\bigr),
\end{equation}
where $P_{K_u}$ is the projection onto $K_u$.  In particular
$D_\ell\uparrow\dim K_u=\infty$, and $S_\kappa(\ell)\to\infty$.
\end{lemma}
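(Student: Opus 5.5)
We compute $D_\ell$ directly from the matrix entries of $T_\ell=Q_\ell T_uQ_\ell$. Since $T_u$ is an isometry on $H^2$, each column of the semi-infinite lower-triangular matrix $[u_{j-k}]_{j,k\ge0}$ has unit norm, that is $\sum_{m\ge0}|u_m|^2=\|u\|_2^2=1$. The $k$-th column of $T_\ell$, for $0\le k<\ell$, keeps only the entries $u_0,\dots,u_{\ell-1-k}$. Hence
\begin{equation*}
D_\ell=\sum_{k=0}^{\ell-1}\Bigl(1-\sum_{m=0}^{\ell-1-k}|u_m|^2\Bigr)
=\sum_{k=0}^{\ell-1}\sum_{m\ge\ell-k}|u_m|^2 .
\end{equation*}
For fixed $m\ge1$, the number of $k\in\{0,\dots,\ell-1\}$ with $m\ge\ell-k$ is $\min\{m,\ell\}$, and $m=0$ never contributes. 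This gives the first formula in \cref{eq:DL-coefficients}.

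For the second formula, we use $T_uT_u^*=P_{uH^2}=\Id-P_{K_u}$, which holds because $T_u$ is an isometry with range $uH^2$. For $x\in I_\ell$ we have $\langle\delta_x,P_{K_u}\delta_x\rangle=1-\|T_u^*\delta_x\|^2$. The vector $T_u^*\delta_x$ has coefficients $\overline{u_{x-k}}$ for $0\le k\le x$, so $\|T_u^*\delta_x\|^2=\sum_{m=0}^{x}|u_m|^2$. Summing over $x$ gives $\Tr(Q_\ell P_{K_u}Q_\ell)=\sum_{x<\ell}\sum_{m>x}|u_m|^2$, which is the same double sum as above after the substitution $x=\ell-1-k$. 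Alternatively, this identity is $\Tr(Q_\ell(\Id-T_uT_u^*)Q_\ell)$ compared with $\Tr(Q_\ell(\Id-T_u^*Q_\ell T_u)Q_\ell)$. The direct computation above avoids any ordering issues, so we use it.

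Monotonicity follows because $\min\{m,\ell\}$ is nondecreasing in $\ell$. Since $Q_\ell\uparrow\Id_{H^2}$ strongly, normality of the trace gives $\Tr(Q_\ell P_{K_u}Q_\ell)\uparrow\Tr P_{K_u}=\dim K_u$, and $\dim K_u=\infty$ by \cref{prop:model-spaces}. Finally, $S_\kappa(\ell)\ge2\gamma_\kappa\log2\,D_\ell\to\infty$ by the lower bound in \cref{eq:entropy-DL}, valid since $\kappa>0$. None of these steps is a real obstacle; the only care needed is keeping track of the index bookkeeping in the counting step.
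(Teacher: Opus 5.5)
Your proof is correct and follows essentially the paper's route. The first formula is the same column-norm computation of $\Tr(T_\ell^*T_\ell)$ using $\sum_m|u_m|^2=1$, and your diagonal computation of $\Tr(Q_\ell P_{K_u}Q_\ell)$ rests on the same fact the paper uses: $T_u^*z^x=\sum_{m\le x}\overline{u_m}z^{x-m}$ stays in $\Ran Q_\ell$ for $x<\ell$.

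The paper packages this fact as the operator identity $T_\ell T_\ell^*=Q_\ell(\Id-P_{K_u})Q_\ell$ together with $\Tr(T_\ell T_\ell^*)=\Tr(T_\ell^*T_\ell)$. That is slightly stronger than the trace equality you verify, because it identifies the numbers $d_j$ with the eigenvalues of $Q_\ell P_{K_u}Q_\ell$ on $\Ran Q_\ell$. It is this stronger form that \cref{prop:truncated} later invokes to conclude that at most $\dim K_{u^{(N)}}=N$ of the $d_j$ are nonzero.
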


\begin{proof}
$\Tr(T_\ell^*T_\ell)=\sum_{m=0}^{\ell-1}(\ell-m)|u_m|^2$ and
$\sum_m|u_m|^2=\|u\|_2^2=1$ give the first formula.  For the second, the
formula $T_u^*z^j=\sum_{m\le j}\overline{u_m}z^{j-m}$ shows
$Q_\ell T_u^*Q_\ell=T_u^*Q_\ell$, hence
$T_\ell T_\ell^*=Q_\ell T_uT_u^*Q_\ell=Q_\ell(\Id-P_{K_u})Q_\ell$, and
$T_\ell T_\ell^*$ and $T_\ell^*T_\ell$ have the same trace.  Monotone
convergence and \cref{eq:entropy-DL} give the last statement.
\end{proof}

\subsection{Estimating \texorpdfstring{$D_\ell$}{D} from the zeros}

\begin{proposition}\label{prop:radial}
Put $\epsilon_\ell:=1-e^{-1/(2\ell)}$, so that
$(4\ell)^{-1}\le\epsilon_\ell\le(2\ell)^{-1}$.  Then
\begin{equation}\label{eq:DL-zero-formula}
D_\ell
\asymp
\frac{1}{\epsilon_\ell}
\int_{-\pi}^{\pi}
\min
\Bigl\{
1,\;
\epsilon_\ell
\sum_{n\ge1}
\frac{\delta_n}{(\delta_n+\epsilon_\ell)^2+t^2}
\Bigr\}
\,\rmd t
\end{equation}
with universal implied constants.
\end{proposition}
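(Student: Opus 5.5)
Our starting point is the identity $D_\ell=\sum_{m\ge1}\min\{m,\ell\}|u_m|^2$ from \cref{lem:DL-formulas}. We compare it with the Abel-weighted quantity $\sum_m(1-r^{2m})|u_m|^2=1-\|u_r\|_2^2$, where $u_r(z)=u(rz)$. Put $r=e^{-1/(2\ell)}$, so that $r^2=e^{-1/\ell}$ and $1-r=\epsilon_\ell$. The elementary comparison $1-e^{-m/\ell}\asymp\min\{m,\ell\}/\ell$, valid with universal constants for $m\ge1$, then gives
\begin{equation*}
D_\ell\asymp\ell\sum_{m\ge1}(1-r^{2m})|u_m|^2=\ell\int_{\bbT}\bigl(1-|u(r\zeta)|^2\bigr)\,\rmd m(\zeta).
\end{equation*}
Here we used $\sum_m|u_m|^2=1$ and Parseval on the circle of radius $r$. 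Since $\ell\asymp1/\epsilon_\ell$, it remains to show that $1-|u(re^{it})|^2\asymp\min\{1,\epsilon_\ell\Sigma(t)\}$, where $\Sigma(t):=\sum_n\delta_n/((\delta_n+\epsilon_\ell)^2+t^2)$. The integral over $\bbT$ then becomes $(2\pi)^{-1}\int_{-\pi}^{\pi}$, which gives \cref{eq:DL-zero-formula}.

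For the pointwise estimate, write $|u(w)|^2=\prod_n|b_{a_n}(w)|^2$ with $w=re^{it}$. We use the standard identity
\begin{equation*}
1-|b_a(w)|^2=\frac{(1-a^2)(1-|w|^2)}{|1-aw|^2}.
\end{equation*}
Here $1-a_n^2\asymp\delta_n$ and $1-r^2\asymp\epsilon_\ell$. For the denominator, $|1-a_nre^{it}|^2=(1-a_nr)^2+4a_nr\sin^2(t/2)$, and $1-a_nr\asymp\delta_n+\epsilon_\ell$. The factor $a_nr$ in front of $\sin^2$ causes trouble when $a_n$ is small, but in that case $(1-a_nr)^2\gtrsim1\gtrsim t^2$ already dominates. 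So $|1-a_nw|^2\asymp(\delta_n+\epsilon_\ell)^2+t^2$ uniformly for $|t|\le\pi$, and therefore $x_n:=1-|b_{a_n}(w)|^2\asymp\epsilon_\ell\delta_n/((\delta_n+\epsilon_\ell)^2+t^2)$.

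The remaining step is to turn this into a statement about the product. With $x_n\in[0,1]$ we have
\begin{equation*}
1-\prod_n(1-x_n)\asymp\min\Bigl\{1,\sum_nx_n\Bigr\},
\end{equation*}
because $1-e^{-S}\le1-\prod(1-x_n)\le\min\{1,S\}$ and $1-e^{-S}\ge(1-e^{-1})\min\{1,S\}$, where $S=\sum_nx_n$. The lower bound uses $1-x\le e^{-x}$.

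I expect the main obstacle to be the uniformity of the constants. We must check that the comparison $|1-a_nre^{it}|^2\asymp(\delta_n+\epsilon_\ell)^2+t^2$ holds with universal constants for all $a_n\in[0,1)$ and all $\ell\ge1$, including the case $a_n=0$, where $x_n$ is only comparable to $\epsilon_\ell$. A further point is that $\min\{1,cS\}\asymp\min\{1,S\}$ with constants depending only on $c$, so that the constants can be pulled out of the minimum. We also need to keep track of the convention that the normalized measure $m$ corresponds to $\rmd t/(2\pi)$. None of this involves delicate analysis; it is bookkeeping of elementary two-sided inequalities.
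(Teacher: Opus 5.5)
Your proposal is correct and follows the paper's proof essentially step by step. You use the same ingredients: the Abel-weighted comparison $D_\ell\asymp\ell\bigl(1-\|u_r\|_2^2\bigr)$ at $r=e^{-1/(2\ell)}$, the factor identity for $1-|b_a(w)|^2$ with uniform two-sided bounds on its numerator and denominator, and the product inequality $(1-e^{-1})\min\{1,S\}\le1-\prod_n(1-x_n)\le\min\{1,S\}$. Your case split for small $a_nr$ in the bound $|1-a_nre^{it}|^2\asymp(\delta_n+\epsilon_\ell)^2+t^2$ correctly supplies a detail the paper leaves implicit.
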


\begin{proof}
Let $r_\ell=e^{-1/(2\ell)}$ and $\epsilon=\epsilon_\ell$.  Since
$\sum_m|u_m|^2=1$,
\begin{equation}
1-\frac{1}{2\pi}\int_{-\pi}^{\pi}|u(r_\ell e^{it})|^2\,\rmd t
=\sum_{m\ge1}\bigl(1-e^{-m/\ell}\bigr)|u_m|^2 ,
\end{equation}
and $(1-e^{-1})\min\{x,1\}\le1-e^{-x}\le\min\{x,1\}$ together with
\cref{eq:DL-coefficients} shows that $D_\ell$ is comparable to $\ell$
times the left-hand side.  For a single factor
$1-|b_a(z)|^2=(1-a^2)(1-|z|^2)/|1-az|^2$, and uniformly in
$t\in[-\pi,\pi]$ and $a\in[0,1)$ one has
$(1-a^2)(1-r_\ell^2)\asymp\delta\epsilon$ and
$|1-ar_\ell e^{it}|^2\asymp(\delta+\epsilon)^2+t^2$, with $\delta=1-a$.
Hence $x_n(t):=1-|b_{a_n}(r_\ell e^{it})|^2\asymp
\delta_n\epsilon/((\delta_n+\epsilon)^2+t^2)$, and
$|u(r_\ell e^{it})|^2=\prod_n(1-x_n(t))$.  Finally
$(1-e^{-1})\min\{1,\sum_nx_n\}\le1-\prod_n(1-x_n)\le\min\{1,\sum_nx_n\}$
for $0\le x_n\le1$.
\end{proof}

\begin{corollary}\label{cor:DL-upper}
$D_\ell\le C\sum_{n\ge1}\min\{1,\ell\delta_n\}$ with a universal constant
$C$.
\end{corollary}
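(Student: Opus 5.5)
The plan is to start from \cref{prop:radial} and simply bound the minimum by the second argument term by term, after first discarding the zeros that are too close to the boundary in a way that handles large $\delta_n$. Two elementary estimates are needed for each $n$. The first is the trivial bound obtained from $\min\{1,\cdot\}\le1$ applied only to one summand, which cannot be used directly because the min sits outside the sum. Instead we use subadditivity, $\min\{1,\sum_nx_n\}\le\sum_n\min\{1,x_n\}$, valid for nonnegative $x_n$. This reduces \cref{eq:DL-zero-formula} to
\begin{equation*}
D_\ell\le \frac{C}{\epsilon_\ell}\sum_{n\ge1}\int_{-\pi}^{\pi}\min\Bigl\{1,\frac{\epsilon_\ell\delta_n}{(\delta_n+\epsilon_\ell)^2+t^2}\Bigr\}\,\rmd t ,
\end{equation*}
so it suffices to show that each integral is at most $C\epsilon_\ell\min\{1,\ell\delta_n\}$, using $\epsilon_\ell\asymp1/\ell$.

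For a single zero, write $\delta=\delta_n$ and $\epsilon=\epsilon_\ell$. There are two cases. If $\delta\le\epsilon$, drop the minimum and integrate over $\bbR$:
\begin{equation*}
\int\frac{\epsilon\delta}{(\delta+\epsilon)^2+t^2}\,\rmd t=\frac{\pi\epsilon\delta}{\delta+\epsilon}\le\pi\delta\le 4\pi\epsilon\,\ell\delta .
\end{equation*}
If $\delta>\epsilon$, the same computation gives at most $\pi\epsilon\delta/(\delta+\epsilon)\le\pi\epsilon$. Dividing by $\epsilon$, each term is bounded by $C\min\{1,\ell\delta\}$ with a universal constant, and summing over $n$ proves the corollary.

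No step is a real obstacle. The one point needing care is interchanging the minimum with the sum, and subadditivity of $x\mapsto\min\{1,x\}$ handles it. Even the crude bound that keeps only the second argument suffices, because the Lorentzian integral already saturates at order $\epsilon$ once $\delta>\epsilon$.
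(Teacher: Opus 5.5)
Your proof is correct and is essentially the paper's argument: drop the minimum in \cref{eq:DL-zero-formula}, integrate the Lorentzian over $\bbR$ to get $\pi\delta_n/(\delta_n+\epsilon_\ell)\le\pi\min\{1,\delta_n/\epsilon_\ell\}$, and use $\epsilon_\ell\ge(4\ell)^{-1}$. The subadditivity step is harmless but unnecessary, since you discard the minimum anyway right afterwards.
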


\begin{proof}
Drop the minimum in \cref{eq:DL-zero-formula} and use
$\int_\bbR\delta\,\rmd t/((\delta+\epsilon)^2+t^2)=\pi\delta/(\delta+\epsilon)
\le\pi\min\{1,\delta/\epsilon\}$ and $\epsilon_\ell\ge(4\ell)^{-1}$.
\end{proof}

For upper bounds sharper than \cref{eq:entropy-DL} we truncate the
product.  The following continuity estimate combines Mirsky's
eigenvalue perturbation inequality \cite{Mirsky1960Symmetric,
Bhatia1997Matrix} with the two-level case of the Fannes--Audenaert
inequality \cite{Audenaert2007Sharp}, $|\hb(s)-\hb(t)|\le\hb(|s-t|)$, and
the concavity of $\hb$.

\begin{lemma}\label{lem:entropy-continuity}
Let $A,B$ be positive contractions on a space of dimension $k$ with
$\|A-B\|_1\le k/2$.  Then
$|\Tr\hb(A)-\Tr\hb(B)|\le k\,\hb\bigl(\|A-B\|_1/k\bigr)$.
\end{lemma}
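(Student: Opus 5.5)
The plan is to combine Mirsky's inequality with the two-level Fannes--Audenaert bound $|\hb(s)-\hb(t)|\le\hb(|s-t|)$ and then use concavity of $\hb$.

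Let $\alpha_1\ge\dots\ge\alpha_k$ and $\beta_1\ge\dots\ge\beta_k$ be the eigenvalues of $A$ and $B$ in decreasing order; they lie in $[0,1]$ because $A,B$ are positive contractions. Mirsky's inequality, the unitarily invariant norm version of Lidskii's theorem applied with the trace norm, gives
\begin{equation*}
\sum_{j=1}^k|\alpha_j-\beta_j|\le\|A-B\|_1 .
\end{equation*}
Next write $\Tr\hb(A)-\Tr\hb(B)=\sum_j\bigl(\hb(\alpha_j)-\hb(\beta_j)\bigr)$ and apply the two-level bound termwise. This yields
\begin{equation*}
|\Tr\hb(A)-\Tr\hb(B)|\le\sum_{j=1}^k\hb(|\alpha_j-\beta_j|).
\end{equation*}

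Concavity of $\hb$ on $[0,1]$ (Jensen's inequality) bounds the right-hand side by $k\,\hb(s)$ with $s:=k^{-1}\sum_j|\alpha_j-\beta_j|$. Mirsky gives $s\le\|A-B\|_1/k$, and this quantity is at most $1/2$ by hypothesis. Since $\hb$ is increasing on $[0,\frac12]$, we get $\hb(s)\le\hb(\|A-B\|_1/k)$, which is the claim.

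The two-level inequality is cited from Audenaert, so we take it as given. Monotonicity of $\hb$ is needed only in the final step, and it is exactly where the hypothesis $\|A-B\|_1\le k/2$ is used. Since $s\le\frac12$ automatically, no case split is needed, and the argument has no real obstacle.
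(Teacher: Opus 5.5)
Your proof is correct and follows the paper's own argument step for step. The paper likewise uses Mirsky's inequality to pair the eigenvalues, the two-level Fannes--Audenaert bound termwise, concavity of $\hb$, and monotonicity of $\hb$ on $[0,\frac12]$, which is where the hypothesis $\|A-B\|_1\le k/2$ enters.
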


\begin{proposition}\label{prop:truncated}
For $N\ge0$ let $u^{(N)}=\prod_{n=1}^Nb_{a_n}$, with $u^{(0)}=1$, and
$\tau_N:=\sum_{n>N}\delta_n$.  If $\sqrt{\gamma_\kappa\tau_N/2}\le\frac12$,
then
\begin{equation}\label{eq:truncated-bound}
S_\kappa(\ell)\le2N\log2+2\ell\,\hb\Bigl(\sqrt{\gamma_\kappa\tau_N/2}\Bigr).
\end{equation}
\end{proposition}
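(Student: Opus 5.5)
We compare the interval covariance $C$ of $\omega_\kappa$ on $I_\ell$, namely \cref{eq:interval-cov} with $T=T_\ell=Q_\ell T_uQ_\ell$, with the covariance $C^{(N)}$ obtained by replacing $u$ by $u^{(N)}$, that is with $T^{(N)}_\ell:=Q_\ell T_{u^{(N)}}Q_\ell$. Both are positive contractions on a space of dimension $k=2\ell$. The plan is to bound $\Tr\hb(C^{(N)})$ by $2N\log2$ and to control $|\Tr\hb(C)-\Tr\hb(C^{(N)})|$ by \cref{lem:entropy-continuity}.

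\emph{Step 1: the truncated entropy.} By \cref{lem:DL-formulas} applied to the finite Blaschke product $u^{(N)}$, the defect $\Tr(\Id-T^{(N)*}_\ell T^{(N)}_\ell)=\Tr(Q_\ell P_{K_{u^{(N)}}}Q_\ell)$ is at most $\dim K_{u^{(N)}}=N$. The identity $T^{(N)}_\ell T^{(N)*}_\ell=Q_\ell(\Id-P_{K_{u^{(N)}}})Q_\ell$ shows that $\Id-T^{(N)}_\ell T^{(N)*}_\ell$ has rank at most $N$. Hence at most $N$ of the $d_j$ are nonzero. By \cref{prop:interval-entropy} each contributes $F_\kappa(d_j)\le2\log2$ (two modes with $\hb\le\log2$), and $F_\kappa(0)=0$. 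So $\Tr\hb(C^{(N)})\le2N\log2$.

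\emph{Step 2: the trace-norm distance.} We have $C-C^{(N)}=\frac{\sqrt\kappa}{1+\kappa}\left(\begin{smallmatrix}0&E\\E^*&0\end{smallmatrix}\right)$ with $E=Q_\ell T_{u-u^{(N)}}Q_\ell$, so $\|C-C^{(N)}\|_1=\frac{2\sqrt\kappa}{1+\kappa}\|E\|_1=\sqrt{\gamma_\kappa}\,\|E\|_1$. By Cauchy--Schwarz, $\|E\|_1\le\sqrt\ell\,\|E\|_2$, and
\[
\|E\|_2^2\le\ell\,\|u-u^{(N)}\|_2^2
\]
because each column of the lower triangular Toeplitz matrix has squared norm at most $\|u-u^{(N)}\|_2^2$. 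Writing $u=u^{(N)}v$ with $v=\prod_{n>N}b_{a_n}$ and $|u^{(N)}|=1$ on $\bbT$, we get $\|u-u^{(N)}\|_2^2=\|v-1\|_2^2=2-2\,\mathrm{Re}\,v(0)$. Since $v(0)=\prod_{n>N}a_n$ and $a_n=1-\delta_n$,
\[
1-v(0)\le\sum_{n>N}\delta_n=\tau_N .
\]
Therefore $\|u-u^{(N)}\|_2^2\le2\tau_N$ and $\|C-C^{(N)}\|_1\le\ell\sqrt{2\gamma_\kappa\tau_N}$. Dividing by $k=2\ell$ gives $\|C-C^{(N)}\|_1/k\le\sqrt{\gamma_\kappa\tau_N/2}$, which is at most $\tfrac12$ by hypothesis; this is exactly the smallness condition required by \cref{lem:entropy-continuity}.

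\emph{Step 3: conclusion.} Since $\hb$ is increasing on $[0,\tfrac12]$, \cref{lem:entropy-continuity} gives
\[
|\Tr\hb(C)-\Tr\hb(C^{(N)})|\le2\ell\,\hb\bigl(\sqrt{\gamma_\kappa\tau_N/2}\bigr),
\]
and combining this with Step 1 yields \cref{eq:truncated-bound}.

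The step that carries the most weight is Step 2, the Hilbert--Schmidt estimate, in particular the clean identity $\|u-u^{(N)}\|_2^2=2(1-v(0))$. The rank bound in Step 1 is the other point to check carefully.
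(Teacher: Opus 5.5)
Your proposal is correct and follows the paper's proof step for step. It uses the same truncated covariance $C^{(N)}$ with the rank bound giving $\Tr\hb(C^{(N)})\le2N\log2$, the same identity $\|u-u^{(N)}\|_2^2=2\bigl(1-\prod_{n>N}a_n\bigr)\le2\tau_N$ leading to $\|C-C^{(N)}\|_1\le\ell\sqrt{2\gamma_\kappa\tau_N}$, and the same application of the continuity lemma with $k=2\ell$. You additionally spell out two points the paper leaves implicit: the equality of the spectra of $\Id-T^{(N)*}_\ell T^{(N)}_\ell$ and $\Id-T^{(N)}_\ell T^{(N)*}_\ell$, and the monotonicity of $\hb$ on $[0,\tfrac12]$ in the final step.
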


\begin{proof}
Let $C^{(N)}$ be the interval covariance built from $u^{(N)}$.  Since
$\dim K_{u^{(N)}}=N$, \cref{lem:DL-formulas} shows that at most $N$ of the
numbers $d_j$ belonging to $u^{(N)}$ are nonzero, and $F_\kappa\le2\log2$,
so $\Tr\hb(C^{(N)})\le2N\log2$.  Write $u=u^{(N)}u_{>N}$ with
$u_{>N}=\prod_{n>N}b_{a_n}$.  Since $|u^{(N)}|=1$ on $\bbT$ and
$u_{>N}(0)=\prod_{n>N}a_n$,
\begin{equation}
\|u-u^{(N)}\|_2^2=\|u_{>N}-1\|_2^2=2\Bigl(1-\prod_{n>N}a_n\Bigr)\le2\tau_N .
\end{equation}
For $v\in H^2$ we have $\|Q_\ell M_vQ_\ell\|_2^2\le\ell\|v\|_2^2$ and hence
$\|Q_\ell M_vQ_\ell\|_1\le\ell\|v\|_2$.  The two covariances differ only in
the off-diagonal blocks, so
$\|C-C^{(N)}\|_1\le\sqrt{\gamma_\kappa}\,\ell\sqrt{2\tau_N}$, and
\cref{lem:entropy-continuity} with $k=2\ell$ gives the claim.
\end{proof}

\subsection{Lacunary zeros}\label{sec:lacunary}

Call the zeros lacunary if $\delta_{n+1}\le\delta_n/2$ for all $n$, and
put
\begin{equation}
    \mathcal N(L):=\#\{n:\delta_n\ge1/L\}.
\end{equation}

\begin{proposition}\label{prop:lacunary}
For lacunary zeros and $\ell\ge2$,
\begin{equation}\label{eq:lacunary}
    c\,\gamma_\kappa\,\mathcal N(\ell)\le S_\kappa(\ell)
    \le2\log2\;\mathcal N(\ell^4)+3,
\end{equation}
with a universal constant $c>0$.
\end{proposition}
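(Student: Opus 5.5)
The plan is to prove the two inequalities separately, using \cref{prop:radial} for the lower bound and \cref{prop:truncated} for the upper bound.

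\emph{Lower bound.} By \cref{eq:entropy-DL} we have $S_\kappa(\ell)\ge2\gamma_\kappa\log2\,D_\ell$, so it suffices to prove $D_\ell\ge c\,\mathcal N(\ell)$. In \cref{eq:DL-zero-formula}, fix $n$ with $\delta_n\ge1/\ell$; then $\delta_n\ge2\epsilon_\ell$. On the window $|t|\le\delta_n$ the $n$th term of the sum satisfies
\[
\epsilon_\ell\,\delta_n/((\delta_n+\epsilon_\ell)^2+t^2)\gtrsim\epsilon_\ell/\delta_n,
\]
so the integrand is $\gtrsim\epsilon_\ell/\delta_n$ there, since this quantity is at most $1$. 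The window $|t|\le\delta_n$ therefore contributes $\gtrsim\epsilon_\ell$ to the integral, hence $\gtrsim1$ to $D_\ell$. To make the contributions of different $n$ add, I will use lacunarity and restrict to the disjoint annuli $\delta_n/2<|t|\le\delta_n$, where $\delta_{n+1}\le\delta_n/2$. Each annulus has length $\asymp\delta_n$, and on it the $n$th term alone gives integrand $\gtrsim\epsilon_\ell/\delta_n$. Summing over the $\mathcal N(\ell)$ indices gives $D_\ell\ge c\,\mathcal N(\ell)$.

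\emph{Upper bound.} Apply \cref{prop:truncated} with $N:=\mathcal N(\ell^4)$. The indices $n>N$ have $\delta_n<\ell^{-4}$ and are lacunary, so
\[
\tau_N\le2\,\delta_{N+1}<2\ell^{-4}.
\]
Then $\sqrt{\gamma_\kappa\tau_N/2}\le\ell^{-2}\le\tfrac14$ for $\ell\ge2$, since $\gamma_\kappa\le1$, so the hypothesis of \cref{prop:truncated} holds. It gives
\[
S_\kappa(\ell)\le2N\log2+2\ell\,\hb(\ell^{-2}),
\]
using the monotonicity of $\hb$ on $[0,\tfrac12]$. Finally, $\hb(t)\le t\log(e/t)$ yields
\[
2\ell\,\hb(\ell^{-2})\le2\ell^{-1}(1+2\log\ell),
\]
which is at most $3$ for $\ell\ge2$ (it is about $2.4$ at $\ell=2$ and decreasing).

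\emph{Main obstacle.} The upper bound is routine bookkeeping. The delicate step is the lower bound: one must check uniformly that on each annulus the single-term estimate is not destroyed by the $\min\{1,\cdot\}$. This holds because $\epsilon_\ell/\delta_n\le\tfrac12$, and the other terms only increase the integrand. One must also check that the implied constants are universal, in particular independent of $\kappa$, with the $\gamma_\kappa$ dependence carried entirely by \cref{eq:entropy-DL}.
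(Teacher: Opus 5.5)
Your proposal is correct and follows the paper's proof essentially step for step. The lower bound comes from disjoint arcs in \cref{eq:DL-zero-formula}; you take $\delta_n/2<|t|\le\delta_n$ where the paper takes $[\delta_n/4,\delta_n/2]$, which is an inessential difference. The upper bound comes from \cref{prop:truncated} with $N=\mathcal N(\ell^4)$, the lacunary tail estimate $\tau_N\le2\delta_{N+1}<2\ell^{-4}$, and the bound $\hb(t)\le t\log(e/t)$, exactly as in the paper.
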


\begin{proof}
Lower bound.  Let $\delta_n\ge1/\ell$; then $\delta_n\ge\epsilon_\ell$.
On the arc $A_n=[\delta_n/4,\delta_n/2]$ we have
$(\delta_n+\epsilon_\ell)^2+t^2\le5\delta_n^2$, so the $n$-th summand in
\cref{eq:DL-zero-formula} gives
$\min\{1,\epsilon_\ell\sum_m\cdots\}\ge\epsilon_\ell/(5\delta_n)$ on
$A_n$.  The arcs $A_n$ are disjoint by lacunarity, and each contributes at
least $1/20$ to the integral divided by $\epsilon_\ell$.  Hence
$D_\ell\ge c'\mathcal N(\ell)$, and \cref{eq:entropy-DL} gives the lower
bound.

Upper bound.  Let $N=\mathcal N(\ell^4)$.  Then $\delta_{N+1}<\ell^{-4}$,
and by lacunarity $\tau_N\le2\delta_{N+1}<2\ell^{-4}$, so
$\sqrt{\gamma_\kappa\tau_N/2}<\ell^{-2}\le\frac14$.  By
\cref{prop:truncated} and $\hb(t)\le t\log(e/t)$,
$S_\kappa(\ell)\le2N\log2+2\ell^{-1}(1+2\log\ell)$, and the last term is
less than $3$.
\end{proof}

\begin{corollary}\label{cor:lacunary-examples}
For fixed $\kappa\in(0,1]$:
\begin{enumerate}[label=\textup{(\roman*)}]
    \item $\delta_n=2^{-n}$ gives $S_\kappa(\ell)\asymp\log\ell$;
    \item $\delta_n=2^{-n^2}$ gives $S_\kappa(\ell)\asymp\sqrt{\log\ell}$;
    \item $\delta_n=2^{-2^n}$ gives $S_\kappa(\ell)\asymp\log\log\ell$.
\end{enumerate}
\end{corollary}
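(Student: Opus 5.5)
The plan is to feed each choice of zeros into \cref{prop:lacunary}. All three sequences are lacunary: $2^{-(n+1)}=2^{-n}/2$, and $2^{-(n+1)^2}=2^{-n^2}2^{-2n-1}\le2^{-n^2}/2$, and $2^{-2^{n+1}}=(2^{-2^n})^2\le2^{-2^n}/2$. Each also satisfies $\sum_n\delta_n<\infty$, so \cref{eq:standing} holds. \Cref{prop:lacunary} then gives, for $\ell\ge2$,
\begin{equation*}
c\gamma_\kappa\mathcal N(\ell)\le S_\kappa(\ell)\le2\log2\,\mathcal N(\ell^4)+3 ,
\end{equation*}
with $\gamma_\kappa>0$ fixed because $\kappa\in(0,1]$ is fixed. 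The whole statement therefore reduces to counting $\mathcal N(L)=\#\{n:\delta_n\ge1/L\}$ at $L=\ell$ and at $L=\ell^4$, and then checking that the additive constant $3$ is absorbed for large $\ell$.

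The counting is elementary in each case.
\begin{enumerate}[label=\textup{(\roman*)}]
\item For $\delta_n=2^{-n}$ the condition is $n\le\log_2L$, so $\mathcal N(L)=\lfloor\log_2L\rfloor$. This gives $\mathcal N(\ell)\asymp\log\ell$ and $\mathcal N(\ell^4)\le4\log_2\ell$.
\item For $\delta_n=2^{-n^2}$ the condition is $n\le\sqrt{\log_2L}$. This gives $\mathcal N(\ell)=\lfloor\sqrt{\log_2\ell}\rfloor$ and $\mathcal N(\ell^4)\le2\sqrt{\log_2\ell}$.
\item For $\delta_n=2^{-2^n}$ the condition is $n\le\log_2\log_2L$. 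This gives $\mathcal N(\ell)=\lfloor\log_2\log_2\ell\rfloor$ and $\mathcal N(\ell^4)\le\log_2\log_2\ell+2$.
\end{enumerate}
In every case $\mathcal N(\ell^4)\le C\,\mathcal N(\ell)+C$. Combined with the displayed inequality, this gives $S_\kappa(\ell)\asymp g(\ell)$, where $g$ is $\log\ell$, $\sqrt{\log\ell}$ or $\log\log\ell$ respectively, as soon as $\ell$ is large enough that $g(\ell)\ge1$.

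The only delicate point is the range of small $\ell$, where $\mathcal N(\ell)$ can vanish; in case (iii), for example, $\mathcal N(\ell)=0$ for $\ell<4$. This matters only for the lower bound. I would handle it in one of two ways: either interpret $\asymp$ for $\ell$ large (say $\ell\ge16$), or use $S_\kappa(\ell)>0$ for every $\ell\ge1$. The latter holds because $D_\ell\ge D_1=1-|u_0|^2>0$ when some $a_n>0$, together with \cref{eq:entropy-DL}, so on any finite range of $\ell$ the constants can simply be adjusted. Otherwise the proof is direct.
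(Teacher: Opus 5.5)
Your proposal is correct and is essentially the paper's proof: verify lacunarity, compute $\mathcal N(L)$ explicitly in each case, bound $\mathcal N(\ell^4)$ by a linear function of $\mathcal N(\ell)$, and apply \cref{prop:lacunary}. Your extra remark on small $\ell$ is a sensible clarification that the paper leaves implicit: $\asymp$ has to be read for large $\ell$, or the constants adjusted on a finite range where the comparison function is positive, using $S_\kappa(\ell)\ge2\gamma_\kappa\log2\,D_1>0$.
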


\begin{proof}
All three sequences are lacunary.  In the three cases
$\mathcal N(L)=\lfloor\log_2L\rfloor$,
$\mathcal N(L)=\lfloor\sqrt{\log_2L}\rfloor$ and
$\mathcal N(L)=\lfloor\log_2\log_2L\rfloor$, respectively, and
$\mathcal N(\ell^4)$ is at most $4\mathcal N(\ell)+4$,
$2\mathcal N(\ell)+2$ and $\mathcal N(\ell)+3$.
\end{proof}

\subsection{Polynomial zeros}\label{sec:polynomial}

\begin{proposition}\label{prop:polynomial}
Let $\delta_n=n^{-\alpha}$ with $\alpha>1$.  Then
$D_\ell\asymp\ell^{1/(\alpha+1)}$ and
\begin{equation}\label{eq:entropy-polynomial-bounds}
c_\kappa\ell^{1/(\alpha+1)}\le S_\kappa(\ell)
\le C_\kappa\ell^{1/(\alpha+1)}\log\ell
\qquad(\ell\ge2).
\end{equation}
\end{proposition}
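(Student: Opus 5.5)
The plan is to prove $D_\ell\asymp\ell^{1/(\alpha+1)}$ first and then convert it into entropy bounds using \cref{eq:entropy-DL} with $n=\ell$. Set $L:=\ell^{1/(\alpha+1)}$.

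\emph{Upper bound on $D_\ell$.} Apply \cref{cor:DL-upper} and split the sum at $n_0:=\ell^{1/\alpha}$:
\begin{equation*}
D_\ell\le C\Bigl(\#\{n\le n_0\}+\ell\sum_{n>n_0}n^{-\alpha}\Bigr)\asymp\ell^{1/\alpha}.
\end{equation*}
This exponent $1/\alpha$ is weaker than the target $1/(\alpha+1)$, so the minimum in \cref{eq:DL-zero-formula} must be used. Write $\epsilon=\epsilon_\ell\asymp1/\ell$ and $\Sigma(t):=\sum_n\delta_n/((\delta_n+\epsilon)^2+t^2)$. The steps are as follows.

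\begin{enumerate}[label=\textup{(\arabic*)}]
\item For $|t|\ge\epsilon$, estimate $\Sigma(t)$ by splitting the $n$-sum according to whether $\delta_n\gtrless|t|$. Terms with $\delta_n\le|t|$ contribute at most $t^{-2}\sum_{\delta_n\le|t|}\delta_n\asymp t^{-2}|t|^{(\alpha-1)/\alpha}$. Terms with $\delta_n>|t|$ contribute at most $\sum_{\delta_n>|t|}\delta_n^{-1}\asymp|t|^{-1-1/\alpha}$, since there are about $|t|^{-1/\alpha}$ such $n$. Both parts give $\Sigma(t)\asymp|t|^{-1-1/\alpha}$.
\item The quantity $\epsilon\Sigma(t)$ exceeds $1$ exactly when $|t|\lesssim t_*:=\epsilon^{\alpha/(\alpha+1)}$.
\item Hence the integral in \cref{eq:DL-zero-formula} is comparable to
\begin{equation*}
t_*+\epsilon\int_{t_*}^{\pi}t^{-1-1/\alpha}\,\rmd t\asymp t_*+\epsilon\,t_*^{-1/\alpha}\asymp t_*,
\end{equation*}
because $\epsilon\,t_*^{-1/\alpha}=\epsilon^{1-1/(\alpha+1)}=t_*$.
\item Dividing by $\epsilon$ gives $D_\ell\asymp\epsilon^{-1/(\alpha+1)}\asymp\ell^{1/(\alpha+1)}$.
\end{enumerate}
For the lower bound, the same computation with the lower estimate $\Sigma(t)\gtrsim|t|^{-1-1/\alpha}$ works. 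That estimate comes from the terms with $\delta_n\in[|t|,2|t|]$: there are about $|t|^{-1/\alpha}$ of them, each of size about $1/|t|$. This shows $\min\{1,\epsilon\Sigma\}\gtrsim1$ on $|t|\lesssim t_*$, with $t_*\ge\epsilon$ so that the region is nonempty.

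\emph{Entropy.} The lower bound is $S_\kappa(\ell)\ge2\gamma_\kappa\log2\,D_\ell\ge c_\kappa\ell^{1/(\alpha+1)}$ by \cref{eq:entropy-DL}. For the upper bound, \cref{eq:entropy-DL} with $n=\ell$ gives
\begin{equation*}
S_\kappa(\ell)\le C_0\gamma_\kappa D_\ell\log\frac{e\ell}{\gamma_\kappa D_\ell}\le C_\kappa\ell^{1/(\alpha+1)}\log\ell,
\end{equation*}
using that $x\mapsto x\log(e\ell/x)$ is increasing for $x\le\ell$ and that $D_\ell\le C\ell^{1/(\alpha+1)}$. When $\gamma_\kappa D_\ell$ is small relative to $\ell$ the logarithm stays of order $\log\ell$.

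\emph{Main obstacle.} The hardest step is the two-sided estimate of $\Sigma(t)$ uniformly in $t$ and $\epsilon$. It must handle the transition region where $\delta_n$ is comparable to $\epsilon$, that is, where the $(\delta_n+\epsilon)^2$ term rather than $t^2$ dominates. That region must not spoil the crossover point $t_*$. Since $t_*\gg\epsilon$, I expect the terms with $\delta_n<\epsilon$ to contribute only at most $\epsilon^{(\alpha-1)/\alpha}/t^2$, which is harmless. I would verify this first by comparing the sum with the integral $\int\rmd x\,x^{-\alpha}/((x^{-\alpha}+\epsilon)^2+t^2)$.
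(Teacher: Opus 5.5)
Your proposal is correct and follows essentially the same route as the paper: you split the zero sum at $\delta_n\asymp|t|$, obtain $\Sigma(t)\asymp|t|^{-1-1/\alpha}$, locate the crossover $t_*\asymp\epsilon^{\alpha/(\alpha+1)}$, and apply \cref{eq:entropy-DL}. The one difference is cosmetic: the paper uses $\rho=\epsilon+|t|$ to treat all $t$ at once, while you restrict to $|t|\ge\epsilon$ and bound the minimum by $1$ on $|t|\le t_*$, which suffices because $t_*\ge\epsilon$. So the transition region you flag as the main obstacle never actually arises in your argument.
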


\begin{proof}
With $\rho=\epsilon+|t|$, split the sum in \cref{eq:DL-zero-formula} at
$n\asymp\rho^{-1/\alpha}$: for smaller $n$ the summand is of order
$n^{\alpha}$, for larger $n$ of order $n^{-\alpha}\rho^{-2}$, and both parts
are of order $\rho^{-1-1/\alpha}$.  Hence
$D_\ell\asymp\epsilon^{-1}\int_{-\pi}^{\pi}
\min\{1,\epsilon(\epsilon+|t|)^{-1-1/\alpha}\}\,\rmd t$.  The two entries of
the minimum are comparable at $t_\epsilon\asymp\epsilon^{\alpha/(\alpha+1)}$,
so $D_\ell\asymp t_\epsilon/\epsilon+\int_{t_\epsilon}^1
t^{-1-1/\alpha}\,\rmd t\asymp\epsilon^{-1/(\alpha+1)}$.  Now apply
\cref{eq:entropy-DL}.
\end{proof}

Removing the logarithmic factor in the upper bound would require finer
information on the singular values of $T_\ell$ than the trace $D_\ell$.

\subsection{Arbitrarily slow entropy growth}\label{sec:slow}

\begin{proposition}\label{prop:slow}
Let $m\ge1$ and let $f:\bbN\to(0,\infty)$ be nondecreasing with
$f(\ell)\to\infty$.  There are lacunary zeros $0<a_1<a_2<\cdots<1$ such
that, for every $\boldsymbol\kappa\in[0,1]^m$, the state
$\omega_{\boldsymbol\kappa}$ satisfies $S_{\omega_{\boldsymbol\kappa}}(\ell)
\le f(\ell)$ for every $\ell\ge1$.  If some $\kappa_j>0$, then moreover
$S_{\omega_{\boldsymbol\kappa}}(\ell)\to\infty$.
\end{proposition}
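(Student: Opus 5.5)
The plan is to reduce to a single block and then choose the zeros sparse enough for the truncation estimate to work. Blocks are uncorrelated, so $S_{\omega_{\boldsymbol\kappa}}(\ell)=\sum_jS_{\kappa_j}(\ell)$, and blocks with $\kappa_j=0$ contribute zero. Hence it suffices to find zeros with $S_\kappa(\ell)\le f(\ell)/m$ for every $\kappa\in(0,1]$ and every $\ell\ge1$. We will use \cref{prop:truncated} with $\gamma_\kappa\le1$. Since $\hb$ is increasing on $[0,\frac12]$, this gives, uniformly in $\kappa$,
\begin{equation*}
S_\kappa(\ell)\le2N\log2+2\ell\,\hb\bigl(\sqrt{\tau_N/2}\bigr)
\qquad\text{whenever }\sqrt{\tau_N/2}\le\tfrac12 .
\end{equation*}
Divergence, when some $\kappa_j>0$, is already \cref{lem:DL-formulas}, or \cref{cor:entropy-diverges} together with \cref{prop:modular-family-types}.

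Next we choose $N$ depending on $\ell$. Set $g:=f/m$, still nondecreasing and divergent. Choose integers $1\le L_1<L_2<\cdots$ with $g(\ell)\ge2(k+1)\log2+1$ for all $\ell\ge L_k$, and choose $L_0=1$ (any $\ell\ge1$). For $L_k\le\ell<L_{k+1}$ we use the bound with $N=k$. Then $2k\log2\le g(\ell)-1$, and it remains to make $2\ell\,\hb(\sqrt{\tau_k/2})\le1$ for all $\ell<L_{k+1}$. It suffices to have
\begin{equation*}
2L_{k+1}\,\hb\bigl(\sqrt{\tau_k/2}\bigr)\le1 ,
\end{equation*}
and since $\hb(t)\to0$ as $t\to0$, this holds once $\tau_k\le\eta_k$ for a suitable $\eta_k>0$ depending only on $L_{k+1}$. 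Taking $\eta_k\le\frac12$ also secures the hypothesis $\sqrt{\tau_k/2}\le\frac12$.

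The delicate point is the case $k=0$, i.e.\ $\ell<L_1$, where $N=0$ and $\tau_0=\sum_n\delta_n$ must already be small. For these $\ell$ we need $g(\ell)\ge1$ when $\ell<L_1$, but $g$ might be tiny for small $\ell$. The fix is to allow the threshold to start at $0$: the bound $2\ell\,\hb(\sqrt{\tau_0/2})$ can be made smaller than $\min_{\ell<L_1}g(\ell)>0$ by taking $\tau_0$ small enough. Here we use that $f>0$, and $\ell<L_1$ is a finite range. So we require
\begin{equation*}
2L_1\,\hb\bigl(\sqrt{\tau_0/2}\bigr)\le\min_{\ell<L_1}g(\ell) ,
\end{equation*}
and in the later ranges we replace the slack $1$ by $\min\{1,g(1)\}$ in the same way.

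Finally we choose the zeros recursively. Put $\delta_n:=2^{-n}\varepsilon_n$ with $\varepsilon_n$ nonincreasing and $\varepsilon_n\le\frac12\eta_{n-1}$. Then $\delta_{n+1}\le\delta_n/2$, so the zeros are lacunary, $a_n$ is strictly increasing in $(0,1)$, and
\begin{equation*}
\tau_k=\sum_{n>k}\delta_n\le\varepsilon_{k+1}\sum_{n>k}2^{-n}\le\eta_k ,
\end{equation*}
using the monotonicity of $\varepsilon$ and $\eta_j$ chosen nonincreasing, which we may assume after replacing $\eta_j$ by $\min_{i\le j}\eta_i$. The zeros are independent of $\boldsymbol\kappa$, which completes the plan. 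The only real obstacle is the bookkeeping for small $\ell$ described above. Everything else follows directly from \cref{prop:truncated}, applied uniformly in $\kappa$ via the bound $\gamma_\kappa\le1$.
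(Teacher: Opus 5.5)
Your proof is correct and follows essentially the paper's argument: thresholds $L_k$ where $f/m$ exceeds $2k\log2+\mathrm{const}$, \cref{prop:truncated} with $N=0$ and a small total $\sum_n\delta_n$ for $\ell<L_1$, and truncation after $k$ zeros on $L_k\le\ell<L_{k+1}$. The only difference is that you apply \cref{prop:truncated} directly, with a tail tolerance $\tau_k\le\eta_k$ tied to $L_{k+1}$, whereas the paper imposes $\delta_n\le(2L_n)^{-4}$ and invokes \cref{prop:lacunary}, which is itself \cref{prop:truncated} applied with $N=\mathcal N(\ell^4)$.
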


\begin{proof}
For $n\ge1$ let $L_n:=\min\{\ell\ge2:f(\ell)\ge m(2n\log2+3)\}$; this
sequence is nondecreasing and tends to infinity.  Choose $\eta\in(0,\frac14]$
so small that $2m\ell\,\hb(\sqrt{\eta/2})\le f(\ell)$ for the finitely many
$\ell<L_1$, and define $\delta_0:=\eta$ and
$\delta_n:=\min\{\delta_{n-1}/2,(2L_n)^{-4}\}$.  The sequence is lacunary,
$\sum_n\delta_n\le\eta$, and $\delta_n<L_n^{-4}$.

For $\ell<L_1$, \cref{prop:truncated} with $N=0$ gives
$S_\kappa(\ell)\le2\ell\,\hb(\sqrt{\eta/2})$ for each block, using
$\gamma_\kappa\le1$.  For $\ell\ge L_1$ let $n$ be such that
$L_n\le\ell<L_{n+1}$.  If $\delta_k\ge\ell^{-4}$, then $L_k<\ell$, hence
$k\le n$; so $\mathcal N(\ell^4)\le n$, and \cref{prop:lacunary} gives
$S_\kappa(\ell)\le2n\log2+3$ for each block.  Summing over the at most
$m$ blocks with $\kappa_j>0$ and using that $f$ is nondecreasing gives
$S_{\omega_{\boldsymbol\kappa}}(\ell)\le f(\ell)$ in both cases.
Divergence follows from \cref{lem:DL-formulas}.
\end{proof}

Combining \cref{cor:lacunary-examples,prop:polynomial,prop:slow} with
\cref{prop:modular-family-types} gives the fermionic version of
\cref{thm:main}(iii).  For instance, the dyadic zeros give states of type
$\mathrm{II}_\infty$, $\mathrm{III}_\lambda$ and $\mathrm{III}_1$ with the
same growth $\log\ell$, while the doubly exponential zeros give type
$\mathrm{III}_1$ states whose entropy grows like $\log\log\ell$.

\section{Jordan--Wigner--Araki transformation}\label{sec:JWA}

It remains to transport the fermionic states to a spin chain.  We use the
one-dimensional Jordan--Wigner transformation in the form introduced by
Araki for the two-sided infinite chain
\cite{Araki1984,ArakiMatsui1985,KEYL_2006,Aschbacher_2023}.  Let
$\A_{\rm f}=\CAR(\ell^2(\z)\otimes\bbC^\nu)$ with grading $\Theta$, and let
$\A_{\rm s}=\bigotimes_{x\in\z}M_{2^\nu}$, where the on-site algebra is
identified with $\CAR(\bbC^\nu)$ and graded by the on-site fermion parity
$\Gamma_x$; the grading $\Theta_{\rm s}$ of $\A_{\rm s}$ is the product of
the automorphisms $\mathrm{Ad}\,\Gamma_x$.  We write
$E_{\rm s}=\frac12(\mathrm{id}+\Theta_{\rm s})$ and
$E_{\rm f}=\frac12(\mathrm{id}+\Theta)$ for the conditional expectations
onto the even parts $\A_{\rm s}^+$ and $\A_{\rm f}^+$.

\subsection{Local structure and the dual state}

\begin{lemma}\label{lem:jw-local}
For every finite interval $I$ there is a grading-preserving
$*$-isomorphism $\vartheta_I:\A_{\rm f}(I)\to\A_{\rm s}(I)$, and
$\vartheta_{I'}(A)=\vartheta_I(A)$ for $I\subset I'$ and even
$A\in\A_{\rm f}(I)$.  Consequently the $\vartheta_I$ define a
$*$-isomorphism $\vartheta:\A_{\rm f}^+\to\A_{\rm s}^+$ which commutes with
lattice translations and maps even elements supported in an interval to
elements supported in the same interval.
\end{lemma}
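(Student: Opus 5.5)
I will construct $\vartheta_I$ explicitly through Jordan--Wigner strings.  Fix an interval $I=\{x_0,\dots,x_1\}$ and order its modes $(x,a)$ lexicographically, first by site and then by species.  In $\A_{\rm s}(I)=\bigotimes_{x\in I}\CAR(\bbC^\nu)$, let $\tilde c_{x,a}$ be the annihilation operator of species $a$ in the on-site copy at $x$.  These on-site operators satisfy the CAR within one site, but they commute between different sites.  I then put
\[
\vartheta_I(c_{x,a}):=\Bigl(\prod_{y\in I,\,y<x}\Gamma_y\Bigr)\tilde c_{x,a}.
\]
The $\Gamma_y$ are self-adjoint unitaries that anticommute with odd on-site elements at $y$ and commute with everything at other sites.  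A direct check then shows that the images satisfy the CAR \eqref{eq:CAR_alg}: two modes at different sites pick up exactly one sign from the string of the later one.  By the universal property of $\CAR(\ell^2(I)\otimes\bbC^\nu)$ this gives a $*$-homomorphism.  Both algebras are full matrix algebras of dimension $2^{2\nu|I|}$, and the source is simple, so $\vartheta_I$ is injective and hence a $*$-isomorphism.  Each image of a generator is odd for $\Theta_{\rm s}$ and the parity automorphisms are multiplicative, so $\vartheta_I$ preserves the grading.

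\emph{Compatibility.}  For $I\subset I'$ with $I'$ extending $I$ to the left by the sites $y\in I'\setminus I$, $y<x_0$, I get $\vartheta_{I'}(c_{x,a})=G\,\vartheta_I(c_{x,a})$, where $G:=\prod_{y\in I'\setminus I,\,y<x_0}\Gamma_y$.  The unitary $G$ commutes with $\A_{\rm s}(I)$ and satisfies $G^2=\Id$.  Extending $I$ to the right changes nothing.  An even element of $\A_{\rm f}(I)$ is a linear combination of monomials of even degree in the generators.  Under $\vartheta_{I'}$ each such monomial acquires an even number of factors $G$, which can be commuted together and cancel.  Hence $\vartheta_{I'}(A)=\vartheta_I(A)$ for even $A$.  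This step is the crux, and the point I would be careful about is that only left extensions produce strings.

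\emph{Global map.}  The even local algebras $\A_{\rm f}(I)^+$ form an increasing net whose union is dense in $\A_{\rm f}^+$, since $E_{\rm f}$ is a contractive projection mapping local elements to local even elements.  The compatible isometric maps $\vartheta_I|_{\A_{\rm f}(I)^+}$ therefore define an isometric $*$-homomorphism on the union, which extends to $\vartheta:\A_{\rm f}^+\to\A_{\rm s}^+$ by continuity.  Its image contains every $\A_{\rm s}(I)^+=\vartheta_I(\A_{\rm f}(I)^+)$, using that $\vartheta_I$ is a grading-preserving isomorphism.  These images are dense in $\A_{\rm s}^+$, so $\vartheta$ is onto.

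\emph{Translations and support.}  Translation covariance follows from $\vartheta_{I+1}(\tau(A))=\tau_{\rm s}(\vartheta_I(A))$, which is immediate from the formula because the strings shift along with the interval.  The support statement holds by construction, since $\vartheta_I$ takes values in $\A_{\rm s}(I)$.
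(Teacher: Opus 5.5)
Your proposal is correct and follows the same route as the paper. You use the same Jordan--Wigner strings $\Gamma_{x_0}\cdots\Gamma_{x-1}$ starting at the left endpoint of $I$; compatibility on even elements comes from the extra left string $G$ commuting with $\A_{\rm s}(I)$ and cancelling in pairs (the paper's ``strings of two odd operators combine to strings inside $I$''); and the map extends to $\A_{\rm f}^+$ by norm closure, with you supplying the CAR check, dimension count and density/surjectivity details that the paper leaves to its references.
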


\begin{proof}
On $I=[a,b]$ order the modes lexicographically and put
$\vartheta_I(c_{x,j})=\Gamma_a\cdots\Gamma_{x-1}\,\gamma_x(c_j)$, where
$\gamma_x$ is the identification of $\CAR(\bbC^\nu)$ with the on-site
algebra at $x$.  The strings of two odd operators combine to strings
inside $I$, which gives the compatibility on even elements; see
\cite{Araki1984} and \cite[Props.~4.1 and~4.3]{MoonNachtergaele2018}.  The
map on $\A_{\rm f}^+$ is obtained by norm closure.
\end{proof}

\begin{definition}\label{def:jw-state}
The spin state dual to an even state $\omega_{\rm f}$ on $\A_{\rm f}$ is
$\omega_{\rm s}:=\omega_{\rm f}\circ\vartheta^{-1}\circ E_{\rm s}$.
\end{definition}

\begin{lemma}\label{lem:jw-entropy}
If $\omega_{\rm f}$ is even and translation invariant, then
$\omega_{\rm s}$ is even and translation invariant, and for every finite
interval $I$ the density matrices of $\omega_{\rm f}|_{\A_{\rm f}(I)}$ and
$\omega_{\rm s}|_{\A_{\rm s}(I)}$ are unitarily equivalent.  In particular
$S_{\omega_{\rm s}}(I)=S_{\omega_{\rm f}}(I)$.
\end{lemma}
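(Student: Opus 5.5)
The three claims are checked one at a time, using \cref{lem:jw-local} and \cref{def:jw-state}.

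\emph{Evenness and translation invariance.} Write $\omega_{\rm s}=\omega_{\rm f}\circ\vartheta^{-1}\circ E_{\rm s}$. Since $E_{\rm s}\circ\Theta_{\rm s}=E_{\rm s}$, we get $\omega_{\rm s}\circ\Theta_{\rm s}=\omega_{\rm s}$, so $\omega_{\rm s}$ is even. For translations, let $\beta^{\rm s}$ and $\beta^{\rm f}$ be the unit shifts on $\A_{\rm s}$ and $\A_{\rm f}$. The shift $\beta^{\rm s}$ commutes with each $\mathrm{Ad}\,\Gamma_x$ up to relabelling $x$, hence with $\Theta_{\rm s}$ and with $E_{\rm s}$. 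By \cref{lem:jw-local}, $\vartheta\circ\beta^{\rm f}=\beta^{\rm s}\circ\vartheta$ on $\A_{\rm f}^+$. Therefore
\[
\omega_{\rm s}\circ\beta^{\rm s}=\omega_{\rm f}\circ\vartheta^{-1}\circ\beta^{\rm s}\circ E_{\rm s}=\omega_{\rm f}\circ\beta^{\rm f}\circ\vartheta^{-1}\circ E_{\rm s}=\omega_{\rm s}.
\]

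\emph{Density matrices.} Fix a finite interval $I$. By \cref{lem:jw-local}, $\vartheta_I:\A_{\rm f}(I)\to\A_{\rm s}(I)$ is a grading-preserving $*$-isomorphism of full matrix algebras of the same dimension $2^{\nu|I|}$. Every such isomorphism is implemented by a unitary, $\vartheta_I=\mathrm{Ad}\,W$. I then show that
\[
\omega_{\rm s}|_{\A_{\rm s}(I)}=\omega_{\rm f}|_{\A_{\rm f}(I)}\circ\vartheta_I^{-1}.
\]

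Take $B\in\A_{\rm s}(I)$. Because $\Theta_{\rm s}$ restricts to $\A_{\rm s}(I)$, the even part $E_{\rm s}(B)$ is an even element of $\A_{\rm s}(I)$. Since $\vartheta_I$ preserves the grading, $\vartheta_I^{-1}(E_{\rm s}B)$ is even in $\A_{\rm f}(I)$. On even elements of $\A_{\rm f}(I)$, $\vartheta$ agrees with $\vartheta_I$, so $\omega_{\rm s}(B)=\omega_{\rm f}(\vartheta_I^{-1}E_{\rm s}B)$. Next, $\vartheta_I^{-1}$ intertwines the gradings, so $\vartheta_I^{-1}E_{\rm s}=E_{\rm f}\vartheta_I^{-1}$ on $\A_{\rm s}(I)$. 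Finally, $\omega_{\rm f}$ is even, so $\omega_{\rm f}\circ E_{\rm f}=\omega_{\rm f}$. Combining these gives $\omega_{\rm s}(B)=\omega_{\rm f}(\vartheta_I^{-1}B)$, which is the claimed identity.

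If $\rho_{\rm f}$ is the density matrix of $\omega_{\rm f}|_{\A_{\rm f}(I)}$, the identity gives
\[
\omega_{\rm s}(B)=\Tr\bigl(\rho_{\rm f}W^*BW\bigr)=\Tr\bigl(W\rho_{\rm f}W^*B\bigr).
\]
So the spin density matrix is $W\rho_{\rm f}W^*$. This is unitarily equivalent to $\rho_{\rm f}$, and the entropies coincide.

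\emph{Main obstacle.} The delicate point is that $\vartheta$ is defined only on even elements, while $\vartheta_I$ is defined on all of $\A_{\rm f}(I)$. The argument must route through $E_{\rm s}$ and the evenness of $\omega_{\rm f}$ exactly as above, so that odd elements never require the global map. Beyond that, one only needs that the density matrices are taken with respect to the normalized-trace-free matrix trace in both algebras. This holds because $W$ is a unitary between two copies of $M_{2^{\nu|I|}}$, so the two matrix traces agree under $\mathrm{Ad}\,W$.
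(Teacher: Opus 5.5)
Your proof is correct and follows essentially the same route as the paper. Translation invariance comes from \cref{lem:jw-local}, the identity $\omega_{\rm s}\circ\vartheta_I=\omega_{\rm f}$ on all of $\A_{\rm f}(I)$ comes from grading preservation and evenness, and unitary implementation of $\vartheta_I$ between full matrix algebras finishes the argument; you simply spell out the evenness of $\omega_{\rm s}$ and the routing through $E_{\rm s}$ that the paper leaves implicit. One harmless slip: $\A_{\rm f}(I)\cong M_{2^{\nu|I|}}$ has dimension $2^{2\nu|I|}$, not $2^{\nu|I|}$, but your later reference to two copies of $M_{2^{\nu|I|}}$ is the correct statement.
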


\begin{proof}
Translation invariance follows from \cref{lem:jw-local}.  Since
$\vartheta_I$ preserves the grading and both states vanish on odd
elements, $\omega_{\rm s}\circ\vartheta_I=\omega_{\rm f}$ on all of
$\A_{\rm f}(I)$.  Both local algebras are full matrix algebras, so
$\vartheta_I$ is implemented by a unitary.
\end{proof}

Let $\M_{R/L}^{\rm f}$ and $\M_{R/L}^{\rm s}$ be the half-chain algebras of
$\omega_{\rm f}$ and $\omega_{\rm s}$ in their GNS representations.

\begin{lemma}\label{lem:jw-halfchain}
Let $\omega_{\rm f}$ be a translation-invariant pure gauge-invariant
quasifree state whose half-chain algebra is not of type I.  Then
$\omega_{\rm s}$ is pure, $\M_R^{\rm s}\cong\M_R^{\rm f}$, and Haag duality
$\M_L^{\rm s}=(\M_R^{\rm s})'$ holds.
\end{lemma}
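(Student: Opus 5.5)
The plan is to follow the route of Matsui and of \cite{KEYL_2006} for the Jordan--Wigner--Araki duality, which relates the spin state to the fermionic state restricted to even elements. The key point is that non-type-I forces the fermionic half-chain algebra to be large enough that the ``twist'' in Araki's transformation is absorbed.

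\emph{Step 1: purity.} On the even subalgebra, $\omega_{\rm s}\circ\vartheta=\omega_{\rm f}|_{\A_{\rm f}^+}$. A pure even state $\omega_{\rm f}$ restricts to $\A_{\rm f}^+$ either as a pure state or as a mixture of two inequivalent pure states. Which case occurs is decided by whether $\omega_{\rm f}\circ\Theta_-$ is equivalent to $\omega_{\rm f}$, where $\Theta_-$ is the half-sided parity acting as $\Theta$ on the left half-chain and trivially on the right. For spin states, \cite{ArakiMatsui1985,Matsui2001} show that $\omega_{\rm s}$ is pure iff $\omega_{\rm f}|_{\A_{\rm f}^+}$ extends uniquely in the appropriate sense, i.e.\ iff $\omega_{\rm f}$ and $\omega_{\rm f}\circ\Theta_-$ are disjoint.

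For quasifree $\omega_C$, the state $\omega_C\circ\Theta_-$ is quasifree with covariance $UCU$, where $U=\Id-2(\Id-\Pi)$. By the Powers--Størmer criterion (\cref{prop:qf-basic}), equivalence holds iff $C-UCU\in\cS_2$. Since $C-UCU=2((\Id-\Pi)C\Pi+\Pi C(\Id-\Pi))$, this is Hilbert--Schmidt exactly when the half-chain factor is of type I, by \cref{lem:halfchain}. So the non-type-I hypothesis gives disjointness, hence purity of $\omega_{\rm s}$. I will cite the Araki--Matsui criterion rather than reprove it.

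\emph{Step 2: $\M_R^{\rm s}\cong\M_R^{\rm f}$.} The JW strings for sites $x\ge0$ start at $0$, so $\vartheta$ maps $\A_{\rm f}(\half)$, including odd parts, onto $\A_{\rm s}(\half)$ via a right-sided transformation without infinite tails. The restrictions of both states to the right half-chain therefore correspond: $\omega_{\rm s}|_{\A_R}=\omega_{\rm f}|_{\A_{\rm f}(\half)}\circ\vartheta_R^{-1}$. The two half-chain GNS algebras are thus isomorphic. Since both global states are pure, both half-chain algebras are factors (\cref{lem:halfchain} on the fermionic side, and purity on the spin side). Each is then isomorphic to the von Neumann algebra generated by its restricted state, by the argument already used in \cref{lem:halfchain}, so $\M_R^{\rm s}\cong\M_R^{\rm f}$.

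\emph{Step 3: Haag duality, the main obstacle.} Fermionic twisted duality for pure quasifree states holds: $\M_L^{\rm f}$ equals the twisted commutant of $\M_R^{\rm f}$ (Araki's duality for CAR/quasifree states, cf.\ \cite{KEYL_2006}). Transporting it through Araki's transformation, the twist becomes conjugation by the half-sided parity. Untwisted spin Haag duality then follows exactly when this parity unitary does not lie in the relevant algebra, i.e.\ again when $\omega_{\rm f}\not\simeq\omega_{\rm f}\circ\Theta_-$. Matsui's theorem \cite{Matsui2001} states precisely that for a pure spin state obtained this way, Haag duality holds in this disjoint case, and I would invoke it, checking that the disjointness established in Step 1 is its hypothesis. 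The delicate point is matching conventions: that the GNS space of $\omega_{\rm s}$ identifies with the even-sector subspace of the fermionic one, and that the parity operator implements the grading there.
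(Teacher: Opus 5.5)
Your chain of implications is sound, but Step~1 misstates the purity criterion you cite. A pure even state always restricts to a \emph{pure} state on the even subalgebra: in its GNS representation $\pi(\A^+)''=\B(\cH^+)\oplus\B(\cH^-)$. The pure-or-mixture dichotomy concerns instead the even extension $\omega_{\rm s}$ of the pure state $\omega_{\rm f}\circ\vartheta^{-1}$ on $\A_{\rm s}^+$.

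Nor is purity of $\omega_{\rm s}$ equivalent to disjointness of $\omega_{\rm f}$ and $\omega_{\rm f}\circ\Theta_-$. For the Fock vacuum $C=0$ the two states coincide, yet $\omega_{\rm s}$ is a pure product state. Purity fails only when $\Theta_-$ is implemented by an \emph{odd} unitary, which is the Kitaev/Ising situation. The same ``exactly when'' overstatement recurs in Step~3.

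What you actually use is the implication ``disjoint $\Rightarrow$ pure'', and that is true and short. Write $\pi_{\rm f}^\pm$ for the restriction of $\pi_{\rm f}|_{\A_{\rm f}^+}$ to $\cH_{\rm f}^\pm$. In Araki's crossed-product picture $\omega_{\rm s}$ is realized on $\cH_{\rm f}^+\oplus\cH_{\rm f}^-$, with the even algebra acting by $\pi_{\rm f}^+\oplus(\pi_{\rm f}\circ\Theta_-)^-$. Both summands are irreducible. An equivalence between them would give, via even vector states, $\pi_{\rm f}\circ\Theta_-\simeq\pi_{\rm f}$. So under disjointness the commutant of the even part is $\bbC\oplus\bbC$. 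The odd part maps $\cH_{\rm f}^+$ into $\cH_{\rm f}^-$ nontrivially, which then forces $\pi_{\rm s}(\A_{\rm s})'=\bbC\Id$.

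With this repair your proof is complete, but it is organized differently from the paper's. The paper cites non-implementability of the left-half parity \cite{vanLuijk2025} and twisted duality. It takes from \cite[Props.~6.6 and~6.7]{KEYL_2006} both $\M_R^{\rm s}\cong\M_R^{\rm f}$ and Haag duality. It then gets purity for free: $\pi_{\rm s}(\A_{\rm s})'=(\M_L^{\rm s})'\cap(\M_R^{\rm s})'=\M_R^{\rm s}\cap(\M_R^{\rm s})'=\bbC\Id$.

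Your route differs in three places.
\begin{enumerate}
\item You prove non-implementability directly, from the covariance $UCU$, the identity $C-UCU=2\bigl((\Id-\Pi)C\Pi+\Pi C(\Id-\Pi)\bigr)$, Powers--St{\o}rmer and \cref{lem:halfchain}. This replaces a citation by a two-line computation.
\item You obtain purity from a separate extension criterion.
\item You prove the isomorphism elementarily, through the right-sided Jordan--Wigner map with strings starting at $0$ and the central-support argument of \cref{lem:halfchain}. This makes transparent why odd elements cause no trouble on a half-chain.
\end{enumerate}
The price is one extra external ingredient, the purity criterion. The paper's ordering makes it unnecessary, whereas your Step~2 genuinely needs it, since you get factoriality of $\M_R^{\rm s}$ from purity. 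The Haag-duality step is a citation in both versions.
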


\begin{proof}
Since $\M_R^{\rm f}$ is not of type I, the parity of the left half-chain
is not unitarily implementable in the GNS representation of
$\omega_{\rm f}$ \cite[Lemma~18 and Prop.~16]{vanLuijk2025}.  The pure
quasifree state $\omega_{\rm f}$ satisfies twisted duality
\cite{Foit1983,BaumgartelJurkeLledo2002}, \cite[Cor.~15]{vanLuijk2025}.
Under these conditions Araki's construction identifies the spin
half-chain algebras with the fermionic ones and turns twisted duality into
Haag duality \cite[Props.~6.6 and~6.7]{KEYL_2006},
\cite[Remark~10]{vanLuijk2025}.  Purity follows: $\M_R^{\rm s}$ is a factor
and $\M_L^{\rm s}\vee\M_R^{\rm s}=\pi_{\rm s}(\A_{\rm s})''$, so
$\pi_{\rm s}(\A_{\rm s})'=(\M_L^{\rm s})'\cap(\M_R^{\rm s})'
=\M_R^{\rm s}\cap(\M_R^{\rm s})'=\bbC\Id$.
\end{proof}

\subsection{Interactions and ground states}

Let $h=M_{h(z)}$ be a translation-invariant one-particle Hamiltonian with
$\|h_r\|\le C_N(1+|r|)^{-N}$ for every $N$.  Its fermionic interaction is
$\Phi_{\rm f}(\{x\})=c_x^*h_0c_x$ and
$\Phi_{\rm f}([x,x+r])=c^*_{x+r}h_rc_x+c^*_xh_r^*c_{x+r}$ for $r\ge1$; these
terms are even, supported in the interval indicated, and of norm at most
$2\nu\|h_r\|$.  The spin interaction is
\begin{equation}\label{eq:spin-interaction}
    \Phi_{\rm s}(X):=\vartheta\bigl(\Phi_{\rm f}(X)\bigr).
\end{equation}

\begin{lemma}\label{lem:jw-locality}
$\Phi_{\rm s}$ is a translation-invariant even interaction with
$\Phi_{\rm s}(X)\in\A_{\rm s}(X)$, $\|\Phi_{\rm s}(X)\|=\|\Phi_{\rm f}(X)\|$
and $\sum_{X\ni0}(1+\diam X)^N\|\Phi_{\rm s}(X)\|<\infty$ for every $N$.
The derivation $\delta_{\rm s}(B)=i\sum_X[\Phi_{\rm s}(X),B]$ converges
for local $B$, and $\delta_{\rm s}(\vartheta(A))=\vartheta(\delta_h(A))$
for every even local $A$.  If $\|h_r\|\le Ce^{-\mu|r|}$, then
$\Phi_{\rm s}$ decays exponentially at the same rate.
\end{lemma}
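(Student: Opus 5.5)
The proof is a direct verification from \cref{lem:jw-local} and the explicit form of $\Phi_{\rm f}$.

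\emph{Locality and norms.} Each $\Phi_{\rm f}(X)$ is even and supported in the interval $X=[x,x+r]$. By \cref{lem:jw-local}, $\vartheta$ maps it into $\A_{\rm s}(X)$ and commutes with translations, so $\Phi_{\rm s}$ is translation invariant. Evenness of $\Phi_{\rm s}(X)$ follows because $\vartheta_I$ preserves the grading. Since $\vartheta$ is a $*$-isomorphism between $C^*$-algebras, it is isometric, which gives $\|\Phi_{\rm s}(X)\|=\|\Phi_{\rm f}(X)\|\le2\nu\|h_r\|$ with $r=\diam X$. The intervals containing $0$ with diameter $r$ number $r+1$, so
\begin{equation*}
\sum_{X\ni0}(1+\diam X)^N\|\Phi_{\rm s}(X)\|\le2\nu\sum_{r\ge0}(1+r)^{N+1}\|h_r\|<\infty
\end{equation*}
by superpolynomial decay. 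The same computation gives exponential decay at the same rate $\mu$ when $\|h_r\|\le Ce^{-\mu|r|}$.

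\emph{Convergence of $\delta_{\rm s}$.} For local $B\in\A_{\rm s}(\Lambda)$ with $\Lambda$ finite, only the sets $X$ meeting $\Lambda$ contribute, since the others commute with $B$. We bound $\|[\Phi_{\rm s}(X),B]\|\le2\|B\|\|\Phi_{\rm s}(X)\|$ and sum over $x\in\Lambda$ of $\sum_{X\ni x}\|\Phi_{\rm s}(X)\|$. This is finite by the previous step, so the series converges absolutely in norm.

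\emph{Intertwining.} Let $A$ be even and local, supported in a finite interval $I$. For each $X$, $\Phi_{\rm f}(X)$ is even, so $[\Phi_{\rm f}(X),A]$ is even and supported in the interval hull $X\cup I$. Choosing any interval $I'\supset X\cup I$, compatibility in \cref{lem:jw-local} gives
\begin{equation*}
\vartheta([\Phi_{\rm f}(X),A])=\vartheta_{I'}([\Phi_{\rm f}(X),A])=[\vartheta(\Phi_{\rm f}(X)),\vartheta(A)]=[\Phi_{\rm s}(X),\vartheta(A)].
\end{equation*}
The commutation relation holds because $\vartheta_{I'}$ is a homomorphism and both arguments are even, so their images under $\vartheta_{I'}$ agree with those under $\vartheta$.

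\emph{Main obstacle.} The remaining step is to identify $\delta_h(A)$ with the norm-convergent sum $i\sum_X[\Phi_{\rm f}(X),A]$ for even local $A$. The paper already states, in \cref{sec:ground-states}, that $\delta_h(A)=i\sum_{x,y}[c_x^*h_{x-y}c_y,A]$ converges on local elements when $\sum_r\|h_r\|<\infty$. Grouping the terms by $r=|x-y|$ gives exactly $\sum_X[\Phi_{\rm f}(X),A]$, and absolute convergence justifies the regrouping. Since $\vartheta$ is isometric, hence continuous, applying it termwise yields $\vartheta(\delta_h(A))=i\sum_X[\Phi_{\rm s}(X),\vartheta(A)]=\delta_{\rm s}(\vartheta(A))$.
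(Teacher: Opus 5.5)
Your proof is correct and follows the paper's argument. You use the isometry and support preservation of $\vartheta$ from \cref{lem:jw-local}, the count of $r+1$ intervals of diameter $r$ containing $0$, and the homomorphism property of $\vartheta_J$ on the even part of $\A_{\rm f}(J)$ for an interval $J\supset X\cup\supp A$, then sum over $X$. You also spell out the norm convergence of $\delta_{\rm s}$ and the regrouping of $\sum_{x,y}$ into $\sum_X$, which uses $h_{-r}=h_r^*$ and which the paper leaves implicit.
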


\begin{proof}
The first statements follow from \cref{lem:jw-local}, since $\vartheta$ is
isometric and preserves supports of even interval operators, and from
$\sum_{X\ni0}(1+\diam X)^N\|\Phi_{\rm s}(X)\|\le2\nu\sum_{r\ge0}
(1+r)^{N+1}\|h_r\|$.  For even local $A$ and every $X$, both $A$ and
$\Phi_{\rm f}(X)$ lie in the even part of $\A_{\rm f}(J)$ for an interval
$J\supset X\cup\supp A$, on which $\vartheta=\vartheta_J$ is a
homomorphism, so
$[\Phi_{\rm s}(X),\vartheta(A)]=\vartheta([\Phi_{\rm f}(X),A])$; summing
over $X$ gives the intertwining relation.
\end{proof}

By \cref{lem:jw-locality} the fermionic and the spin ground-state
inequalities agree for even test operators.  Odd test operators are
controlled by translation invariance and clustering.

\begin{lemma}\label{lem:odd-tests}
Let $\Phi$ be a translation-invariant even interaction on $\A_{\rm s}$ with
$\sum_{X\ni0}\|\Phi(X)\|<\infty$, and let $\psi$ be an even,
translation-invariant factor state with $-i\psi(B^*\delta(B))\ge0$ for
every even local $B$.  Then $-i\psi(B^*\delta(B))\ge0$ for every local
$B$.
\end{lemma}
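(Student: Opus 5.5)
Write a local $B=B_++B_-$ as the sum of its even and odd parts. Because $\delta$ is built from an even interaction, it commutes with $\Theta_{\rm s}$, so $\delta(B_\pm)$ has the same parity as $B_\pm$. Since $\psi$ is even, the cross terms $\psi(B_+^*\delta(B_-))$ and $\psi(B_-^*\delta(B_+))$ vanish. Hence
\[
-i\psi(B^*\delta(B))=-i\psi(B_+^*\delta(B_+))-i\psi(B_-^*\delta(B_-)),
\]
and it suffices to prove $-i\psi(B^*\delta(B))\ge0$ for odd local $B$.

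For odd $B$, I would manufacture even test operators from $B$ and a far-away translated odd operator. Fix an odd local $V$ supported away from $B$, for instance a self-adjoint on-site odd element $V=V^*$ with $V^2=\Id$ or at least $\psi(V^*V)>0$. Let $\tau_x$ denote translation and set $V_x:=\tau_x(V)$. Then $B V_x$ is even and local, so by hypothesis
\[
-i\psi\bigl(V_x^*B^*\delta(BV_x)\bigr)\ge0 .
\]
Expand using the derivation property: $\delta(BV_x)=\delta(B)V_x+B\delta(V_x)$. This gives the term $-i\psi(V_x^*B^*\delta(B)V_x)$ plus the term $-i\psi(V_x^*B^*B\delta(V_x))$.

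In the first term, $V_x$ commutes with $B^*\delta(B)$ up to the tail of $\delta(B)$ reaching distance $x$. Indeed $B^*\delta(B)$ is even, and $\delta(B)$ is a norm limit of local elements with tails controlled by $\sum_{X\ni0}\|\Phi(X)\|$ summed over sets meeting both regions. This tail goes to zero as $|x|\to\infty$. So the first term equals $-i\psi(B^*\delta(B)V_x^*V_x)+o(1)$.

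Now I use the factor property. An even translation-invariant factor state of a spin chain is mixing, or at least asymptotically clustering along translations, in the sense $\psi(A\,\tau_x(C))\to\psi(A)\psi(C)$. This holds because factor states have trivial asymptotic behaviour at infinity (\cite{bratteli2013operator1}, Thm. 2.6.5, or Powers' characterization). I would rather average over $x$ (Cesàro means) if only that form of clustering is available. Either way, the first term tends to $-i\psi(B^*\delta(B))\,\psi(V^*V)$.

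The second term needs care, since $\delta(V_x)$ is not small. I would write it as $\psi(B^*B\,V_x^*\delta(V_x))$ plus a commutator error that vanishes as $x\to\infty$, using locality of $B^*B$ and the decay of the interaction. Clustering then gives the limit $\psi(B^*B)\,\psi(V^*\delta(V))$, with translation invariance used for $\psi(\tau_x(V^*\delta(V)))=\psi(V^*\delta(V))$.

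Here lies the main obstacle: $-i\psi(V^*\delta(V))$ is nonnegative, but not necessarily zero. So the limit inequality reads
\[
-i\psi(B^*\delta(B))\psi(V^*V)+\psi(B^*B)\bigl(-i\psi(V^*\delta(V))\bigr)\ge0,
\]
which is the wrong direction. To fix this, I would choose $V$ so that its energy $-i\psi(V^*\delta(V))$ is arbitrarily small relative to $\psi(V^*V)$. Concretely, I would apply the same inequality to $B$ itself, i.e. choose $V=B$ after translating, and argue self-consistently. Call $e(B):=-i\psi(B^*\delta(B))/\psi(B^*B)$. The displayed inequality with $V=B$ gives $2e(B)\ge0$ directly, since both terms coincide. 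Thus taking $V:=B$ (translated far away, with $\psi(B^*B)>0$; the case $\psi(B^*B)=0$ is handled by Cauchy–Schwarz) yields $e(B)\ge0$, which is what is needed.

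The remaining technical work is the uniform control of the tails of $\delta(B)$ and $\delta(\tau_x B)$ via $\sum_{X\ni0}\|\Phi(X)\|<\infty$, together with justifying the clustering for a factor state. For the latter I expect to need the Cesàro-averaged version.
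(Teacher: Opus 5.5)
Your final argument is correct and is exactly the paper's proof: take $V$ to be the far translate $B_x$ of $B$ itself, expand $\delta(BB_x)$, and use uniform clustering of the factor state together with translation invariance. This gives $2\psi(B^*B)\,\bigl(-i\psi(B^*\delta(B))\bigr)\ge0$, and the case $\psi(B^*B)=0$ is handled separately by Cauchy--Schwarz. The detour through a generic $V$ and the Ces\`aro fallback are unnecessary, since Powers' uniform clustering for factor states on the quasi-local algebra gives the needed limits directly, after approximating $\delta(B)$ by local elements.
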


\begin{proof}
Write $B=B_++B_-$.  Since $\delta$ preserves the grading and $\psi$ is
even, the cross terms vanish, so it suffices to treat odd $B$.  Let
$B_n$ be the translate of $B$ by $n$ sites.  For $n$ large, $BB_n$ is even
and local, and $B$, $B^*$ commute with $B_n$, so
\begin{equation}
0\le-i\psi\bigl((BB_n)^*\delta(BB_n)\bigr)
=-i\psi\bigl(B^*\delta(B)\,B_n^*B_n\bigr)
-i\psi\bigl(B^*B\,B_n^*\delta(B_n)\bigr)+o(1),
\end{equation}
where the error comes from replacing $\delta(B)$ by local approximants and
tends to $0$ as $n\to\infty$.  A factor state on the quasi-local algebra
$\A_{\rm s}$ has the uniform clustering property
\cite{Powers1967,bratteli2013operator1}, and $\psi$ is translation
invariant, so the right-hand side converges to
$-2i\,\psi(B^*B)\,\psi(B^*\delta(B))$.  If $\psi(B^*B)>0$ the claim
follows.  If $\psi(B^*B)=0$, then $\pi_\psi(B)\Omega_\psi=0$ and
$\psi(B^*\delta(B))=\langle\pi_\psi(B)\Omega_\psi,
\pi_\psi(\delta(B))\Omega_\psi\rangle=0$.
\end{proof}

\begin{theorem}\label{thm:jw-ground}
Let $h=M_{h(z)}$ have superpolynomially decaying hopping, $\ker h=\{0\}$,
and spectrum accumulating at $0$ from both sides in the sense of
\cref{prop:ground-states}(ii).  Let $\omega_{\rm f}=\omega_P$ with
$P=\chi_{(-\infty,0)}(h)$, and assume that $\omega_{\rm s}$ is pure, as is
the case when $\M_R^{\rm f}$ is not of type I.  Then $\omega_{\rm s}$ is
the unique ground state of the spin interaction \cref{eq:spin-interaction}.
\end{theorem}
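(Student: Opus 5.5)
The proof has two parts: first show that $\omega_{\rm s}$ is a ground state of $\delta_{\rm s}$, then show that any ground state coincides with $\omega_{\rm s}$.

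\emph{Existence.} For even local $B\in\A_{\rm s}$ write $B=\vartheta(A)$ with $A$ even and local in $\A_{\rm f}$ (\cref{lem:jw-local}). By \cref{lem:jw-locality},
\begin{equation*}
-i\,\omega_{\rm s}\bigl(B^*\delta_{\rm s}(B)\bigr)=-i\,\omega_{\rm f}\bigl(A^*\delta_h(A)\bigr)\ge0,
\end{equation*}
since $\omega_{\rm s}\circ\vartheta=\omega_{\rm f}$ on even elements and $\omega_{\rm f}$ is a ground state by \cref{prop:ground-states}(i). The state $\omega_{\rm s}$ is even and translation invariant (\cref{lem:jw-entropy}) and pure by hypothesis, hence a factor state. \Cref{lem:odd-tests} then extends the inequality to all local $B$. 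Local elements form a core for the closure of $\delta_{\rm s}$: the interaction has summable polynomial moments, so the standard results on existence of dynamics apply. Hence $\omega_{\rm s}$ is a ground state.

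\emph{Uniqueness.} Let $\psi$ be any ground state of $\delta_{\rm s}$.

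\begin{enumerate}[label=\textup{(\arabic*)}]
\item Show that $\psi$ may be taken even. The automorphism $\Theta_{\rm s}$ commutes with $\delta_{\rm s}$ because $\Phi_{\rm s}$ is even, so $\psi\circ\Theta_{\rm s}$ is again a ground state. The set of ground states is convex, so $\psi_+:=\frac12(\psi+\psi\circ\Theta_{\rm s})=\psi\circ E_{\rm s}$ is an even ground state.
\item Pull $\psi_+$ back to the fermions. The functional $\phi:=\psi_+\circ\vartheta\circ E_{\rm f}$ is an even state on $\A_{\rm f}$. For even local $A$, \cref{lem:jw-locality} gives $-i\phi(A^*\delta_h(A))\ge0$. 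By \cref{prop:ground-states}(ii), $\phi=\omega_{\rm f}$, and therefore $\psi_+=\omega_{\rm s}$.
\item Conclude. Now $\omega_{\rm s}=\frac12(\psi+\psi\circ\Theta_{\rm s})$ is a convex combination of two states. Since $\omega_{\rm s}$ is pure, $\psi=\omega_{\rm s}$.
\end{enumerate}

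The main obstacle I expect is the technical justification in the existence step. One must check that the ground-state condition tested on local elements is equivalent to the condition on the domain of the closed generator for the spin interaction; this is the analogue of the core statement used in \cref{sec:ground-states}. One must also check that \cref{lem:odd-tests} applies, i.e.\ that $\omega_{\rm s}$ is a factor state; this follows from purity. I would handle the core question by noting that $\sum_X(1+\diam X)\|\Phi_{\rm s}(X)\|<\infty$ suffices for the local algebra to be a core, via Lieb--Robinson-type existence of dynamics (as in \cite{bratteli2013operator2}). The uniqueness step, by contrast, uses only the even test operators supplied by \cref{prop:ground-states}(ii), which is exactly why that version was proved.
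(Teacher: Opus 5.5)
Your proposal is correct and follows the paper's proof essentially step for step. Existence comes from \cref{lem:jw-locality} on even tests plus \cref{lem:odd-tests}, using that the pure state $\omega_{\rm s}$ is a factor state; uniqueness comes from symmetrizing $\psi$ to an even ground state, pulling it back to $\A_{\rm f}$, applying \cref{prop:ground-states}(ii), and then using purity of $\omega_{\rm s}=\frac12(\psi+\psi\circ\Theta_{\rm s})$. Your extra remark that local elements form a core for the spin derivation addresses a point the paper leaves implicit, since it works with the local-element form of the ground-state condition, and it does not change the argument.
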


\begin{proof}
By \cref{prop:ground-states}(i), $\omega_{\rm f}$ is a ground state of
$\delta_h$.  For even local $B$, \cref{lem:jw-locality} gives
$-i\omega_{\rm s}(B^*\delta_{\rm s}(B))
=-i\omega_{\rm f}(A^*\delta_h(A))\ge0$ with $A=\vartheta^{-1}(B)$.  As
$\omega_{\rm s}$ is even, translation invariant and pure,
\cref{lem:odd-tests} shows that it is a ground state.

Let $\psi$ be any ground state of $\delta_{\rm s}$.  Since $\Phi_{\rm s}$
is even, $\psi\circ\Theta_{\rm s}$ is a ground state, and so is the even
state $\bar\psi=\frac12(\psi+\psi\circ\Theta_{\rm s})$.  The even state
$\bar\psi_{\rm f}:=\bar\psi\circ\vartheta\circ E_{\rm f}$ on $\A_{\rm f}$
satisfies $-i\bar\psi_{\rm f}(A^*\delta_h(A))\ge0$ for every even local
$A$ by \cref{lem:jw-locality}, so $\bar\psi_{\rm f}=\omega_{\rm f}$ by
\cref{prop:ground-states}(ii).  Hence $\bar\psi$ and $\omega_{\rm s}$ agree
on $\A_{\rm s}^+$, and being even, $\bar\psi=\omega_{\rm s}$.  Since
$\psi\le2\bar\psi$ and $\omega_{\rm s}$ is pure, $\psi=\omega_{\rm s}$.
\end{proof}

In the GNS representation of $\omega_{\rm s}$, the cyclic subspace of the
even algebra is the even sector of the fermionic Fock space, and by
\cref{lem:jw-locality} the spin dynamics acts there as the fermionic one.
Under the hypotheses of \cref{thm:jw-ground} the latter has spectrum
accumulating at $0$, created by particle--hole pairs of arbitrarily small
energy, so the spin ground state is gapless as well.

\subsection{Proof of \texorpdfstring{\cref{thm:main}}{the main theorem}}
\label{sec:proof-main}

Let $\nu=4$, so that $\A_{\rm s}=\bigotimes_\z M_{16}$.  Let
$\mathfrak F$ consist of the states $\omega_{\boldsymbol\kappa}$ with
$\boldsymbol\kappa\in[0,1]^2\setminus\{0\}$ and zeros as in
\cref{eq:standing}, and let $\mathfrak S$ be the set of their spin duals.
None of the states in $\mathfrak F$ has a type I half-chain algebra
(\cref{prop:modular-family-types}), so the states in $\mathfrak S$
are pure by \cref{lem:jw-halfchain}.

(i) By \cref{lem:jw-halfchain}, $\M_R(\omega_{\rm s})\cong
\M_R(\omega_{\rm f})$.  \Cref{cor:modular-family-examples}, applied with
$\boldsymbol\kappa=(1,0)$, $(\lambda,0)$ and $(e^{-1},e^{-\sqrt2})$, gives
the types $\mathrm{II}_\infty$, $\mathrm{III}_\lambda$ for $0<\lambda<1$,
and $\mathrm{III}_1$.  Half-chain algebras of spin chains are hyperfinite \cite{KEYL_2006}.

(ii) By \cref{thm:almost-local}, each $\omega_{\rm f}\in\mathfrak F$ is the negative
spectral projection state of a smooth gapless Bloch Hamiltonian satisfying
the hypotheses of \cref{thm:jw-ground}.  \Cref{lem:jw-locality,thm:jw-ground}
give the claim.

(iii) By \cref{lem:jw-entropy} the spin entropies equal the fermionic ones.
\Cref{cor:lacunary-examples,prop:polynomial,prop:slow}, applied with the
same mixing ratios as in (i), give the stated growth rates, and
\cref{cor:entropy-diverges} gives divergence for every state of
$\mathfrak S$.  \qed

\section{Discussion}\label{sec:discussion}

\Cref{thm:main} shows that the statement of Mohari \cite{Mohari2014} that
the half-chain restriction of a translation-invariant pure state is of
type I or of type III is false.  The main claim, stated as Theorem~3.3 in
\cite{Mohari2014}, is reduced there to Theorem~2.5.  The gap lies in
the passage from a trace on the algebra $\mathcal M^{1}(p)$ to a normal
trace on its weak closure $\mathcal M(p)$.  The unitary identifications (8)
and (9) of \cite{Mohari2014} are meant to justify this passage, but they
do not provide normality, and under the purity assumption one has
$\mathcal M(p)\cong\B(K)$, which carries no finite normal trace when $K$
is infinite dimensional.

Beyond the counterexample, the construction shows that the von Neumann
type and the growth of the interval entropy carry different information.
In our family the Blaschke zeros fix the length scales that enter the
entropy, while the mixing ratios fix the modular spectrum of the
half-chain and hence the type.  The two sets of parameters can be chosen
independently, so states of different types share the same entropy
growth, and type $\mathrm{III}_1$ states can have entropy diverging more
slowly than any prescribed function.  The entropy always diverges for
half-chain algebras that are not of type I (\cref{cor:entropy-diverges}),
in agreement with the fact that an area law implies the split property
\cite{matsui2011boundedness}.

The locality of the parent Hamiltonian is essential.  Our examples admit
gapless parents with superpolynomially decaying interactions, while gapped
quadratic parents must violate $\sum_r|r|\|h_r\|<\infty$
(\cref{prop:gapped-long-range}), and exponentially local quadratic
Hamiltonians with finitely many flavours obey the dichotomy of
\cref{thm:analytic-dichotomy}.  It remains possible that exponential
locality restores a much stronger dichotomy also for interacting systems,
as suggested by \cref{conj:short-range-dichotomy}.

Together with the exclusion of type $\mathrm{II}_1$
\cite{KEYL_2006,KEYL_2008}, our constructions show that translation
invariance and purity alone impose few restrictions on the half-chain
factor: type $\mathrm{II}_\infty$ and every type $\mathrm{III}_\lambda$,
$0<\lambda\le1$, occur.  It would be interesting to know which of these
possibilities survive stronger physical assumptions, such as finite-range
or exponentially decaying interactions, or frustration freeness.

\backmatter

\bmhead{Acknowledgements}
We thank Lauritz van Luijk, Henrik Wilming and Han Ma for inspiring
discussions.

\bmhead{Use of AI tools}
Generative AI tools, including ChatGPT (GPT-5.6 Sol) and Claude Opus 5.5, were used extensively throughout the development of this work to explore mathematical arguments, suggest proof strategies, and assist in drafting and revising portions of the manuscript. Some intermediate arguments and proof ideas emerged from AI-assisted discussions and were subsequently examined, refined, and further developed by the authors. The locality property of parent Hamiltonians and Conjecture \ref{conj:short-range-dichotomy} originated with the human authors. All mathematical statements and proofs presented in the manuscript have been independently verified by the authors, who take full responsibility for its contents. The figures were originally drawn by the authors and subsequently refined with the assistance of ChatGPT.

\section*{Declarations}

\begin{itemize}
\item \textbf{Funding.} Research at Perimeter Institute is supported in
part by the Government of Canada through the Department of Innovation,
Science and Economic Development and by the Province of Ontario through the
Ministry of Colleges, Universities and Research Excellence. RL is also
supported by the Simons Collaboration on Global Categorical Symmetries
through Simons Foundation grant 888996.
\item \textbf{Competing interests.} The authors have no relevant financial
or non-financial interests to disclose.
\item \textbf{Data availability.} Data sharing is not applicable to this
article, as no datasets were generated or analysed during the current
study.
\end{itemize}

\begin{appendices}

\section{Quasifree states}\label{app:quasifree}

Let $\hilb$ be a separable Hilbert space, $0\le C\le\Id$, and
$(\pi_C,\cH_C,\Omega_C)$ the GNS triple of $\omega_C$, with
$\M_C:=\pi_C(\CAR(\hilb))''$.  Standard references are
\cite{PowersStormer1970,Araki1970}; the classification of infinite tensor
products of type I factors (ITPFI factors) is due to Araki and Woods
\cite{ArakiWoods1968}.

\subsection{Basic properties}

\begin{proposition}\label{prop:qf-basic}
\begin{enumerate}[label=\textup{(\roman*)}]
    \item $\omega_C$ is a factor state, so $\M_C$ is a factor.
    \item $\omega_C$ is pure if and only if $C$ is a projection; then
    $\M_C=\B(\cH_C)$.
    \item The restriction of $\omega_C$ to $\CAR(\mathfrak k)$, for a closed
    subspace $\mathfrak k$, is $\omega_{P_{\mathfrak k}CP_{\mathfrak k}}$.
    \item $\omega_A$ and $\omega_B$ are quasi-equivalent if and only if
    $A^{1/2}-B^{1/2}$ and $(\Id-A)^{1/2}-(\Id-B)^{1/2}$ are Hilbert--Schmidt
    \cite[Thm.~5.1]{PowersStormer1970}.
    \item $\M_C$ is of type I if and only if $C-C^2$ is trace class.
    \item If $C$ has an orthonormal eigenbasis with eigenvalues $c_n$, then
    $\M_C$ is isomorphic to the ITPFI factor
    $\bigotimes_n\bigl(M_2,\diag(1-c_n,c_n)\bigr)$.
\end{enumerate}
\end{proposition}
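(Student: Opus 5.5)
The plan is to reduce everything to the finite-mode case by diagonalizing $C$, and then to quote the structure theory of quasifree states for the genuinely infinite statements.

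\emph{Item (i).} Factoriality of $\omega_C$ follows from Powers--Størmer: gauge-invariant quasifree states are factor states. A concrete route uses the purification trick. Put
\[
P=\begin{pmatrix} C & \sqrt{C(\Id-C)}\\ \sqrt{C(\Id-C)} & \Id-C\end{pmatrix}
\quad\text{on }\hilb\oplus\hilb .
\]
This $P$ is a projection, so $\omega_P$ is a pure Fock-type state on $\CAR(\hilb\oplus\hilb)$. Its restriction to $\CAR(\hilb\oplus0)$ is $\omega_C$ by (iii). Twisted duality for the pure state then identifies the commutant of $\M_C$ with the twisted algebra of the complementary subspace. Since the intersection of these two algebras is trivial, $\M_C$ is a factor.

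\emph{Item (ii).} If $C$ is a projection, $\omega_C$ is the Fock state for the one-particle space with $\Ran C$ and $\ker C$ exchanged by a particle--hole transformation. It is therefore irreducible and $\M_C=\B(\cH_C)$. Conversely, if $\omega_C$ is pure, we test it on $c^*(f)c(f)$. Purity forces $\omega_C$ to be a vector state in a Fock representation, and a standard computation gives $C=C^2$ (for example via (iv)/(v), since a non-projection has a spectral subspace where the state is a nontrivial convex combination).

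\emph{Item (iii).} This is immediate from \eqref{eq:quasifree}: for $f_j,g_i\in\mathfrak k$ we have $\langle g_i,Cf_j\rangle=\langle g_i,P_{\mathfrak k}CP_{\mathfrak k}f_j\rangle$.

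\emph{Items (vi) and (iv).} For (vi), each eigenvector $e_n$ generates a copy of $M_2=\CAR(\bbC e_n)$. The determinant formula shows that $\omega_C$ is the product state with density matrices $\diag(1-c_n,c_n)$, up to the Jordan--Wigner twist between different modes. Because $\omega_C$ is even, the even part is unaffected, and the standard Araki--Woods argument identifies $\M_C$ with the ITPFI factor. Item (iv) is cited from Powers--Størmer.

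\emph{Item (v).} This is the main obstacle, because $C$ need not have pure point spectrum. I would reduce to (vi) by perturbation. By Weyl--von Neumann there is a diagonal $C'$ with $C-C'$ Hilbert--Schmidt and as small as we like, and we may arrange $C'$ to have the same operator range behaviour near $0$ and $1$. Applying (iv) in the smoothed form $\sqrt{C}-\sqrt{C'}\in\cS_2$ requires care. Here I would instead use the powerful but citable fact (Powers--Størmer, Araki) that quasi-equivalence holds when the square roots differ in $\cS_2$. Operator-monotone functional calculus then keeps $\cS_2$ differences controlled away from the endpoints, and near the endpoints one chooses $C'$ so that $C'$ commutes with the spectral projections $\chi_{[0,\epsilon]}(C)$ and $\chi_{[1-\epsilon,1]}(C)$.

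Given quasi-equivalence, (vi) reduces the question to Araki--Woods. The ITPFI factor $\bigotimes_n(M_2,\diag(1-c_n,c_n))$ is type I iff $\sum_n\min(c_n,1-c_n)<\infty$, which is equivalent to $\sum_n c_n(1-c_n)<\infty$. That is $\Tr(C'-C'^2)<\infty$. The condition transfers between $C$ and $C'$ because $C-C^2$ trace class is stable under the chosen perturbation. The delicate point I would check most carefully is exactly this stability of the trace-class condition, together with the endpoint treatment.
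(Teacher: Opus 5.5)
For (i)--(iv) and (vi) your sketches are heuristic (in (i), for instance, the triviality of $\M_C\cap\M_C'$ is exactly what needs proof, and the converse in (ii) is unjustified as phrased). Like the paper, though, you ultimately rest these items on Powers--St{\o}rmer and Araki, so the real comparison is (v).

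The paper diagonalizes nothing. It quotes Lemma~5.3 of Powers--St{\o}rmer (type I iff $C=E+T$ with $E$ a spectral projection of $C$ and $T\in\cS_1$). It then checks that this is equivalent to $C-C^2\in\cS_1$, using $E=\chi_{[1/2,1]}(C)$ and the inequalities $E(\Id-C)\le2C(\Id-C)$ and $(\Id-E)C\le2C(\Id-C)$.

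Your route is a legitimate alternative: quasi-equivalence via (iv) to a diagonal $C'$, then (vi) and the Araki--Woods criterion $\sum_n\min\{c_n,1-c_n\}<\infty$. However, its load-bearing step fails as written.
\begin{itemize}
\item Weyl--von Neumann applied to $C$ gives only $C-C'\in\cS_2$. This does not control $\sqrt{C}-\sqrt{C'}$ at the spectral value $0$: take $C=0$, $C'=\diag(1/n)$.
\item Requiring $C'$ to commute with $\chi_{[0,\epsilon]}(C)$ merely localizes the problem.
\item Operator monotonicity is not the relevant property.
\end{itemize}
What works is the following. Apply the Weyl--von Neumann--Kuroda theorem to $\sqrt{C}$ on $\chi_{[0,1/2)}(C)\hilb$ and to $\sqrt{\Id-C}$ on $\chi_{[1/2,1]}(C)\hilb$, obtaining diagonal approximants. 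Clamp them to $[0,2^{-1/2}]$ to get $D_0$ and $D_1$, and set $C'=D_0^2$ on the first piece and $C'=\Id-D_1^2$ on the second. Lipschitz functions are Lipschitz in Hilbert--Schmidt norm, $\|f(A)-f(B)\|_2\le\mathrm{Lip}(f)\,\|A-B\|_2$. Hence the clamping costs nothing, since it fixes the original operators. The remaining square-root condition on each piece then follows from the $1$-Lipschitz map $y\mapsto\sqrt{1-y^2}$ on $[0,2^{-1/2}]$.

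The transfer step you flag as delicate is then one line. Since the square roots of $C$ and $\Id-C$ commute,
\[
\sqrt{C(\Id-C)}-\sqrt{C'(\Id-C')}=(\sqrt{C}-\sqrt{C'})\sqrt{\Id-C}+\sqrt{C'}\,\bigl(\sqrt{\Id-C}-\sqrt{\Id-C'}\bigr)\in\cS_2 .
\]
Together with $\Tr(C-C^2)=\|\sqrt{C(\Id-C)}\|_2^2$, this gives $C-C^2\in\cS_1$ if and only if $C'-C'^2\in\cS_1$.

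Moreover, only the implication ``$C-C^2\notin\cS_1\Rightarrow$ not type I'' needs a perturbation at all. If $C-C^2$ is compact, $C$ preserves its finite-dimensional eigenspaces and is a projection on its kernel. So $C$ already has an eigenbasis and (vi) applies directly.

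With these repairs your proof of (v) is complete modulo the cited inputs (iv), (vi), Kuroda and Araki--Woods. The paper's argument is much shorter, at the price of relying on Lemma~5.3 of Powers--St{\o}rmer instead.

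Finally, in (vi) no appeal to evenness is needed. After enumerating the eigenbasis by $\bbN$, the one-sided Jordan--Wigner map is a genuine $*$-isomorphism $\CAR(\hilb)\cong\bigotimes_nM_2$ that carries $\omega_C$ to the product state.
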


\begin{proof}
(i), (ii), (iii) and (vi) are in \cite{PowersStormer1970,Araki1970}; for
(vi) see also the proof of \cite[Lemma~26]{vanLuijk2025}.  For (v), by
\cite[Lemma~5.3]{PowersStormer1970} the factor is of type I if and only if
$C=E+T$ with $E$ a spectral projection of $C$ and $T$ trace class.  If
$C=E+T$, then $C-C^2=T-ET-TE-T^2$ is trace class.  Conversely, with
$E=\chi_{[1/2,1]}(C)$ we have $E(\Id-C)\le2C(\Id-C)$ and
$(\Id-E)C\le2C(\Id-C)$, so $T=C-E$ is trace class if $C-C^2$ is.
\end{proof}

\begin{lemma}\label{lem:qf-entropy}
If $\dim\hilb<\infty$, the density matrix of $\omega_C$ has entropy
$\Tr\hb(C)$.
\end{lemma}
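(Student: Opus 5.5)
The plan is to diagonalize $C$ and reduce to a product of single modes.  Since $\dim\hilb=n<\infty$, I choose an orthonormal eigenbasis $f_1,\dots,f_n$ of $C$ with eigenvalues $c_1,\dots,c_n\in[0,1]$.  The operators $a_j:=c(f_j)$ satisfy the CAR, and $\CAR(\hilb)\cong M_{2^n}$ is generated by them.

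By the Wick formula \cref{eq:quasifree}, all correlations factorize in this basis.  Concretely,
\[
\omega_C\bigl(a_{i_1}^*\cdots a_{i_p}^*a_{j_1}\cdots a_{j_q}\bigr)=\delta_{pq}\det\bigl[\delta_{j_k i_l}c_{i_l}\bigr],
\]
which vanishes unless the two index sets coincide, and then equals the product of the corresponding $c_i$.  Hence $\omega_C$ agrees with the state whose density matrix is
\[
\rho=\prod_{j=1}^n\bigl(c_ja_j^*a_j+(1-c_j)a_ja_j^*\bigr).
\]
These factors are mutually commuting projection combinations, because the $a_j^*a_j$ commute.  To verify the agreement, I check that $\Tr(\rho\,\cdot)$ reproduces the same moments, using that the monomials $a_I^*a_J$ span the algebra.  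The normalization $\Tr\rho=1$ follows by evaluating in the occupation-number basis $\prod_j(a_j^*)^{n_j}\Omega_0$, where $\Omega_0$ is the Fock vacuum.

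In that occupation basis $\rho$ is diagonal, with eigenvalue $\prod_j c_j^{n_j}(1-c_j)^{1-n_j}$ on the vector with occupation numbers $(n_j)\in\{0,1\}^n$.  It is therefore the tensor product $\bigotimes_j\diag(1-c_j,c_j)$ under the standard Fock identification $\bbC^{2^n}\cong(\bbC^2)^{\otimes n}$; this identification is unitary, and the entropy does not depend on it.  Since von Neumann entropy is additive for product density matrices, I get
\[
S(\rho)=\sum_j\hb(c_j)=\Tr\hb(C).
\]

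The only step that needs care is the moment check.  The trace of $\rho\,a_I^*a_J$ is computed in the occupation basis: $a_I^*a_J$ shifts the occupation numbers unless $I=J$, so the trace vanishes off the diagonal.  When $I=J$, the operator $a_I^*a_I$ is, up to a sign, the product of the number operators $a_i^*a_i$ for $i\in I$.  I will track the ordering sign so that the result is exactly $\prod_{i\in I}c_i$, matching the Wick determinant with its own ordering convention.
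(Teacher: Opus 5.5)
Your proposal is correct and is essentially the paper's argument: you diagonalize $C$, use Wick's rule \eqref{eq:quasifree} to show that the density matrix is diagonal in the occupation-number basis with entries $\prod_j c_j^{n_j}(1-c_j)^{1-n_j}$, and sum the binary entropies to get $\Tr\hb(C)$, writing out the moment check that the paper leaves implicit. One small correction: with the paper's ordering $c^*(f_n)\cdots c^*(f_1)c(g_1)\cdots c(g_k)$, your displayed determinant is off by the reversal sign $(-1)^{p(p-1)/2}$, and this sign disappears if you test against $(a_{i_1}\cdots a_{i_p})^*(a_{i_1}\cdots a_{i_p})=\prod_{k}a_{i_k}^*a_{i_k}$.
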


\begin{proof}
In an eigenbasis $(f_j)$ of $C$ with eigenvalues $c_j$, Wick's rule
\cref{eq:quasifree} shows that the density matrix is diagonal in the
occupation-number basis, with entries $\prod_j c_j^{n_j}(1-c_j)^{1-n_j}$,
$n_j\in\{0,1\}$.  Its entropy is $\sum_j\hb(c_j)$.
\end{proof}

\subsection{Block decompositions}

\begin{lemma}\label{lem:block-diagonal}
Let $\hilb=\hilb_1\oplus\hilb_2$ and $C=C_1\oplus C_2$, and let $\M_i$ be
the GNS factor of $\omega_{C_i}$ on $\CAR(\hilb_i)$.  Then $\omega_C$ is
the graded product of $\omega_{C_1}$ and $\omega_{C_2}$.  If the
implementer $\Gamma_1$ of the parity in the first GNS representation
belongs to $\M_1$, then $\M_C\cong\M_1\,\ovl\otimes\,\M_2$.  This is the
case if $C_1$ is a projection.
\end{lemma}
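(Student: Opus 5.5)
The plan is to identify $\omega_C$ concretely with a graded (Klein-twisted) tensor product, and then to use the hypothesis $\Gamma_1\in\M_1$ to remove the twist.

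\emph{Graded product.} Since $C=C_1\oplus C_2$, the Wick rule \cref{eq:quasifree} factorizes. For monomials built from vectors in $\hilb_1$ and in $\hilb_2$, reordered so that the $\hilb_1$ operators come first (at the cost of the usual sign), the determinant is block diagonal. It therefore splits as $\det[\cdot]_1\det[\cdot]_2$, and it vanishes unless each block is balanced. Hence $\omega_C(A_1A_2)=\omega_{C_1}(A_1)\omega_{C_2}(A_2)$ for $A_i\in\CAR(\hilb_i)$, which is the definition of the graded product. Because both states are even, this determines $\omega_C$.

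\emph{GNS representation.} Realize the GNS representation on $\cH_1\otimes\cH_2$ with cyclic vector $\Omega_1\otimes\Omega_2$, where $\Gamma_i$ is the unitary implementing parity with $\Gamma_i\Omega_i=\Omega_i$. Set
\begin{equation*}
\pi(c(f_1))=\pi_1(c(f_1))\otimes\Id,\qquad \pi(c(f_2))=\Gamma_1\otimes\pi_2(c(f_2)).
\end{equation*}
These operators satisfy the CAR, since $\Gamma_1$ anticommutes with the odd operators $\pi_1(c(f_1))$. The vector state they give reproduces the factorized correlations, because the even/odd bookkeeping yields exactly the graded product. Cyclicity holds because the products $\pi(A_1)\pi(A_2)$ applied to $\Omega_1\otimes\Omega_2$ give $\pi_1(A_1)\Omega_1\otimes(\pm)\pi_2(A_2)\Omega_2$. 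By uniqueness of the GNS representation, $\M_C=\{\pi_1(A_1)\otimes\Id,\ \Gamma_1\otimes\pi_2(A_2)\}''$.

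\emph{Removing the twist.} If $\Gamma_1\in\M_1$, then $\Gamma_1\otimes\Id\in\M_C$, because $\M_1\otimes\Id$ is generated by the first family. Multiplying by $(\Gamma_1\otimes\Id)^{-1}$ shows $\Id\otimes\pi_2(c(f_2))\in\M_C$. Conversely, $\Gamma_1\otimes\pi_2(\cdot)$ lies in $\M_1\,\ovl\otimes\,\M_2$. Hence $\M_C=\M_1\,\ovl\otimes\,\M_2$.

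\emph{Case $C_1$ a projection.} By \cref{prop:qf-basic}(ii), $\omega_{C_1}$ is pure, so $\M_1=\B(\cH_1)$, which contains every bounded operator, in particular $\Gamma_1$.

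The main obstacle is the sign bookkeeping in the first two steps. One must check that the twisted representation yields exactly the graded-product correlations, and that the well-definedness and existence of $\Gamma_1$ follow from evenness of $\omega_{C_1}$. I would handle this by verifying the identity on ordered monomials and using evenness to discard the mixed-parity terms.
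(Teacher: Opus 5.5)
Your proposal is correct and follows the paper's proof essentially step for step. Both use the Wick factorization to obtain the graded product, the twisted representation $c(f_2)\mapsto\Gamma_1\otimes\pi_{C_2}(c(f_2))$ on $\cH_1\otimes\cH_2$, multiplication by $\Gamma_1\otimes\Id$ to remove the twist, and $\M_1=\B(\cH_1)$ in the projection case. You additionally write out the checks the paper leaves implicit: the CAR relations, the vector state, cyclicity, and the reverse inclusion into $\M_1\,\ovl\otimes\,\M_2$.
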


\begin{proof}
All mixed two-point functions vanish, so by \cref{eq:quasifree} $\omega_C$
is the graded product.  On $\cH_1\otimes\cH_2$ its GNS representation is
$c(f_1)\mapsto\pi_{C_1}(c(f_1))\otimes\Id$ and
$c(f_2)\mapsto\Gamma_1\otimes\pi_{C_2}(c(f_2))$.  If $\Gamma_1\in\M_1$,
then $\Id\otimes\pi_{C_2}(c(f_2))=(\Gamma_1\otimes\Id)(\Gamma_1\otimes
\pi_{C_2}(c(f_2)))$ lies in the generated von Neumann algebra, which
therefore equals $\M_1\,\ovl\otimes\,\M_2$.  If $C_1$ is a projection,
$\M_1=\B(\cH_1)$ by \cref{prop:qf-basic}(ii).
\end{proof}

\subsection{Constant covariance sectors}

\begin{lemma}\label{lem:half-trace}
If $\hilb_0$ is infinite dimensional, then $\omega_{\frac12\Id}$ is the
unique tracial state of $\CAR(\hilb_0)$, and its GNS factor is the
hyperfinite factor $R_{\mathrm{II}_1}$.
\end{lemma}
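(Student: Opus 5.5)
The plan has two parts: first show that $\omega_{\frac12\Id}$ is tracial, then show it is the only tracial state, and finally identify its GNS factor.

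\emph{Trace property.} By \cref{eq:quasifree} with $C=\frac12\Id$, the state is gauge invariant and quasifree, with two-point function $\frac12\langle g,f\rangle$. Choose an orthonormal basis $(e_n)$ of $\hilb_0$. The $C^*$-algebra generated by $c(e_1),\dots,c(e_N)$ is isomorphic to $M_{2^N}$, for instance via the Jordan--Wigner map. In the occupation-number basis, \cref{lem:qf-entropy} (more precisely its proof) shows that the density matrix of the restriction of $\omega_{\frac12\Id}$ to this subalgebra is $\prod_j(\frac12)^{n_j}(\frac12)^{1-n_j}=2^{-N}$ on every basis vector. So the restriction is the normalized trace $2^{-N}\Tr$ of $M_{2^N}$. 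The union over $N$ of these subalgebras is norm dense in $\CAR(\hilb_0)$, since $c(f)$ depends norm-continuously on $f$. Hence $\omega_{\frac12\Id}(AB)=\omega_{\frac12\Id}(BA)$ holds on a dense set, and by continuity on all of $\CAR(\hilb_0)$.

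\emph{Uniqueness.} Let $\tau$ be any tracial state. Its restriction to each $M_{2^N}$ is a tracial state of a full matrix algebra, so it equals the normalized trace $2^{-N}\Tr$. Therefore $\tau$ agrees with $\omega_{\frac12\Id}$ on the dense union, and hence everywhere. This also shows $\CAR(\hilb_0)\cong\bigotimes_{n\ge1}M_2$ is the CAR/UHF algebra of type $2^\infty$.

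\emph{Identification of the factor.} By \cref{prop:qf-basic}(i) the GNS von Neumann algebra is a factor. The GNS vector is a faithful normal trace on it: the trace property extends by weak continuity, and it is separating because $\tau(x^*x)=0$ gives $\tau(y^*x^*xy)=\tau(xyy^*x^*)\le\|y\|^2\tau(xx^*)=\|y\|^2\tau(x^*x)=0$. So the factor is finite. It is infinite dimensional because $\hilb_0$ is, so it is of type $\mathrm{II}_1$. It is generated by the increasing union of the matrix algebras $M_{2^N}$, so it is hyperfinite. By Murray--von Neumann uniqueness of the hyperfinite $\mathrm{II}_1$ factor it is $R_{\mathrm{II}_1}$. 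Alternatively, \cref{prop:qf-basic}(vi) gives the ITPFI factor $\bigotimes_n(M_2,\frac12\Id)$, which is the standard model of $R_{\mathrm{II}_1}$ by Araki--Woods.

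The only mildly delicate step is the Jordan--Wigner identification of the finitely generated subalgebras with full matrix algebras, together with density of their union. Both are standard, and the case $\dim\hilb_0<\infty$ is excluded precisely so that the resulting factor is infinite dimensional.
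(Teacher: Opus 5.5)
Your proposal is correct and follows the paper's proof. The paper also computes the density matrix $2^{-n}\Id_{2^n}$ on the first $n$ modes via \cref{lem:qf-entropy}, uses density of the union, and identifies the GNS factor as $\bigotimes_n(M_2,\frac12\tr)\cong R_{\mathrm{II}_1}$; your explicit uniqueness argument and your route to type $\mathrm{II}_1$ (faithful normal trace, hyperfiniteness, Murray--von Neumann uniqueness) fill in details the paper leaves implicit.
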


\begin{proof}
On the matrix algebra generated by the first $n$ modes of an orthonormal
basis the density matrix is $2^{-n}\Id_{2^n}$ by \cref{lem:qf-entropy},
and the union of these algebras is dense.  The GNS factor is
$\bigotimes_n(M_2,\frac12\tr)\cong R_{\mathrm{II}_1}$.
\end{proof}

\begin{proposition}\label{prop:constant-sectors}
Let $C=P\oplus\bigoplus_{j=1}^mt_j\Id_{\hilb_j}$ with $P$ a projection,
every $\hilb_j$ infinite dimensional, and $0<t_j<1$.  Let $\cH_P$ be the
GNS space of $\omega_P$.  Then
\begin{equation}
    \M_C\cong\B(\cH_P)\,\ovl\otimes\,\bigotimes_{j=1}^m\bigotimes_{n\in\bbN}
    \bigl(M_2,\diag(1-t_j,t_j)\bigr).
\end{equation}
If all $t_j=\frac12$ and $\cH_P$ is infinite dimensional, $\M_C$ is the
hyperfinite factor of type $\mathrm{II}_\infty$.  Otherwise let $G$ be the
closed subgroup of $\bbR_+^\times$ generated by the ratios
$r_j=\min\{t_j/(1-t_j),(1-t_j)/t_j\}$ with $t_j\ne\frac12$; then $\M_C$ is
of type $\mathrm{III}_\lambda$ if $G=\lambda^\z$ with $0<\lambda<1$, and of
type $\mathrm{III}_1$ if $G=\bbR_+^\times$.
\end{proposition}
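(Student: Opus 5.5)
The plan is to split the one-particle space according to the direct sum and build the factor from independent pieces. First I apply \cref{lem:block-diagonal} with $\hilb_1=\Ran P$ and $\hilb_2=\bigoplus_j\hilb_j$. Since $P$ is a projection, the first GNS factor is $\B(\cH_P)$ by \cref{prop:qf-basic}(ii), and the lemma gives $\M_C\cong\B(\cH_P)\,\ovl\otimes\,\M_2$. Here $\M_2$ is the GNS factor of $\omega_{C_2}$ with $C_2=\bigoplus_jt_j\Id_{\hilb_j}$.

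Next, $C_2$ is diagonal in any orthonormal basis adapted to the splitting $\bigoplus_j\hilb_j$. Each $\hilb_j$ is separable and infinite dimensional, so it has a basis indexed by $\bbN$, and $C_2$ has eigenvalue $t_j$ on the vectors from $\hilb_j$. \Cref{prop:qf-basic}(vi) then identifies $\M_2$ with the ITPFI factor $\bigotimes_{j,n}(M_2,\diag(1-t_j,t_j))$. Regrouping the factors gives the displayed formula.

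It remains to read off the type, using the Araki--Woods classification. If every $t_j=\frac12$, then $\M_2$ is the infinite tensor product of tracial $M_2$'s, which is $R_{\mathrm{II}_1}$ (this also follows from \cref{lem:half-trace}). Tensoring with $\B(\cH_P)$ for infinite-dimensional $\cH_P$ gives the hyperfinite $\mathrm{II}_\infty$ factor.

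Otherwise some $t_j\ne\frac12$, and that $t_j$ occurs infinitely often with the same nontrivial eigenvalue list $\{1-t_j,t_j\}$. The first step is to get non-semifiniteness. The Araki--Woods criterion says an ITPFI factor is of type III when $\sum_n\min(\text{eigenvalue ratios})$ diverges in the appropriate sense. Here this holds because infinitely many factors have the fixed ratio $r_j\ne1$, so the product state is not quasi-equivalent to a trace on any tail.

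The second step is to compute the asymptotic ratio set $r_\infty$. For an ITPFI factor in which each eigenvalue list appears infinitely often, $r_\infty\setminus\{0\}$ is the closed subgroup generated by the ratios $r_j$ that occur infinitely often. The tracial factors with $t_j=\frac12$ contribute nothing. So $r_\infty\cap\bbR_+^\times=G$, and $S=r_\infty$ gives $\mathrm{III}_\lambda$ when $G=\lambda^\z$ and $\mathrm{III}_1$ when $G=\bbR_+^\times$. Tensoring with the type I factor $\B(\cH_P)$ does not change the type $\mathrm{III}_\lambda$.

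The step I expect to be the main obstacle is justifying the ratio-set computation when several different $t_j$ are mixed. I would handle it with the standard Araki--Woods lemma that an ITPFI factor $\bigotimes(M_{k},\phi)$ built from finitely many eigenvalue lists, each repeated infinitely often, has $r_\infty$ equal to the closure of the group generated by the ratios of eigenvalues within each list. The alternative is to compute Connes' $S$-invariant through the modular group of the product state. Its spectrum is generated by $\log r_j$, and because all factors repeat infinitely often, the Connes spectrum equals the full closed group $G$.
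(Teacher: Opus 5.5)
Your proposal follows the paper's proof. Both use \cref{lem:block-diagonal} with \cref{prop:qf-basic}(vi) for the tensor decomposition, \cref{lem:half-trace} for the $\mathrm{II}_\infty$ case, and the Araki--Woods fact that an ITPFI factor built from finitely many one-mode states, each repeated infinitely often, has asymptotic ratio set $\{0\}\cup G$; the type~I factor $\B(\cH_P)$ does not change the type.

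Two small corrections, neither of which affects the argument:
\begin{itemize}
\item The first summand in \cref{lem:block-diagonal} should be the space on which $P$ acts, not $\Ran P$. With $\Ran P$, the second block would also contain $0$ on $\ker P$, which is inconsistent with your formula for $C_2$.
\item Your separate ``non-semifiniteness'' step is superfluous, and as justified it is not conclusive. Failure of tail quasi-equivalence to a trace does not rule out type $\mathrm{II}_\infty$. Nothing depends on this step, though, because $r_\infty=\{0\}\cup G$ with $G\neq\{1\}$ already forces type III.
\end{itemize}
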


\begin{proof}
The tensor product decomposition follows from
\cref{lem:block-diagonal} and \cref{prop:qf-basic}(vi).  If all $t_j=\frac12$,
the second factor is $R_{\mathrm{II}_1}$ by \cref{lem:half-trace}, and
$\B(\cH_P)\,\ovl\otimes\,R_{\mathrm{II}_1}$ is of type $\mathrm{II}_\infty$.
Otherwise the infinite tensor product contains each one-mode state
infinitely often, and its asymptotic ratio set is $\{0\}\cup G$
\cite{ArakiWoods1968}; tensoring with a type I factor does not change the
type.
\end{proof}

\begin{corollary}\label{cor:IIoo-sector}
If $C=P\oplus\frac12\Id_{\hilb_{\rm t}}$ with $P$ a projection on an
infinite-dimensional space and $\hilb_{\rm t}$ infinite dimensional, then
$\M_C$ is the hyperfinite factor of type $\mathrm{II}_\infty$.
\end{corollary}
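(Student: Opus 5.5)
The plan is to specialise \cref{prop:constant-sectors} to $m=1$ and $t_1=\frac12$. First, the decomposition $\hilb=\Ran P\oplus\hilb_{\rm t}$ with $C=P\oplus\frac12\Id_{\hilb_{\rm t}}$ puts us in the setting of \cref{lem:block-diagonal}. The first summand $P$ is a projection, so the parity implementer of $\omega_P$ lies in $\M_P=\B(\cH_P)$ by \cref{prop:qf-basic}(ii). Hence
\begin{equation*}
\M_C\cong\B(\cH_P)\,\ovl\otimes\,\M_{\frac12\Id_{\hilb_{\rm t}}} .
\end{equation*}
Since $\hilb_{\rm t}$ is infinite dimensional, \cref{lem:half-trace} identifies the second factor with the hyperfinite $\mathrm{II}_1$ factor $R_{\mathrm{II}_1}$.

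It then remains to check that $\B(\cH_P)$ is infinite dimensional, i.e.\ that $\cH_P$ is. Here I would use that $\Ran P$ is infinite dimensional, or that its orthogonal complement inside the first summand is. The Fock-type GNS space of the pure state $\omega_P$ on $\CAR(\Ran P\oplus\ker P)$ contains the vectors $\pi_P(c(f))\Omega$ for $f\in\Ran P$, or $\pi_P(c^*(g))\Omega$ for $g\in\ker P$. Because $\|\pi_P(c(f))\Omega\|^2=\langle f,Pf\rangle$, these vectors are orthonormal for an orthonormal family, so $\dim\cH_P=\infty$ whenever the underlying space of $P$ is infinite dimensional.

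Finally, I would conclude by the standard facts that $\B(\cH)\,\ovl\otimes\,R_{\mathrm{II}_1}$ with $\dim\cH=\infty$ is a factor of type $\mathrm{II}_\infty$, and that it is hyperfinite as a tensor product of hyperfinite algebras. This is exactly the $\mathrm{II}_\infty$ case already stated in \cref{prop:constant-sectors}.

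The only point needing care is the infinite dimensionality of $\cH_P$ in the second step. The rest is bookkeeping with results already established.
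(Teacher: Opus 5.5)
Your argument is correct and follows the paper's: both reduce the claim to the $\mathrm{II}_\infty$ case of \cref{prop:constant-sectors}, where the only point to check is $\dim\cH_P=\infty$. The paper obtains this from faithfulness of the GNS representation of the simple, infinite-dimensional algebra $\CAR(\Ran P\oplus\ker P)$; you instead exhibit the orthonormal vectors $\pi_P(c(f))\Omega$, $f\in\Ran P$, and $\pi_P(c^*(g))\Omega$, $g\in\ker P$, which is an equally valid and more explicit check. One small slip: in your first step the first summand of $\hilb$ should be the whole space $\Ran P\oplus\ker P$ on which $P$ acts, not $\Ran P$ alone, as you correctly use later.
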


\begin{proof}
The GNS space of a pure quasifree state on the CAR algebra of an
infinite-dimensional space is infinite dimensional, since that algebra is
simple and infinite dimensional.  Apply \cref{prop:constant-sectors}.
\end{proof}

\subsection{Ground states of quasifree dynamics}\label{app:groundstates}

Let $h$ be bounded and self-adjoint, $\tau$ the quasifree dynamics and
$\delta_h$ its generator, as in \cref{sec:ground-states}.  For
$\phi\in L^1(\bbR)$ put $\tilde\phi(\lambda)=\int\phi(t)e^{i\lambda t}\,
\rmd t$ and $\tau_\phi(B)=\int\phi(t)\tau_t(B)\,\rmd t$.  The Arveson
spectrum $\Sp_\tau(B)$ is the set of $\lambda\in\bbR$ such that
$\tau_\phi(B)\ne0$ for some $\phi$ with $\tilde\phi$ supported in any
given neighbourhood of $\lambda$; if $\tilde\phi$ vanishes on a
neighbourhood of $\Sp_\tau(B)$ then $\tau_\phi(B)=0$, and
$\Sp_\tau(B_1B_2)\subset\ovl{\Sp_\tau(B_1)+\Sp_\tau(B_2)}$
\cite[Sec.~3.2.3]{bratteli2013operator1}.  If $f\in\chi_K(h)\hilb$ for a
compact set $K$, then $\tau_\phi(c^*(f))=c^*(\tilde\phi(h)f)$ and, by the antilinearity
of $f\mapsto c(f)$, $\tau_\phi(c(f))=c\bigl(g_\phi(h)f\bigr)$ with
$g_\phi(\lambda):=\ovl{\tilde\phi(-\lambda)}$, so
\begin{equation}\label{eq:arveson-c}
    \Sp_\tau\bigl(c^*(f)\bigr)\subset K,
    \qquad
    \Sp_\tau\bigl(c(f)\bigr)\subset -K .
\end{equation}

\begin{lemma}\label{lem:fock}
Let $P$ be a projection on $\hilb$ and $\omega$ a state on $\CAR(\hilb)$
with $\omega(c^*(f)c(f))=0$ for $f\in\Ran(\Id-P)$ and
$\omega(c(g)c^*(g))=0$ for $g\in\Ran P$.  Then $\omega=\omega_P$.
\end{lemma}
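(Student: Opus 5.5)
The plan is to show that $\omega$ agrees with $\omega_P$ on every monomial of the form appearing in \cref{eq:quasifree}. The CAR algebra is spanned by Wick-ordered monomials $c^*(f_n)\cdots c^*(f_1)c(g_1)\cdots c(g_k)$, so this determines $\omega$. First I will extract vector identities from the hypotheses. For $f\in\Ran(\Id-P)$ the Cauchy--Schwarz inequality gives $|\omega(Bc(f))|^2\le\omega(BB^*)\,\omega(c^*(f)c(f))=0$. Hence $\omega(B\,c(f))=0$ and $\omega(c^*(f)B)=0$ for every $B$. Similarly, for $g\in\Ran P$ we get $\omega(B\,c^*(g))=0$ and $\omega(c(g)B)=0$. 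In words: $c(f)$ with $f\perp\Ran P$ annihilates the state from the right, and $c^*(g)$ with $g\in\Ran P$ does as well.

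Next I decompose every test vector as $f=Pf+(\Id-P)f$ and expand each creation and annihilation operator accordingly. Using the CAR, I move every $c^*(g)$ with $g\in\Ran P$ to the far right and every $c(f)$ with $f\in\Ran(\Id-P)$ also to the far right. Symmetrically, I move $c(g)$ with $g\in\Ran P$ and $c^*(f)$ with $f\in\Ran(\Id-P)$ to the far left. Each anticommutator produces only a scalar $\langle\cdot,\cdot\rangle\Id$. Anticommutators between a $\Ran P$ vector and a $\Ran(\Id-P)$ vector vanish, so the only contractions that survive pair $c(Pg)$ with $c^*(Pf)$, giving $\langle Pg,Pf\rangle=\langle g,Pf\rangle$. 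Any monomial that still contains an operator of one of the four killing types, after all reorderings, has expectation zero.

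The surviving contributions are exactly full contractions, with the signs produced by the anticommutation. These reproduce the determinant $\det[\langle g_i,Pf_j\rangle]$, and terms with $n\ne k$ vanish. The cleanest way to organize this is an induction on $n+k$. Take the monomial, pick the rightmost annihilator $c(g_k)$, and split $g_k=Pg_k+(\Id-P)g_k$. The $(\Id-P)$ part gives zero. For the $P$ part, anticommute $c(Pg_k)$ to the far left, where it is killed; the anticommutators with the $c^*(f_j)$ give a Laplace expansion along the last row. The induction hypothesis then identifies each lower-order term with the corresponding $\omega_P$ value. Since $\omega_P$ satisfies the same hypotheses (because $\omega_P(c^*(f)c(f))=\langle f,Pf\rangle$), the two states satisfy identical recursions and coincide.

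The main obstacle is bookkeeping the signs in the Laplace expansion. This is avoided by running the same recursion for both states rather than computing determinants explicitly. The annihilators to the right of $c(g_k)$ only contribute a sign and never a contraction, because annihilators anticommute with one another. The base case $n=k=0$ is $\omega(\Id)=1$. Cases with $n\ne k$ reduce to a lone creator or annihilator at an extreme position, which is killed after splitting by $P$.
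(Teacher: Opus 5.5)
Your argument is correct and is essentially the paper's proof. The Cauchy--Schwarz step amounts to saying that $c(f)$, $f\in\Ran(\Id-P)$, and $c^*(g)$, $g\in\Ran P$, annihilate the GNS vector, and your recursion on the Wick monomials is the same CAR reordering the paper uses to show that these conditions determine $\omega$, after which $\omega_P$ satisfies them too. Two cosmetic slips: the Wick monomials only span a norm-dense subspace, which suffices by continuity of states, and the annihilators that $c(Pg_k)$ passes are those to its left, $c(g_1),\dots,c(g_{k-1})$.
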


\begin{proof}
In the GNS representation of $\omega$ the operators $c(f)$,
$f\in\Ran(\Id-P)$, and $c^*(g)$, $g\in\Ran P$, annihilate the cyclic
vector.  Together with their adjoints they generate $\CAR(\hilb)$ and obey
canonical anticommutation relations among themselves, the mixed
anticommutators vanishing.  By these relations every monomial is a linear
combination of normal-ordered monomials, with all annihilating factors on
the right, whose coefficients depend only on $P$ and on the inner
products of the arguments.  The expectation of a normal-ordered monomial
other than $\Id$ vanishes, so $\omega$ is determined by $P$.  The state
$\omega_P$ has the same properties by \cref{eq:quasifree}; compare the
characterization of Fock states in \cite{Araki1970}.
\end{proof}

\begin{lemma}\label{lem:arveson}
Let $\omega$ be a ground state of a $C^*$-dynamical system
$(\mathfrak A,\tau)$ and let $B\in\mathfrak A$ with
$\Sp_\tau(B)\subset(-\infty,-\epsilon]$ for some $\epsilon>0$.  Then
$\omega(B^*B)=0$.
\end{lemma}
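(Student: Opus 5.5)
I will use the GNS representation $(\pi_\omega,\cH_\omega,\Omega_\omega)$ of the ground state $\omega$ and the characterization of ground states through a positive generator.  For a ground state there is a self-adjoint operator $H_\omega\ge0$ with $H_\omega\Omega_\omega=0$ and $e^{itH_\omega}\pi_\omega(A)\Omega_\omega=\pi_\omega(\tau_t(A))\Omega_\omega$ \cite[Prop.~5.3.19]{bratteli2013operator2}.  Let $E$ be its spectral measure, so that $E$ is supported on $[0,\infty)$.

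\emph{Step 1: translate the spectral condition on $B$ into a statement about $E$.}  For $\phi\in L^1(\bbR)$ we have
\begin{equation*}
    \pi_\omega(\tau_\phi(B))\Omega_\omega=\int\phi(t)e^{itH_\omega}\pi_\omega(B)\Omega_\omega\,\rmd t=\tilde\phi(H_\omega)\pi_\omega(B)\Omega_\omega .
\end{equation*}
Next choose $\phi$ with $\tilde\phi=1$ on a neighbourhood of $(-\infty,-\epsilon]\cap[-\|B\|',\ldots]$.  It is cleaner to argue the other way round: take $\phi$ with $\tilde\phi$ smooth, compactly supported in $(-\epsilon/2,\infty)$, and equal to $1$ on $[0,R]$.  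Then $\tilde\phi$ vanishes on a neighbourhood of $\Sp_\tau(B)\subset(-\infty,-\epsilon]$, so $\tau_\phi(B)=0$ by the property of the Arveson spectrum recalled above.  Hence
\begin{equation*}
    \tilde\phi(H_\omega)\pi_\omega(B)\Omega_\omega=0 .
\end{equation*}

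\emph{Step 2: conclude.}  Since $\tilde\phi(H_\omega)\ge E([0,R])$ in the sense that $E([0,R])\tilde\phi(H_\omega)=E([0,R])$, we get $E([0,R])\pi_\omega(B)\Omega_\omega=0$ for every $R$.  Letting $R\to\infty$ and using $E([0,\infty))=\Id$ gives $\pi_\omega(B)\Omega_\omega=0$, and therefore $\omega(B^*B)=\|\pi_\omega(B)\Omega_\omega\|^2=0$.

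The only delicate point is the identification $\pi_\omega(\tau_\phi(B))\Omega_\omega=\tilde\phi(H_\omega)\pi_\omega(B)\Omega_\omega$.  This requires the sign convention $\tilde\phi(\lambda)=\int\phi(t)e^{i\lambda t}\,\rmd t$ to match $e^{itH_\omega}$, which it does, and it follows by integrating the covariance relation strongly in $t$.  Should the spectral-subspace convention turn out reversed, the same argument applies with $\tilde\phi$ supported in $(-\infty,\epsilon/2)$ and using $H_\omega\ge0$ on the other side; I expect this bookkeeping to be the only obstacle.
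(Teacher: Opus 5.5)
Your proof is correct and is essentially the paper's argument: you implement $\tau$ by $e^{itH}$ with $H\ge0$, $H\Omega=0$, take a cutoff $\tilde\phi_R$ that vanishes near $(-\infty,-\epsilon]$ and equals $1$ on $[0,R]$ to get $\tilde\phi_R(H)\pi(B)\Omega=0$, and let $R\to\infty$. The only difference is the last step, where you use $E([0,R])\tilde\phi_R(H)=E([0,R])$ and the paper uses $\tilde\phi_R(H)\to\Id$ strongly; you can delete the garbled false start in Step 1 and the closing hedge, since the sign conventions do match, as you checked.
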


\begin{proof}
In the GNS representation $\tau$ is implemented by $e^{itH}$ with $H\ge0$
and $H\Omega=0$ \cite[Prop.~5.3.19]{bratteli2013operator2}, and
$\pi(\tau_\phi(B))\Omega=\tilde\phi(H)\pi(B)\Omega$.  Choose $\phi_R$ with
$\tilde\phi_R\in C_c^\infty(\bbR)$, $0\le\tilde\phi_R\le1$,
$\tilde\phi_R=0$ on $(-\infty,-\epsilon/2]$ and $\tilde\phi_R=1$ on
$[0,R]$.  Then $\tau_{\phi_R}(B)=0$, so $\tilde\phi_R(H)\pi(B)\Omega=0$,
and $\tilde\phi_R(H)\to\Id$ strongly as $R\to\infty$ because $H\ge0$.
\end{proof}

\begin{proof}[Proof of \cref{prop:ground-states}]
(i) In the Fock representation of $\omega_P$ the dynamics is implemented
by the second quantization of $|h|$ after the particle--hole
transformation on $\Ran P$, which is positive and annihilates the vacuum,
so $\omega_P$ is a ground state
\cite[Prop.~5.3.19]{bratteli2013operator2}; see also
\cite[Sec.~5.1]{vanLuijk2025}.  Let $\omega$ be any ground state.  For
$0<\epsilon<M$ and $f\in\chi_{[\epsilon,M]}(h)\hilb$,
\cref{eq:arveson-c} and \cref{lem:arveson} give $\omega(c^*(f)c(f))=0$.
These $f$ are dense in $\chi_{(0,\infty)}(h)\hilb=\Ran(\Id-P)$, the
equality because $\ker h=\{0\}$.  Likewise $\omega(c(f)c^*(f))=0$ for
$f\in\Ran P$, and \cref{lem:fock} gives $\omega=\omega_P$.

(ii) Let $\omega$ be even with $-i\omega(A^*\delta_h(A))\ge0$ for even
$A\in\A_{\rm loc}$.  The even local elements are a core for the generator
of $\tau$ restricted to the even subalgebra $\CAR(\hilb)^+$, as in
\cref{sec:ground-states}, so $\omega|_{\CAR(\hilb)^+}$ is a ground state
of $(\CAR(\hilb)^+,\tau)$.  Let $0<\epsilon<M$,
$f\in\chi_{[2\epsilon,M]}(h)\hilb$ and $g\in\chi_{(0,\epsilon)}(h)\hilb$.
By \cref{eq:arveson-c}, the even elements $c^*(g)c(f)$ and $c(g)c(f)$ have
Arveson spectrum in $(-\infty,-\epsilon]$, so \cref{lem:arveson} shows that
$v=\pi_\omega(c(f))\Omega_\omega$ is annihilated by
$\pi_\omega(c^*(g))$ and by $\pi_\omega(c(g))$.  Choosing $g\ne0$, which
is possible by hypothesis, the relation $\{c(g),c^*(g)\}=\|g\|^2\Id$ gives
$v=0$, so $\omega(c^*(f)c(f))=0$.  The same argument with
$f\in\chi_{[-M,-2\epsilon]}(h)\hilb$, $g\in\chi_{(-\epsilon,0)}(h)\hilb$
and the elements $c(g)c^*(f)$, $c^*(g)c^*(f)$ gives
$\omega(c(f)c^*(f))=0$.  As in (i), \cref{lem:fock} gives
$\omega=\omega_P$.
\end{proof}

\section{An analytic toolbox}\label{app:analysis}

Standard references are \cite{Katznelson2004,Duren1970,Garnett2007,
GarciaMashreghiRoss2016}.

\subsection{Fourier transform and Hardy space}\label{app:fourier}

The Fourier transform $(\F f)(z)=\sum_xf(x)z^x$ is a unitary
$\ell^2(\z)\to L^2(\bbT)$ by Parseval's identity, with $\F\delta_x=z^x$; in
the lattice terminology $\bbT$ is the Brillouin zone and $z=e^{ik}$ with
crystal momentum $k$.  The translation $V\delta_x=\delta_{x+1}$ becomes
$M_z$.  A bounded operator commuting with $M_z\otimes\Id_{\bbC^\nu}$ is a
multiplication operator by an element of $L^\infty(\bbT,M_\nu)$, by the
argument in the proof of \cref{prop:one-species} applied to each matrix
entry.  The Fourier transform maps $\ell^2(\bbN)$ onto the Hardy space
$H^2$, the space of $f=\sum_{n\ge0}c_nz^n$ with $\sum|c_n|^2<\infty$,
equivalently of $L^2$ boundary values of such power series on $\bbD$, and
it turns the unilateral shift into $M_z|_{H^2}$.  We write $\Hinf$ for
the bounded analytic functions on $\bbD$, identified with their boundary
values, and $\cHinf$ for their complex conjugates; note
$(H^2)^\perp=\ovl{zH^2}$.

\subsection{Inner functions, Blaschke products and model spaces}
\label{app:blaschke}

A function $\theta\in\Hinf$ is inner if $|\theta|=1$ almost everywhere on
$\bbT$.  For $\{a_n\}\subset\bbD\setminus\{0\}$ with
$\sum_n(1-|a_n|)<\infty$, an integer $k\ge0$ and $\zeta\in\bbT$, the
Blaschke product
\begin{equation}
    B(z)=\zeta z^k\prod_{n}\frac{|a_n|}{a_n}\,\frac{a_n-z}{1-\ovl{a_n}z}
\end{equation}
converges uniformly on compact subsets of $\bbD$ and is an inner function
whose zeros are $0$ with multiplicity $k$ and the points $a_n$, counted
with multiplicity \cite{Duren1970,Garnett2007}.  For real zeros
$0\le a_n<1$ this is the product \cref{eq:standing} up to a sign, with
$b_0(z)=-z$.  The degree of a finite Blaschke product is its number of
zeros.  For an inner function $\theta$ the model space is
$K_\theta:=H^2\ominus\theta H^2$.

\begin{proposition}\label{prop:model-spaces}
Let $\theta,\psi,\chi$ be inner.
\begin{enumerate}[label=\textup{(\roman*)}]
    \item The closed $M_z$-invariant subspaces of $H^2$ other than $\{0\}$
    are exactly the spaces $\theta H^2$ \cite{Beurling1949}.
    \item $T_\theta$ is an isometry, $T_\theta T_\theta^*$ is the
    projection onto $\theta H^2$, and $K_\theta=\ker T_\theta^*$.
    \item $K_{\psi\theta}=K_\psi\oplus\psi K_\theta$, and $K_\theta\subset
    K_{\theta\chi}$.
    \item $g\in K_\theta$ if and only if $g\in H^2$ and
    $\bar\theta g\in\ovl{zH^2}$.
    \item For a Blaschke product $B$, $\dim K_B$ is the number of zeros of
    $B$ counted with multiplicity.
\end{enumerate}
\end{proposition}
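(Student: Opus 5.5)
The plan is to treat (i) as Beurling's theorem, cited as stated, and to prove (ii)–(v) in order, each resting on its predecessor.

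For (ii), the key fact is that $|\theta|=1$ a.e. Hence $M_\theta$ is unitary on $L^2(\bbT)$, and it maps $H^2$ into $H^2$ because $\theta\in\Hinf$. So $T_\theta=M_\theta|_{H^2}$ is an isometry. Its range $\theta H^2$ is therefore closed, and $T_\theta T_\theta^*$ is the projection onto this range. The identity $K_\theta=(\theta H^2)^\perp\cap H^2=\ker T_\theta^*$ then follows at once.

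For (iii), I would first note that $\psi\theta H^2\subset\psi H^2$. This gives the orthogonal decomposition
\[
H^2\ominus\psi\theta H^2=(H^2\ominus\psi H^2)\oplus(\psi H^2\ominus\psi\theta H^2).
\]
Since $M_\psi$ is unitary and maps $H^2$ onto $\psi H^2$, it carries $H^2\ominus\theta H^2$ onto $\psi H^2\ominus\psi\theta H^2$, so the second summand equals $\psi K_\theta$. Applying this with $(\psi,\theta)\to(\theta,\chi)$ gives the inclusion $K_\theta\subset K_{\theta\chi}$.

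For (iv), take $g\in H^2$. Then $g\perp\theta H^2$ exactly when $\bar\theta g\perp H^2$ in $L^2$, because $\langle\theta h,g\rangle=\langle h,\bar\theta g\rangle$. Moreover $(H^2)^\perp=\ovl{zH^2}$, as recorded in \cref{app:fourier}, which gives the stated criterion.

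For (v), the first step is the case of a single factor $b$ with zero $a$, whose model space I expect to be spanned by the Szegő kernel $k_a(z)=(1-\bar az)^{-1}$. Two facts are needed:
\begin{itemize}
\item By the reproducing property, $\langle k_a,bh\rangle=\overline{\overline{b(a)h(a)}}$ up to conjugation conventions, so it vanishes; hence $k_a\in K_b$.
\item Conversely, every $h\in H^2$ with $h(a)=0$ factors as $h=b\cdot(h/b)$ with $h/b\in H^2$, because $|b|=1$ on $\bbT$ and $b$ is bounded away from $0$ near $\bbT$.
\end{itemize}
Together these give $bH^2=\{h:h(a)=0\}=k_a^\perp$, so $\dim K_b=1$. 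For $z^k$ the same argument (or directly) gives $K_{z^k}$ as the polynomials of degree $<k$, of dimension $k$. Iterating (iii) then shows that a finite Blaschke product with $N$ zeros has $\dim K_B=N$.

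For an infinite Blaschke product, write $B=B_NR_N$ with $B_N$ the product of the first $N$ factors. By (iii), $K_{B_N}\subset K_B$ for every $N$, so $\dim K_B=\infty$.

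The step I expect to be the main obstacle is the division argument $h/b\in H^2$. I would settle it by noting that $h/b$ is analytic in $\bbD$, since the zero of $b$ at $a$ is cancelled by the zero of $h$. Its $H^2$ norm is computed on circles $|z|=r\to1$, where $|b|\to1$ uniformly; this bounds the norm.
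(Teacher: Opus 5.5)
Your proposal is correct and follows the paper's proof essentially step for step: (ii)--(iv) come from the isometry of $M_\theta$ on $H^2$ and the identity $\langle\theta f,g\rangle=\langle f,\bar\theta g\rangle$, and (v) comes from identifying $b_aH^2$ with the codimension-one subspace $\{f\in H^2:f(a)=0\}$, iterating (iii) for finite products, and using $K_{B_N}\subset K_B$ for infinite ones. The only difference is that you spell out the division step $h/b_a\in H^2$ and the reproducing-kernel description $K_{b_a}=\bbC k_a$, which the paper leaves implicit.
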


\begin{proof}
(ii) Multiplication by $\theta$ is isometric on $H^2$ since $|\theta|=1$,
so its range $\theta H^2$ is closed and $T_\theta T_\theta^*$ is the
projection onto it.  (iii) $H^2=K_\psi\oplus\psi H^2=K_\psi\oplus\psi(K_\theta
\oplus\theta H^2)$, and $\theta\chi H^2\subset\theta H^2$.  (iv)
$\langle\theta f,g\rangle=\langle f,\bar\theta g\rangle$ for $f\in H^2$.
(v) For $a\in\bbD$, $b_aH^2$ is the space of $f\in H^2$ with $f(a)=0$,
which has codimension one, so $\dim K_{b_a}=1$; by (iii) $\dim K_B$ equals
the degree of a finite Blaschke product.  If $B$ has infinitely many
zeros, every finite partial product $B_N$ divides $B$, so
$\dim K_B\ge\dim K_{B_N}=N$ for all $N$ by (iii) \cite{GarciaMashreghiRoss2016}.
\end{proof}

\subsection{Derivative bounds for Blaschke products}

\begin{lemma}\label{lem:blaschke-derivatives}
Let $B$ be a Blaschke product with real zeros $a_n\in[0,1)$, counted with
multiplicity, and $\Delta:=\sum_n(1-a_n)<\infty$.  Then $B$ is smooth on
$\bbT\setminus\{1\}$, and for every $r\ge1$ there is a constant $C_r$,
independent of $B$, with
\begin{equation}\label{eq:blaschke-derivatives}
    \bigl|\partial_\theta^rB(e^{i\theta})\bigr|
    \le C_r(1+\Delta)^r|\theta|^{-2r},
    \qquad0<|\theta|\le\pi .
\end{equation}
\end{lemma}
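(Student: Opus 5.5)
Write $B=\prod_n b_{a_n}$ with $b_a(z)=(a-z)/(1-az)$; a zero at $0$ only contributes the factor $-z$, which is harmless. The plan is to work with the logarithmic derivative. On $\bbT\setminus\{1\}$ each factor is unimodular and smooth, and
\begin{equation*}
\partial_\theta\log b_a(e^{i\theta})=-iz\Bigl(\frac{1}{a-z}+\frac{a}{1-az}\Bigr)
=\frac{-i(1-a^2)z}{(a-z)(1-az)}\Big|_{z=e^{i\theta}}.
\end{equation*}
For $z=e^{i\theta}$ we have $(a-z)(1-az)=-z|1-az|^2$, since $a$ is real and $|z|=1$. Hence the logarithmic derivative is $i(1-a^2)/|1-ae^{i\theta}|^2=iP_a(\theta)$, where $P_a$ is the Poisson kernel.

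So set $\Psi(\theta):=\sum_n(1-a_n^2)/|1-a_ne^{i\theta}|^2$, the derivative of the argument of $B$. Then $B(e^{i\theta})=\zeta e^{i\Phi(\theta)}$ with $\Phi'=\Psi$, and smoothness plus the bound will follow from estimates on $\Psi$ and its derivatives. The key elementary bound is
\begin{equation*}
|1-ae^{i\theta}|^2=(1-a)^2+4a\sin^2(\theta/2)\ge c\,\theta^2\quad\text{for }a\ge1/2.
\end{equation*}
For $a<1/2$ the denominator is $\ge1/4$. Thus each term of $\Psi$ is at most $C(1-a_n)\theta^{-2}$, and $|\Psi|\le C\Delta\theta^{-2}$.

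For the derivatives, write $g_a(\theta)=|1-ae^{i\theta}|^2=1+a^2-2a\cos\theta$. All derivatives of $g_a$ are bounded by $2a\le2$. I will also need $|g_a'|=2a|\sin\theta|\le C\sqrt{g_a}$; this holds because $g_a\ge4a\sin^2(\theta/2)\ge a\sin^2\theta$ when $a\ge1/2$. By Faà di Bruno,
\begin{equation*}
\partial_\theta^j(1/g_a)\le C_j\sum_k g_a^{-1-k}\,\bigl(\text{products of }k\text{ derivatives of total order }j\bigr).
\end{equation*}
The worst term has each derivative paired with $g_a^{-1/2}$ or $g_a^{-1}$. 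In every term the total order satisfies $j\ge k$, and it gives at most $g_a^{-1-j/2}\cdot$const, or better. Since $g_a\ge c\theta^2$, this yields $|\partial^j(1/g_a)|\le C_j|\theta|^{-2-j}$. Therefore
\begin{equation*}
|\Psi^{(j)}|\le C_j\Delta|\theta|^{-2-j}\le C_j\Delta|\theta|^{-2-2j}
\end{equation*}
(using $|\theta|\le\pi$, constants adjusted). The terms with $a_n<1/2$ are handled trivially, with bounds $C_j(1-a_n)$, since there are only finitely many and each has $1-a_n>1/2$, which again gives $\le 2\Delta$. The series converges uniformly with all derivatives on compacts away from $\theta=0$, which gives smoothness of $B$ on $\bbT\setminus\{1\}$. (One should check that the boundary values of the product agree with $\zeta e^{i\Phi}$; this follows from locally uniform convergence of the partial products on $\bbT\setminus\{1\}$, since the zeros accumulate only at $1$.)

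Finally, $\partial_\theta^rB=B\cdot Y_r(\Psi,\Psi',\dots,\Psi^{(r-1)})$, where $Y_r$ is the complete Bell polynomial in $i\Psi^{(k)}$, and $|B|=1$. A monomial $\prod_k(\Psi^{(k-1)})^{m_k}$ with $\sum k m_k=r$ is bounded by
\begin{equation*}
\prod_k(C\Delta|\theta|^{-1-k})^{m_k}=C\Delta^{M}|\theta|^{-M-r},\qquad M=\sum m_k\le r.
\end{equation*}
Since $M\le r$, $|\theta|\le\pi$ and $\Delta^M\le(1+\Delta)^r$, each monomial is at most $C(1+\Delta)^r|\theta|^{-2r}$. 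This gives the claimed bound with $C_r$ independent of $B$.

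The main obstacle is the uniform derivative estimate on the Poisson kernels, specifically obtaining a bound proportional to $1-a_n$ uniformly in $a_n$ near $1$. I would handle this by factoring out $(1-a^2)$ and using only $g_a\gtrsim\theta^2$ together with $|g_a'|\lesssim\sqrt{g_a}$.
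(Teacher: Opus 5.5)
Your proposal is correct and follows the paper's proof. Both identify the logarithmic derivative with the Poisson kernel $P_a$, prove $|\partial_\theta^jP_a|\le C_j(1-a)|\theta|^{-2-j}$, sum to get $|w^{(j)}|\le C_j\Delta|\theta|^{-1-j}$, and finish with Faà di Bruno using $\sum m_j\le r$. The only differences are minor. The paper gets the Poisson-kernel bounds by factoring $|e^{i\theta}-a|^{-2}=(e^{i\theta}-a)^{-1}(e^{-i\theta}-a)^{-1}$ together with a single bound $|e^{i\theta}-a|\ge|\theta|/(2\pi)$ valid for all $a$, where you use Faà di Bruno on $1/g_a$, the estimate $|g_a'|\lesssim\sqrt{g_a}$, and a split at $a=1/2$. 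Also, your intermediate expression $-iz\bigl(\frac{1}{a-z}+\frac{a}{1-az}\bigr)$ has a sign slip: the second term should carry a minus sign. Your final formula is nevertheless correct.
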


\begin{proof}
For $|z|=1$ a direct computation gives
$\partial_\theta\log b_a(e^{i\theta})=iP_a(\theta)$ with
$P_a(\theta)=(1-a^2)/|e^{i\theta}-a|^2$.  Since
$|e^{i\theta}-a|^2=(1-a)^2+4a\sin^2(\theta/2)$, we have
$|e^{i\theta}-a|\ge|\theta|/(2\pi)$ for all $a\in[0,1)$.  Writing
$|e^{i\theta}-a|^{-2}=(e^{i\theta}-a)^{-1}(e^{-i\theta}-a)^{-1}$ and
differentiating gives $|\partial_\theta^jP_a(\theta)|\le C_j(1-a)
|\theta|^{-2-j}$ for $j\ge0$.  The zeros accumulate only at $1$, so on
$\{|\theta|\ge t\}$ the product converges with all derivatives, and
$B(e^{i\theta})=\pm e^{iw(\theta)}$ with $w'=\sum_nP_{a_n}$, hence
$|w^{(j)}(\theta)|\le C_j\Delta|\theta|^{-1-j}$ for $j\ge1$.  By the Faà di
Bruno formula \cite[p.~139]{Comtet1974}, $\partial_\theta^rB$ is $B$ times
a sum of products $\prod_j(iw^{(j)})^{m_j}$ with $\sum_jjm_j=r$, each
bounded by $C\Delta^{\sum m_j}|\theta|^{-r-\sum m_j}\le C(1+\Delta)^r
|\theta|^{-2r}$ for $|\theta|\le\pi$.
\end{proof}

\subsection{Fourier coefficients of infinite Blaschke products}

\begin{lemma}\label{lem:blaschke_long_range}
Let $u=\sum_{n\ge0}u_nz^n$ be a Blaschke product with infinitely many
zeros $a_n\to1$.  Then $u$ has no continuous extension to
$\bbD\cup\{1\}$, and $\sum_n|u_n|=\infty$.
\end{lemma}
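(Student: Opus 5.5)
The argument has two parts. First we show that $u$ has no continuous extension to $\bbD\cup\{1\}$. Then we deduce $\sum_n|u_n|=\infty$ by contradiction: absolute summability of the Taylor coefficients would produce exactly such an extension.

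\emph{No continuous extension.} Along the zeros we have $u(a_n)=0$ and $a_n\to1$. If $u$ extended continuously to $\bbD\cup\{1\}$, the extended value would therefore be $u(1)=0$. The radial limit $\lim_{r\to1^-}u(r)$ would also exist and equal $0$. Continuity at $1$ then controls the boundary values as well. Take the inner-function boundary values $u(e^{i\theta})$, which are nontangential limits, unimodular almost everywhere. For small $|\theta|$ the approximating interior points $re^{i\theta}$ lie in any prescribed neighbourhood of $1$ once $r$ is close to $1$. Continuity at $1$ gives $|u(re^{i\theta})|<1/2$ in a neighbourhood $U\cap\bbD$ of $1$. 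Letting $r\to1$ yields $|u(e^{i\theta})|\le1/2$ for almost every $\theta$ in a small arc around $0$. This arc has positive measure, which contradicts $|u|=1$ almost everywhere on $\bbT$ (\cref{prop:model-spaces} and the definition of inner functions). The only delicate point is making the neighbourhood argument precise. Concretely, if $|u(z)-0|<1/2$ for all $z\in\bbD$ with $|z-1|<\eta$, then for $|\theta|<\eta/2$ and $r$ close to $1$ we have $|re^{i\theta}-1|<\eta$. Hence the a.e. radial limit at $e^{i\theta}$ has modulus at most $1/2$.

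\emph{Divergence of $\sum|u_n|$.} Suppose $\sum_n|u_n|<\infty$. The series $\sum_nu_nz^n$ then converges absolutely and uniformly on the closed disk $\ovl\bbD$. By the Weierstrass M-test its sum is continuous on $\ovl\bbD$, and it agrees with $u$ on $\bbD$. In particular it provides a continuous extension of $u$ to $\bbD\cup\{1\}$, contradicting the first part. Hence $\sum_n|u_n|=\infty$.

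The only real obstacle is the first step: passing from continuity at the single boundary point $1$ to information about the a.e. unimodular boundary values on a whole arc. I would handle it by the radial-limit argument above. We do not need zeros accumulating anywhere except at $1$; we only need that $u(a_n)=0$ forces the extended value $u(1)=0$, while unimodularity near $1$ forbids it.
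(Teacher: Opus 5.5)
Your proposal is correct and follows essentially the same route as the paper: a continuous extension would force $u(1)=\lim u(a_n)=0$, which contradicts the a.e.\ unimodular radial limits at points $e^{i\theta}\to1$. Absolute summability of the coefficients would give uniform convergence on $\ovl\bbD$ and hence such an extension. Your neighbourhood estimate $|re^{i\theta}-1|\le(1-r)+|\theta|<\eta$ is just the contrapositive form of the paper's choice of points $r_j\zeta_j\to1$ with $|u(r_j\zeta_j)|>\frac12$.
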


\begin{proof}
A continuous extension would satisfy $u(1)=\lim u(a_n)=0$.  But the radial
limits of $u$ have modulus one almost everywhere, so there are points
$\zeta_j\to1$ on $\bbT$ and radii $r_j\to1$ with $|u(r_j\zeta_j)|>\frac12$,
a contradiction.  If $\sum_n|u_n|<\infty$, the power series would converge
uniformly on the closed disk and give such an extension.
\end{proof}

\end{appendices}

\bibliography{references}

\end{document}